\newtheorem{theorem}{Theorem}[section]
\newtheorem{lemma}[theorem]{Lemma}
\newcommand{\R}{\mathbb{R}}
\newcommand{\C}{\mathbb{C}}
\newcommand{\<}{\langle}
\renewcommand{\>}{\rangle}
\newcommand{\goto}{\rightarrow}
\renewcommand{\P}{\operatorname{\mathbb{P}}}
\newcommand{\E}{\operatorname{\mathbb{E}}}
\newcommand{\vct}[1]{\bm{#1}}
\newcommand{\mtx}[1]{\bm{#1}}
\newcommand{\bx}{\bm{x}}
\newcommand{\bX}{\bm{X}}
\newcommand{\bH}{\bm{H}}
\newcommand{\bI}{\bm{I}}
\newcommand{\bU}{\bm{U}}
\newcommand{\bY}{\bm{Y}}
\newcommand{\bZ}{\bm{Z}}
\newcommand{\bb}{\bm{b}}
\newcommand{\by}{\bm{y}}
\newcommand{\bu}{\bm{u}}
\newcommand{\bv}{\bm{v}}
\newcommand{\be}{\bm{e}}
\newcommand{\bz}{\bm{z}}
\newcommand{\rank}{\operatorname{rank}}
\newcommand{\tr}{\operatorname{Tr}}
\numberwithin{equation}{section}
\def \endprf{\hfill {\vrule height6pt width6pt depth0pt}\medskip}
\newenvironment{proof}{\noindent {\bf Proof} }{\endprf\par}
\newcommand{\cA}{\mathcal{A}}
\author{Emmanuel J. Cand\`{e}s\thanks{Departments of Mathematics and
    of Statistics, Stanford University, Stanford CA 94305}, Thomas
  Strohmer\thanks{Department of Mathematics, University of California
    at Davis, Davis CA}\,\, and Vladislav Voroninski\thanks{Department
    of Mathematics, University of California at Berkeley, Berkeley
    CA}}
\title{PhaseLift: Exact and Stable Signal Recovery from Magnitude
  Measurements via Convex Programming}
\date{September 2011}
\begin{document}

\maketitle

\begin{abstract}
  Suppose we wish to recover a signal $\bx \in \C^n$ from $m$
  intensity measurements of the form $|\langle \bx,\bz_i \rangle|^2$,
  $i = 1, 2, \ldots, m$; that is, from data in which phase information
  is missing. We prove that if the vectors $\bz_i$ are sampled
  independently and uniformly at random on the unit sphere, then the
  signal $\bx$ can be recovered exactly (up to a global phase factor)
  by solving a convenient semidefinite program---a trace-norm
  minimization problem; this holds with large probability provided
  that $m$ is on the order of $n \log n$, and without any assumption
  about the signal whatsoever. This novel result demonstrates that in
  some instances, the combinatorial phase retrieval problem can be
  solved by convex programming techniques. Finally, we also prove 
  that our methodology is robust vis a vis additive noise.
\end{abstract}

\section{Introduction}
\label{sec:intro}

In many applications, one would like to acquire information about an
object but it is impossible or very difficult to measure and record
the phase of the signal. The problem is then to reconstruct the object 
from intensity measurements only. A problem of this kind that has
attracted a considerable amount of attention over the last hundred
years or so, is of course that of recovering a signal or image from the
intensity measurements of its Fourier transform~\cite{Hur89,KST95} as
in X-ray crystallography. As is well-known, such phase retrieval
problems are notoriously difficult to solve numerically.

Formally, suppose $\bx \in \C^n$ is a discrete signal and that we are
given information about the squared modulus of the inner product
between the signal and some vectors $\bz_i$, namely, 
\begin{equation}
\label{eq:data}
b_i = |\<\bx,\bz_i\>|^2, \quad i = 1, \ldots, m.
\end{equation}
In truth, we would like to know $\<\bx,\bz_i\>$ and record both phase and
magnitude information but can only record the magnitude; in other
words, phase information is lost.  In the classical example discussed
above, the $\bz_i$'s are complex exponentials at frequency $\omega_i$ so
that one collects the squared modulus of the Fourier transform of $\bx$.
Of course, many other choices for the measurement vectors $\bz_i$ are
frequently discussed in the literature, see \cite{Fin04,BBC09} for
instance.

% \vv{[VV:I am actually not sure about the value of the below
%   paragraph. For ex, the combinatorial optimization problem below is
%   indeed difficult, but at the same time a fairly naive formulation of
%   our problem.]}  \ejc{[EJC: Why is this naive? VV please respond.]}

We wish to recover $\bx$ from the data vector $\bb$, and suppose first
that $\bx$ is known to be real valued a priori. Then assuming that
$\bx$ is uniquely determined by $\bb$ up to a global sign, the
recovery may be cast as a combinatorial optimization problem: find a
set of signs $\sigma_i$ such that the solution to the linear equations
$ \left<\bx,\bz_i\right> = \sigma_i \sqrt{b_i}$, call it $\hat{\bx}$,
obeys $|\left<\hat{\bx},\bz_i\right>|^2 = b_i$.  Clearly, there are
$2^m$ choices for $\sigma_i$ and only two choices of these signs yield
$\bx$ up to global phase.  The complex case is harder yet, since
resolving the phase ambiguities now consists of finding a collection
$\sigma_i$ of complex numbers, each being on the unit
circle. Formalizing matters, it has been shown that at least one
version of the phase retrieval problem is NP-hard~\cite{SC91}.  Thus,
one of the major challenges in the field is to find conditions on $m$
and $\bz_i$ which guarantee efficient numerical recovery.

% In the classical phase retrieval setting the $z_j$ are of the form
% $z_j[t] = e^{2\pi i t \omega_j/n}, \omega_j \in \Omega$. Here
% $\Omega$ is a grid of sampled frequencies, an important special case
% being $\Omega =\{0,\dots,n-1\}$. We note that other, more general,
% choices for $z_i$ have been considered in the literature, see
% e.g.~\cite{Fin04,BBC09}.

A frame-theoretic approach to signal recovery from magnitude
measurements has been proposed in \cite{BCE06,BBC07,BBC09}, where the
authors derive various necessary and sufficient conditions for the
uniqueness of the solution, as well as various polynomial-time
numerical algorithms for very specific choices of $\bz_i$.  While
theoretically quite appealing, the drawbacks are that the methods are
(1) either algebraic in nature, thus severely limiting their stability
in the presence of noise or slightly inexact data, or (2) the number
$m$ of measurements is on the order of $n^2$, which is much too large
compared to the number of unknowns.

This paper follows a very different route and establishes that if the
vectors $\bz_i$ are independently and uniformly sampled on the unit
sphere, then our signal can be recovered exactly from the magnitude
measurements \eqref{eq:data} by solving a simple convex program we
introduce below; this holds with high probability with the proviso
that the number of measurements is on the order of $n \log n$. Since
there are $n$ complex unknowns, we see that the number of samples is
nearly minimal.  To the best of our knowledge, this is the first
result establishing that under appropriate conditions, the
computationally challenging nonconvex problem of reconstructing a
signal from magnitude measurements is formally equivalent to a convex
program in the sense that they are guaranteed to have the same unique
solution.  

Finally, our methodology is robust with respect to noise in the
measurements. To be sure, when the data are corrupted by a small
amount of noise, we also prove that the recovery error is small.

\subsection{Methodology} 
\label{sec:method}

We introduce some notation that shall be used throughout to explain
our methodology. Letting $\cA$ be the linear transformation
\begin{equation}
\label{linmap}
\begin{array}{lll}
  \mathcal{H}^{n \times n} & \goto & \R^m\\
  \bX & \mapsto & \{\bz_i^* \bX \bz_i\}_{1 \le i \le m}
\end{array}
\end{equation}
which maps Hermitian matrices into real-valued vectors, one can
express the data collection $b_i = |\<\bx, \bz_i\>|^2$ as
\begin{equation}
\label{matrixmeasurements}
\bb = \mathcal{A}(\bx \bx^*). 
\end{equation}
For reference, the adjoint operator $\cA^*$ maps real-valued inputs
into Hermitian matrices, and is given by
\[
\begin{array}{lll}
  \R^{m} & \goto & \mathcal{H}^{n \times n}\\
  \by & \mapsto & \sum_i y_i \, \bz_i \bz_i^*. 
\end{array}
\]

As observed in \cite{CMP10,CESV} (see also~\cite{LV11}),
the phase retrieval problem can be cast as the matrix recovery problem
\begin{equation}
\label{eq:rankmin}
  \begin{array}{ll}
    \text{minimize}   & \quad \rank(\bX)\\
    \text{subject to} & \quad  \cA(\bX) = \bb\\
& \quad \bX \succeq 0.
\end{array}
\end{equation}
Indeed, we know that a rank-one solution exists so the optimal $\bX$ has
rank at most one. We then factorize the solution as $\bx \bx^*$ in order
to obtain solutions to the phase-retrieval problem. This gives $\bx$ up
to multiplication by a unit-normed scalar. This is all we can hope for
since if $\bx$ is a solution to the phase retrieval problem, then $c \bx$
for any scalar $c \in \C$ obeying $|c| = 1$ is also
solution.\footnote{When the solution is unique up to multiplication by
  such a scalar, we shall say that unicity holds up to global phase.}

Rank minimization is in general NP hard, and we propose, instead,
solving a trace-norm relaxation. Although this is a fairly standard
relaxation in control \cite{Beck98,Mesbahi97}, the idea of casting the
phase retrieval problem as a trace-minimization problem over an
affine slice of the positive semidefinite cone is very recent
\cite{CMP10,CESV}. Formally, we suggest solving
\begin{equation}
\label{eq:tracemin}
 \begin{array}{ll}
    \text{minimize}   & \quad \tr(\bX)\\ 
    \text{subject to} & \quad  \cA(\bX) = \bb\\
& \quad \bX \succeq 0. 
\end{array}
\end{equation}
If the solution has rank one, we factorize it as above to recover our
signal. This method which lifts up the problem of vector recovery from
quadratic constraints into that of recovering a rank-one matrix from
affine constraints via semidefinite programming is known under the
name of {\em PhaseLift} \cite{CESV}.

% There exists a variety of efficient numerical algorithms to solve
% this convex problem, see e.g.~\cite{NesterovBook}. 

The program \eqref{eq:tracemin} is a semidefinite program (SDP) in
standard form, and there is a rapidly growing list of algorithms for
solving problems of this kind as efficiently as possible.  The crucial
question is whether and under which conditions the combinatorially
hard problem \eqref{eq:rankmin} and the convex problem
\eqref{eq:tracemin} are formally equivalent.

\subsection{Main result}
\label{sec:mainresult}

In this paper, we consider the simplest and perhaps most natural model
of measurement vectors. In this statistical model, we simply assume
that the vectors $\bz_i$ are independently and uniformly distributed on
the unit sphere of $\C^n$ or $\R^n$. To be concrete, we distinguish
two models. 
\begin{itemize}
\item {\em The real-valued model.} Here, the unknown signal $\bx$ is
  real valued and the $\bz_i$'s are independently sampled on the unit
  sphere of $\R^n$. % copies of a white noise vector $\bz \sim
 % \mathcal{N}(0, I)$.
\item {\em The complex-valued model.} The signal $\bx$ is now complex
  valued and the $\bz_i$'s are independently sampled on the unit sphere
  of $\C^n$.
% {\em Complex-valued model.} Here, the unknown signal $x$ is
%   complex-valued and the $z_i$'s are independent copies of a complex
%   white noise vector $z \sim \mathcal{CN}(0, I,0)$. This means that
%   the real and imaginary parts of $z$ are independent white noise
%   sequences distributed as $\mathcal{N}(0, \frac12 I)$.
\end{itemize}

Our main result is that the convex program recovers $\bx$ exactly (up to
global phase) provided the number $m$ of magnitude measurements is on
the order of $n \log n$. 

\begin{theorem}
\label{teo:main}
Consider an arbitrary signal $\bx$ in $\R^n$ or $\C^n$ and suppose
that the number of measurements obeys $m \ge c_0 \, n \log n$, where
$c_0$ is a sufficiently large constant. Then in both the real and
complex cases, the solution to the trace-minimization program is exact
with high probability in the sense that \eqref{eq:tracemin} has a
unique solution obeying
\begin{equation}
  \label{eq:main}
  \hat \bX = \bx \bx^*.  
\end{equation}
This holds with probability at least $1 - 3e^{-\gamma \frac{m}{n}}$,
where $\gamma$ is a positive absolute constant. 
\end{theorem}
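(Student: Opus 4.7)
My approach is the standard dual-certificate argument for trace-norm relaxations of rank-one matrix recovery. Set $\bU := \bx\bx^*/\|\bx\|^2$ (the projector on $\bx$), $\bU^\perp := \bI - \bU$, and let $T := \{\bx\bm{h}^* + \bm{h}\bx^* : \bm{h} \in \C^n\}$ denote the tangent space at $\bx\bx^*$ to the Hermitian rank-one manifold, with orthogonal projections $P_T$ and $P_{T^\perp}$. A standard KKT computation shows that $\bx\bx^*$ is the unique minimizer of \eqref{eq:tracemin} provided \emph{(a)} $\cA$ is injective on $T$, and \emph{(b)} there exists a dual certificate $\bY \in \range(\cA^*)$ with $P_T(\bY) = \bU$ and $\|P_{T^\perp}(\bY)\|_{\text{op}} < 1$. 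Indeed, for any perturbation $\bH$ with $\cA(\bH) = 0$ and $\bx\bx^* + \bH \succeq 0$, the PSD constraint forces $\bU^\perp \bH \bU^\perp \succeq 0$; using the identity $P_T(\bI) = \bU$ one computes $\tr(\bH) = \langle \bI - \bY,\bH\rangle = \langle \bU^\perp - P_{T^\perp}(\bY),\, \bU^\perp \bH \bU^\perp\rangle \geq 0$, strictly positive unless $\bU^\perp \bH \bU^\perp = 0$, and then (a) forces $\bH = 0$.

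Next I would compute $\E[\cA^*\cA]$. Rotation invariance of the sphere measure yields $\E[\bz_i\bz_i^* \bX \bz_i\bz_i^*] = \alpha\,\tr(\bX)\,\bI + \beta\,\bX$ for explicit constants $\alpha,\beta$ depending on the field and on $n$, so $m^{-1}\E[\cA^*\cA]$ restricted to $T$ is a well-conditioned isomorphism. A matrix Bernstein inequality applied on the $O(n)$-dimensional subspace $T$ then yields
\begin{equation*}
\bigl\| P_T \bigl(m^{-1}\cA^*\cA - m^{-1}\E[\cA^*\cA]\bigr) P_T \bigr\|_{\text{op}} \leq \tfrac{1}{2}
\end{equation*}
with probability at least $1 - e^{-c m/n}$, provided $m \gtrsim n$. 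This yields (a).

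To construct the certificate in (b) I use a golfing scheme in the spirit of Gross. Partition the $m$ samples into $L = \lceil c_1 \log n \rceil$ disjoint batches with sampling operators $\cA_1,\ldots,\cA_L$, initialize $\bm{W}_0 := \bU$, and iterate
\begin{equation*}
\bY_\ell := \bY_{\ell-1} + \lambda \, \cA_\ell^* \cA_\ell(\bm{W}_{\ell-1}), \qquad \bm{W}_\ell := \bU - P_T(\bY_\ell),
\end{equation*}
with $\lambda$ chosen so that $\lambda\,\E[P_T\cA_\ell^*\cA_\ell P_T]$ equals the identity on $T$. Applying the restricted-isometry bound from the previous paragraph per batch (each of size $m/L \gtrsim n$) gives a contraction $\|\bm{W}_\ell\|_F \leq \tfrac{1}{2}\|\bm{W}_{\ell-1}\|_F$, so that $\|\bm{W}_L\|_F \leq 2^{-L}$ is polynomially small in $n$ and $P_T(\bY_L) = \bU$ up to negligible error (absorbed by a final small correction). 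On the $T^\perp$ side, a second matrix-concentration estimate bounds $\|P_{T^\perp}(\lambda\,\cA_\ell^* \cA_\ell(\bm{W}_{\ell-1}))\|_{\text{op}}$ as a geometrically decreasing sequence in $\ell$ whose sum is strictly less than $1$. A union bound over the $L$ batches yields the probability $\geq 1 - 3e^{-\gamma m/n}$ claimed.

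The principal technical obstacle lies in the operator-norm control on the $T^\perp$ side. The summands $\bz_i\bz_i^* \bm{W} \bz_i\bz_i^*$ have subexponential (not bounded) spectral norm because the inner products $\langle \bz_i,\bm{h}\rangle$ have Gaussian-type tails, and matrix Bernstein in its bounded form does not apply as stated. The standard remedy is to condition on the high-probability event $\max_i |\langle \bz_i,\bx\rangle| \lesssim \sqrt{\log n/n}$ (and related bounds on $\|\bz_i\|_\infty$), estimate the bias from this truncation separately, and apply the bounded version of matrix Bernstein to the conditioned variables. It is this truncation at scale $\sqrt{\log n/n}$, together with the dimensional factor in Bernstein, that forces the per-batch size $m/L \gtrsim n$ and hence the final sample complexity $m \gtrsim n\log n$.
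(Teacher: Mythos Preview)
Your plan takes a genuinely different route from the paper's. You propose the classical exact-certificate framework (injectivity of $\cA$ on $T$ together with $P_T(\bY)=\bU$ and $\|P_{T^\perp}(\bY)\|<1$) and a Gross-style golfing scheme to build $\bY$. The paper instead develops an \emph{inexact}-certificate framework: it couples two $\ell_1$-isometry bounds on $\cA$ --- namely $m^{-1}\|\cA(\bX)\|_1\le (1+\delta)\|\bX\|_1$ for positive semidefinite $\bX$ and $m^{-1}\|\cA(\bX)\|_1\ge 0.94(1-\delta)\|\bX\|$ for $\bX\in T$ --- with a very loose certificate requirement $\|\bY_T-\bU\|_2\le 1/3$, $\|\bY_{T^\perp}\|\le 1/2$. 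Because the tolerance on $T$ is a fixed constant rather than $n^{-c}$, a \emph{single} application of $m^{-1}\cA^*\cA\,\cS^{-1}$ to $\bU$ (with a truncation to tame the Gaussian tails) already meets it; the paper remarks explicitly that this ``eliminates the need for the golfing scheme'' and that its certificate ``can be seen as the first iteration'' of golfing. The $\ell_1$ route is also a deliberate departure from your operator-norm concentration on $P_T\cA^*\cA P_T$: the authors argue that an $\ell_2$-type RIP is doubtful here because $\|\cA(\bX)\|_2^2$ involves fourth Gaussian moments.

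There is a quantitative gap in your sketch as written. You assert that per-batch size $m/L\gtrsim n$ suffices, hence $m\gtrsim n\log n$ overall. But the bottleneck is the $T^\perp$ operator-norm control: bounding $\|P_{T^\perp}(\lambda\,\cA_\ell^*\cA_\ell(\bm{W}_{\ell-1}))\|$ is an operator-norm estimate on a random $(n{-}1)\times(n{-}1)$ matrix, and whether one argues by matrix Bernstein (dimensional prefactor $n$) or by a covering argument (union over $\sim 9^n$ points), the per-batch exponent must dominate $\sim n$. After truncation at level $\sqrt{\log n}$ the relevant Bernstein exponent scales like $(m/L)/\log n$, so one needs $m/L\gtrsim n\log n$ --- precisely the condition the paper imposes for its single-step $\bY_{T^\perp}$ bound, but now \emph{per batch}. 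With $L=\Theta(\log n)$ batches this gives $m\gtrsim n\log^2 n$, not $n\log n$, and the union bound over batches also makes the stated failure probability $3e^{-\gamma m/n}$ hard to reach. A related loose end: your KKT step assumes $P_T(\bY)=\bU$ exactly, whereas golfing delivers this only up to $2^{-L}$; the ``final small correction'' you invoke must itself have its $T^\perp$ operator norm controlled, which you have not addressed. The paper's single-step inexact-certificate framework sidesteps both issues and is what actually delivers the $n\log n$ bound with the claimed probability.
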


Expressed differently, Theorem~\ref{teo:main} establishes a rigorous
equivalence between a class of phase retrieval problems and a class of
semidefinite programs.  Clearly, any phase retrieval algorithm, no
matter how complicated or intractable, would need at least $2n$
quadratic measurements to recover a complex valued object $\bx \in
\C^n$.  In fact recent results, compare Theorem II in \cite{Fin04},
show that for complex-valued signals, one needs at least $3n-2$
intensity measurements to guarantee uniqueness of the solution to
\eqref{eq:rankmin}.  Further, Balan, Casazza and Edidin have shown
that with probability 1, $4n-2$ generic measurement vectors (which
includes the case of random uniform vectors) suffice for uniqueness
in the complex case~\cite{BCE06}. Hence, Theorem \ref{teo:main} shows that the
oversampling factor for perfect recovery via convex optimization is
rather minimal.

To be absolutely complete, we would like to emphasize that our
discrete signals $\bx$ may represent 1D, 2D, 3D and higher dimensional
objects. For instance, in 2D the vector $\bx \in \C^n$ might be a
family of samples of the form $x[t_1,t_2]$, $1 \le t_1 \le n_1$, $1
\le t_2 \le n_2$, and with $n = n_1 n_2$, so that $\bx$ is a discrete
2D image. In this case, we would record the squared magnitudes of the
dot product
\[
\<\bx,\bz_i\> = \sum_{t_1, t_2} \bar{x}[t_1,t_2] z_i[t_1,t_2]. 
\]
Hence, our framework and theory apply to one- or multi-dimensional
signals.

\subsection{Geometry}

We find it rather remarkable that the only solution to
\eqref{eq:tracemin} is $\hat \bX = \bx\bx^*$. To see why this is perhaps
unexpected, suppose for simplicity that the trace of the solution were
known (we might be given some side information or just have additional
measurements giving us this information) and equal to $1$, say. In
this case, the objective functional is of course constant over the
feasible set, and our problem reduces to solving the feasibility
problem
\begin{equation}
  \label{eq:feasibility}
  \begin{array}{ll} \text{find} & \quad \bX \\
    \text{such that} & \quad \cA(\bX) = \bb, \,\,  \bX \succeq 0
\end{array}
\end{equation}
with again the proviso that knowledge of $\cA(\bX)$ determines
$\tr(\bX)$ (equal to $\tr(\bx\bx^*) = \|\bx\|_2 = 1$).  In this
context, our main theorem states that $\bx\bx^*$ is the unique
feasible point. In other words, there is no other positive
semidefinite matrix $\bX$ in the affine space $\cA(\bX) =
\bb$. Naively, we would not expect this affine space of enormous
dimension---it is of co-dimension about $n \log n$ and thus of
dimension $n^2 - O(n \log n)$ in the complex case---to intersect the
positive semidefinite cone in only one point. Indeed, counting degrees
of freedom suggests that there are infinitely many candidates in the
intersection.  The reason why this is not the case, however, is
precisely because there is a feasible solution with low rank. Indeed,
the slice of the positive semidefinite cone $\{\bX : \bX \succeq 0\}
\cap \{\tr(\bX) = 1\}$ is quite `pointy' at $\bx\bx^*$ and it is,
therefore, possible for the affine space $\{\cA(\bX) = \bb\}$ to be
tangent even though it is of very small codimension.

Figure \ref{fig:psd} represents this geometry. In this example, 
\[
\bx = \frac{1}{\sqrt2} \begin{bmatrix} 1 \\ -1 \end{bmatrix} \quad
\Longrightarrow \quad \bx\bx^* = \frac12 \begin{bmatrix} 1 & -1 \\
  -1 & 1 \end{bmatrix}
\] 
and the affine space $\cA(\bX) = \bb$ is tangent to the positive
semidefinite cone at the point $\bx \bx^*$.
\begin{figure}
\centering
\includegraphics[width=60mm]{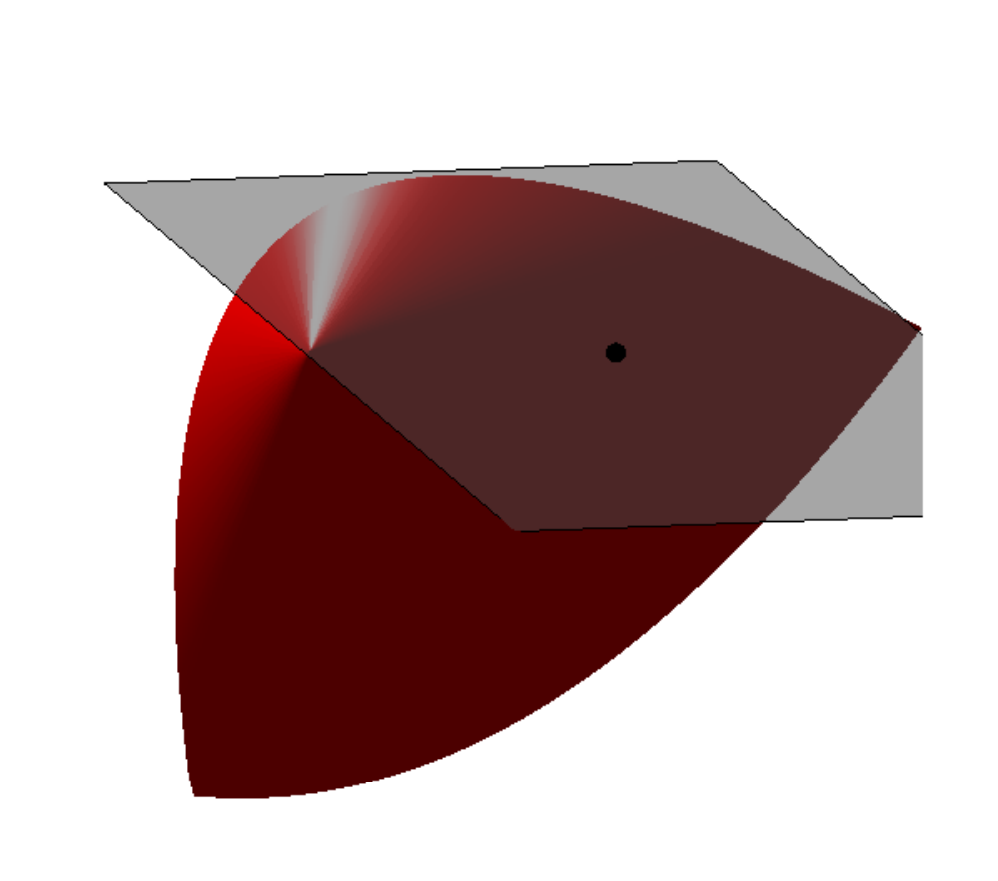}
\includegraphics[width=60mm]{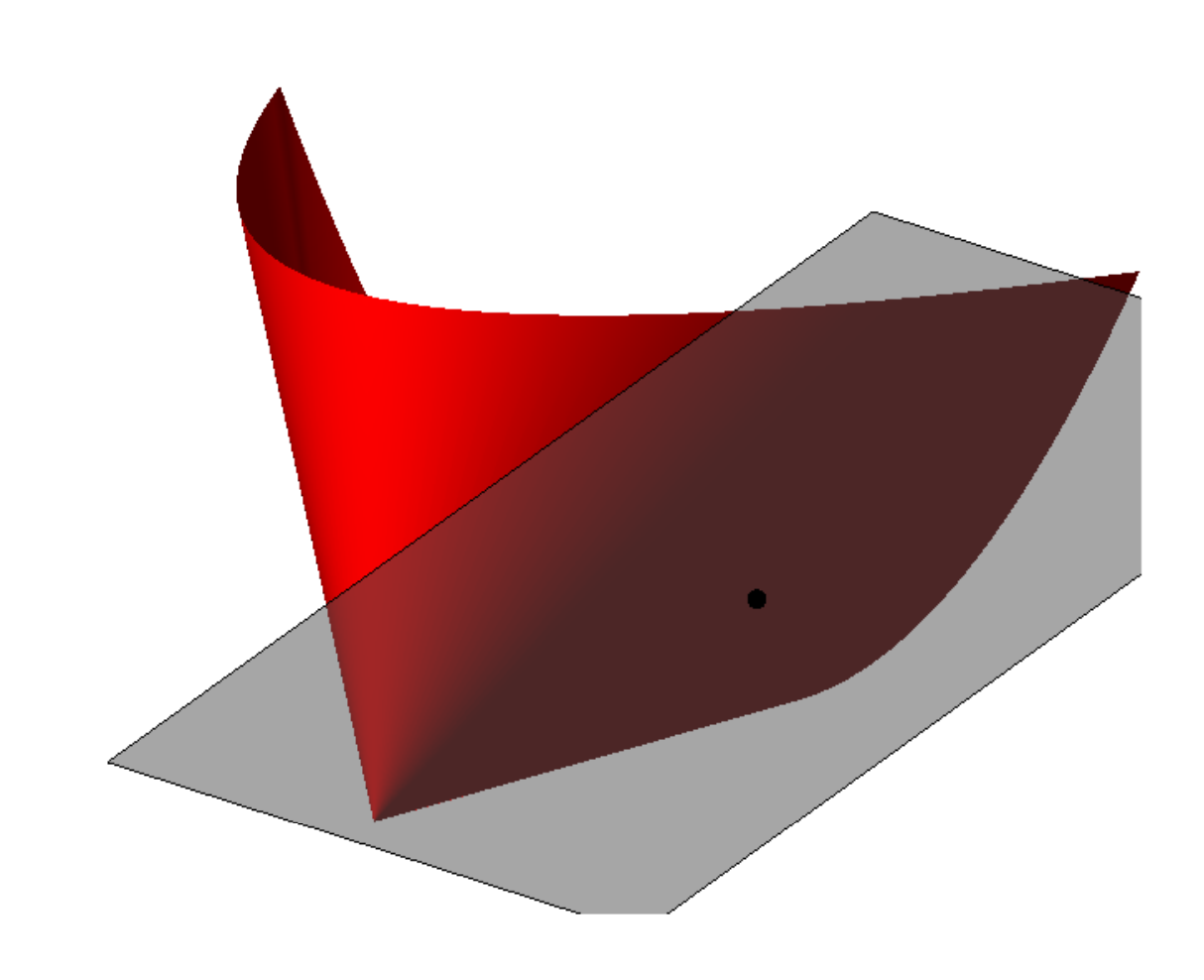}
\caption[]{Representation of the affine space $\cA(\bX) = \bb$ (gray) and of
  the semidefinite cone
$\begin{bmatrix} x & y \\ y & z \end{bmatrix}
  \succeq 0$ (red) which is a subset of $\R^3$. These two sets are
  drawn so that they are tangent to each other at the rank 1 matrix
  $\frac{1}{2} \begin{bmatrix} 1 & -1 \\
  -1 & 1 \end{bmatrix}$ (black dot).  Two views of the same 3D
 figure are provided for convenience.
 }
\label{fig:psd}
\end{figure}

Mathematically speaking, phase retrieval is a problem in algebraic
geometry since we are trying to find a solution to a set of polynomial
equations. The originality in our approach is that we do not use tools
from this field. For instance, we prove that there is no other
positive semidefinite matrix $\bX$ in the affine space $\cA(\bX) =
\bb$, or equivalently, that a certain system of polynomial equations
(a symmetric matrix is positive semidefinite if and only if the determinants of
all the leading principal minors are nonnegative) only has one
solution; this is a fact that general techniques from algebraic
geometry appear to not detect.

\subsection{Stability}
\label{subsec:stability}

In the real world, measurements are contaminated by noise. Using the
frameworks developed in~\cite{MCNoise} and \cite{QSTC}, it is possible to extend
Theorem~\ref{teo:main} to accommodate noisy measurements. One could
consider a variety of noise models as discussed in \cite{CESV} but we
work here with a simple generic model in which we observe 
\begin{equation}
\label{eq:noisydata}
b_i = |\<\bx,\bz_i\>|^2 + \nu_i, 
\end{equation}
where $\nu_i$ is a noise term with bounded $\ell_2$ norm,
$\|\vct{\nu}\|_2 \le \epsilon$.  This model is nonstandard since the
usual statistical linear model posits a relationship of the form $b_i
= \<\bx,\bz_i\> + \nu_i$ in which the mean response is a linear
function of the unknown signal, not a quadratic function.
Furthermore, we prefer studying \eqref{eq:noisydata} rather than the
related model $b_i = |\<\bx,\bz_i\>| + \nu_i$ (the modulus is not
squared) because in many applications of interest in optics and other
areas of physics, one can measure squared magnitudes or
intensities---not magnitudes.

We now consider the solution to 
\begin{equation}
\label{eq:traceminnoisy}
 \begin{array}{ll}
   \text{minimize}   & \quad \tr(\bX)\\ 
   \text{subject to} & \quad  \|\cA(\bX) - \bb\|_2 \le \epsilon\\
   & \quad \bX \succeq 0. 
\end{array}
\end{equation}
We do not claim that $\hat \bX$ has low rank so we suggest estimating
$\bx$ by extracting the largest rank-1 component. Write $\hat \bX$ as
\newcommand{\hlambda}{\hat{\lambda}}
\[
\hat \bX = \sum_{k = 1}^n \hlambda_k \hat{\bu}_k \hat{\bu}_k^*, \quad
\hlambda_1  \ge \ldots \ge \hlambda_n \ge 0,  
\]
and set 
\[
\hat{\bx} = \sqrt{\hlambda_1} \, \hat{\bu}_1.
\]
We prove the following estimate.
\begin{theorem}
\label{teo:stability}
Fix $\bx \in \C^n$ or $\R^n$ and assume the $\bz_i$'s are uniformly
sampled on the sphere of radius $\sqrt{n}$. Under the hypotheses of
Theorem \ref{teo:main}, the solution to \eqref{eq:traceminnoisy} obeys
($\|\bX\|_2$ is the Frobenius norm of $\bX$)
\begin{equation}
  \label{eq:noisy}
  \|\hat \bX - \bx\bx^*\|_2 \le C_0 \, \epsilon 
\end{equation}
for some positive numerical constant $C_0$. We also have 
\begin{equation}
  \label{eq:noisy2}
  \|\hat \bx - e^{i\phi} \bx\|_2 \le C_0 \,  \text{\em min}(\|\bx\|_2,
\epsilon/\|\bx\|_2)
\end{equation}
for some $\phi \in [0,2\pi]$. Both these estimates hold with nearly
the same probability as in the noiseless case. 
\end{theorem}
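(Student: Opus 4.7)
The plan is to reduce the stability statement to a perturbation analysis built on top of the dual-certificate machinery that underlies Theorem~\ref{teo:main}. Set $\bH = \hat{\bX} - \bx\bx^*$. Since $\|\vct{\nu}\|_2 \le \epsilon$, the matrix $\bx\bx^*$ is itself feasible for \eqref{eq:traceminnoisy}, and the triangle inequality gives $\|\cA(\bH)\|_2 \le 2\epsilon$; optimality of $\hat{\bX}$ then forces $\tr(\bH) \le 0$. I split $\bH = \bH_T + \bH_{T^\perp}$ along the tangent space $T = \{\bx\bv^* + \bv\bx^* : \bv \in \C^n\}$ to the rank-one manifold at $\bx\bx^*$. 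A crucial free observation is that $\bH_{T^\perp} = P_{\bx^\perp}\hat{\bX}P_{\bx^\perp} \succeq 0$, since $\hat{\bX}\succeq 0$, so $\|\bH_{T^\perp}\|_* = \tr(\bH_{T^\perp})$.

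I then use the approximate dual certificate produced in the noiseless proof: a vector $\vct{\lambda}$ with $\|\vct{\lambda}\|_2 \lesssim \sqrt{m}$ such that $\bY = \cA^*(\vct{\lambda})$ satisfies $P_T(\bY) = P_T(\bI)$ and $\|P_{T^\perp}(\bY)\|_{\mathrm{op}} \le \tfrac{1}{2}$. Expanding
\[
\iprod{\bI - \bY}{\bH} = \tr(\bH) - \iprod{\vct{\lambda}}{\cA(\bH)} \ge -2\epsilon\|\vct{\lambda}\|_2,
\]
and using that the $T$-components cancel while on $T^\perp$ we have $\bI_{T^\perp}-\bY_{T^\perp} \succeq \tfrac{1}{2} P_{\bx^\perp}$, yields $\tfrac{1}{2}\tr(\bH_{T^\perp}) \le 2\epsilon\|\vct{\lambda}\|_2$, hence $\|\bH_{T^\perp}\|_2 \le \|\bH_{T^\perp}\|_* \le C_1\epsilon$. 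For the tangent piece I invoke the local isometry of $\cA$ on $T$, also part of the certificate construction, of the form $\|\cA(\bM)\|_2 \ge \alpha\sqrt{m/n}\,\|\bM\|_2$ for all $\bM \in T$. Combined with $\|\cA(\bH_T)\|_2 \le 2\epsilon + \|\cA(\bH_{T^\perp})\|_2$ and a crude bound $\|\cA(\bM)\|_2 \lesssim \sqrt{mn}\,\|\bM\|_*$ (since each $\bz_i\bz_i^*$ has operator norm $n$), the previous estimate on $\|\bH_{T^\perp}\|_*$ propagates to $\|\bH_T\|_2 \le C_2\epsilon$, and the two pieces add to give \eqref{eq:noisy}.

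For \eqref{eq:noisy2} I use that $\hat{\bx}\hat{\bx}^*$ is the best rank-one approximation of $\hat{\bX}$ in Frobenius norm, so
\[
\|\hat{\bx}\hat{\bx}^* - \bx\bx^*\|_2 \le \|\hat{\bx}\hat{\bx}^* - \hat{\bX}\|_2 + \|\hat{\bX} - \bx\bx^*\|_2 \le 2\|\hat{\bX} - \bx\bx^*\|_2 \le 2C_0\epsilon,
\]
and then apply the elementary lemma that for any $\bu,\bv \in \C^n$ there exists a unit-modulus scalar $c$ with
\[
\|\bu - c\bv\|_2^2 \le \frac{\|\bu\bu^* - \bv\bv^*\|_2^2}{\max(\|\bu\|_2^2,\|\bv\|_2^2)},
\]
yielding the $\epsilon/\|\bx\|_2$ term; the alternative $\|\bx\|_2$ bound is trivial (compare with $\hat\bx = 0$), and the minimum of the two controls the error.

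The main obstacle is not this perturbation argument itself, which is essentially mechanical, but rather producing the quantitative ingredients with explicit constants: an \emph{approximate} dual certificate with controlled $\|\vct{\lambda}\|_2$, a sharp tangent-space lower isometry, and a usable operator-norm bound on $\cA$, all holding simultaneously with the same probability $1 - 3e^{-\gamma m/n}$ as in Theorem~\ref{teo:main}. Once those are extracted from the noiseless proof, the stability bound follows by the short computation above.
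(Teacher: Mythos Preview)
Your high-level architecture matches the paper's: decompose $\bH = \hat\bX - \bx\bx^*$ along $T$ and $T^\perp$, use the dual certificate to control $\tr(\bH_{T^\perp})$, and use local injectivity of $\cA$ on $T$ to control $\bH_T$. But two of the specific ingredients you invoke are not what the noiseless proof actually supplies, and with the constants you wrote the argument does not close. First, the certificate in the paper is \emph{not} exact on $T$: it is $\bY = m^{-1}\cA^*\cA\,\cS^{-1}(\bx\bx^*)$ (truncated), which only satisfies $\|\bY_T - \bx\bx^*\|_2 \le \gamma$ for a small numerical $\gamma$, not $P_T(\bY)=P_T(\bI)$. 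Hence the ``$T$-components cancel'' step fails; one is left with
\[
\tfrac{1}{2}\,\tr(\bH_{T^\perp}) \;\le\; |\langle \bH, \bY\rangle| \;+\; \gamma\,\|\bH_T\|_2,
\]
a genuine coupling that has to be solved simultaneously with the injectivity bound on $\bH_T$.

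Second, the paper deliberately works with $\ell_1$ isometries, $m^{-1}\|\cA(\bX)\|_1 \le (1+\delta)\|\bX\|_1$ for $\bX\succeq 0$ and $m^{-1}\|\cA(\bX)\|_1 \ge c(1-\delta)\|\bX\|$ for rank-two $\bX$, and explicitly remarks that an $\ell_2$-RIP is not expected to hold because $\|\cA(\bX)\|_2^2$ involves fourth Gaussian moments. Your crude bound $\|\cA(\bM)\|_2 \lesssim \sqrt{mn}\,\|\bM\|_*$ combined with a lower isometry constant $\alpha\sqrt{m/n}$ loses a factor of $n$ when passing from $\|\bH_{T^\perp}\|_*$ to $\|\bH_T\|_2$; and the estimate $\|\vct{\lambda}\|_2 \lesssim \sqrt m$ would give $|\langle\bH,\bY\rangle|\lesssim \epsilon\sqrt m$, again dimension-dependent. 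The paper avoids both losses by bounding $|\langle\bH,\bY\rangle| \le m^{-1}\|\cA(\bH)\|_\infty\,\|\cA\,\cS^{-1}(\bx\bx^*)\|_1 \le 2(1+\delta)\epsilon$ via the $\ell_1$ upper RIP, and by writing $m^{-1}\|\cA(\bH_T)\|_1 \le m^{-1/2}\|\cA(\bH)\|_2 + (1+\delta)\tr(\bH_{T^\perp})$ so that the upper and lower constants are both $O(1)$. Solving the resulting pair of inequalities yields $\tr(\bH_{T^\perp})\le c_1\epsilon$ and $\|\bH_T\|\le c_2\epsilon$. For \eqref{eq:noisy2} the paper uses Weyl plus the Davis--Kahan $\sin\theta$ theorem rather than your best-rank-one lemma; either route is fine there.
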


% \ts{TS: For small $\delta$ we can improve the bound~\eqref{eq:noisy2}
% using Stewart's perturbation bound for eigenvectors, which has the
% following form: Let $\hat{X} = X + E$, where $X$ and $\hat{X}$ are
% hermitian. Let $v_1$ denote the eigenvector associated with the largest
% (unique) eigenvalue $\lambda_1$ and let $\hat{v}$ denote the eigenvector
% associated with the largest eigenvalue of $\hat{X}$. Then
% $$\|v-\hat{v}\|_2 \le \frac{1}{|\lambda_1 - \lambda_2|} \|E\| 
% + {\cal O}(\|E\|^2)$$
% In our case $\lambda_2=0$, $\lambda_1 = \|x\|_2^2$, and $\|E\|\le C_0
% \delta$, giving the bound
% $$  
% \|\hat \bx - e^{i\phi} \bx\|^2 \le \frac{C_0}{\|x\|_2^2} \, \delta +
% {\cal O}(\delta^2).$$} \ejc{[EJC: I suppose you do not intend to
% square the error above, right?]}  \ts{TS: The numerical experiments clearly
% show this $\delta$-error and not the $\sqrt{\delta}$-error. So it may
% make sense to include it.}

% \ejc{[EJC: I actually stated the theorem for the Gaussian model in
%   which the $z_i$'s are Gaussian so we may want to adjust it. It
%   should be probably stated with $z_i$ uniformly sampled on the sphere
%   of radius $\sqrt{n}$ to avoid having $n$ appear in the bound.]} 

% \ejc{[EJC: We might be able to get better error bounds in the case
%   $\nu$ is stochastic because most of the error comes from the
%   downward bias which can be corrected. I talked to Vlad about this.]}

Thus our approach also provides stable recovery in presence of
noise. This important property is not shared by other reconstruction
methods, which are of a more algebraic nature and rely on particular
properties of the measurement vectors, such as the methods
in~\cite{Fin04,BCE06,BBC09}, as well as the methods that appear
implicitly in Theorem~3.1 and Theorem~3.3 of~\cite{CESV}.

We note that one can further improve the accuracy of the solution
$\hat{\bx}$ by ``debiasing'' it. We replace $\hat{\bx}$ by its
rescaled version $s \hat{\bx}$ where $s = \sqrt{\sum_{k=1}^{n}
  \hat{\lambda}_k}/\|\hat{\bx}\|_2$.  This corrects for the energy
leakage occurring when $\hat{\bX}$ is not exactly a rank-1 solution,
which could cause the norm of $\hat{\bx}$ to be smaller than that of
the actual solution. Other corrections are of course possible. 

\subsection{Organization of the paper}

The remainder of the paper is organized as follows.
Subsection~\ref{subsec:notation} introduces some notation used
throughout the paper. In Section~\ref{sec:proof} we present the main
architecture of the proof of Theorem~\ref{teo:main}, which comprises
two key ingredients: approximate $\ell_1$ isometries and approximate
dual certificates.  Section~\ref{sec:isometries} is devoted to
establishing approximate $\ell_1$ isometries. In
Section~\ref{sec:dual}, we construct approximate dual certificates and
complete the proof of Theorem~\ref{teo:main} in the real-valued
case. Section~\ref{sec:complex} shows how the proof for the
real-valued case can be adapted to the complex-valued
case. Section~\ref{sec:stability} is concerned with the proof of
Theorem~\ref{teo:stability}. Numerical simulations, illustrating our
theoretical results, are presented in Section~\ref{sec:numerics}. We
conclude the paper with a short discussion in Section
\ref{sec:discussion}.

\subsection{Notations}
\label{subsec:notation}

\newcommand{\Tp}{T^\perp}
\newcommand{\PT}{\mathcal{P}_T}
\newcommand{\PTp}{\mathcal{P}_{T^\perp}}

It is useful to introduce notations that shall be used throughout the
paper. Matrices and vectors are denoted in boldface (such as $\bX$ or
$\bx$), while individual entries of a vector or matrix are denoted in
normal font; e.g.\ the $i$th entry of $\bx$ is $x_i$.  For matrices,
we define
\[
\|\bX\|_p = \Bigl[\sum_i \sigma_i^p(\bX)\Bigr]^{1/p},
\]
(where $\sigma_i(\bX)$ denotes the $i$th singular value of $\bX$), so that
$\|\bX\|_1$ is the nuclear norm, $\|\bX\|_2$ is the Frobenius norm
and $\|\bX\|_\infty$ is the operator norm also denoted by $\|\bX\|$. For
vectors, $\|\bx\|_p$ is the usual $\ell_p$ norm. We denote the $n-1$
dimensional sphere by $S^{n-1}$, i.e.~the set $\{\bx \in \R^n : \|\bx\|_2
= 1\}$.

Next, we define $T_{\bx}$ to be the set of symmetric matrices of the
form
\begin{equation}
\label{eq:T}
T_{\bx} = \{\bX = \bx \by^* + \by \bx^*: \by \in \R^n\} 
\end{equation}
and denote $T_{\bx}^\perp$ by its orthogonal complement.  Note that
$\bX \in T_{\bx}^\perp$ if and only if both the column and row spaces
of $\bX$ are perpendicular to $\bx$. Further, the operator
$\mathcal{P}_{T_{\bx}}$ is the orthogonal projector onto $T_{\bx}$ and
similarly for $\mathcal{P}_{T_{\bx}^\perp}$. We shall almost always
use $\bX_{T_{\bx}}$ as a shorthand for $\mathcal{P}_{T_{\bx}}(\bX)$.

Finally, we will abuse language and say that a symmetric matrix $\bH$
is feasible if and only if $\bx\bx^* + \bH$ is feasible for our
problem \eqref{eq:tracemin}. This means that $\bH$ obeys
\begin{equation}
\label{eq:feasible}
\bx\bx^* + \bH \succeq 0 \quad \text{and} \quad \cA(\bH) = 0. 
\end{equation}

\section{Architecture of the Proof}
\label{sec:proof}

In this section, we introduce the main architecture of the argument
and defer the proofs of crucial intermediate results to later
sections. 
% \textcolor{red}{Theorem \ref{teo:main} is obvious in the case where
%   $m \ge n^2$ since in this case, the mapping $\cA$ is injective
%   over the positive semidefinite cone.
We shall prove Theorem \ref{teo:main} in the real case
first for ease of exposition. Then in Section \ref{sec:complex}, we
shall explain how to modify the argument to the complex and more
general case.

Suppose then that $\bx \in \R^n$ and that the $\bz_i$'s are sampled on
the unit sphere.  It is clear that we may assume without loss of
generality that $\bx$ is unit-normed. Further, since the uniform
distribution on the unit sphere is rotationally invariant, it suffices
to prove the theorem in the case where $\bx = \be_1$. Indeed, we can
write any unit vector $\bx$ as $\bx = \bU \be_1$ where $\bU$ is
orthogonal. Since
\[
|\<\bx, \bz_i\>|^2 = |\<\bU\be_1, \bz_i\>|^2 = |\<\be_1, \bU^* \bz_i\>|^2 =^d
|\<\be_1, \bz_i\>|^2,
\]
the problem is the same as that of finding $\be_1$. We henceforth
assume that $\bx = \be_1$.

Finally, the theorem can be equivalently stated in the case where the
$\bz_i$'s are i.i.d.~copies of a white noise vector $\bz \sim \mathcal{N}(0,I)$
with independent standard normals as components. Indeed, if $\bz_i \sim
\mathcal{N}(0,I)$, 
\[
|\<\bx, \bz_i\>|^2 = b_i \quad \Longleftrightarrow \quad |\<\bx, \bu_i\>|^2 =
b_i/\|\bz_i\|_2^2,
\]
where $\bu_i = \bz_i/\|\bz_i\|_2$ is uniformly sampled on the unit
sphere. Since $\|\bz_i\|_2$ does not vanish with probability one,
establishing the theorem for Gaussian vectors establishes it for
uniformly sampled vectors and vice versa. From now on, we assume $\bz_i$
i.i.d.\ $\mathcal{N}(0,I)$.

\subsection{Key lemma}

The set $T := T_{\be_1}$ defined in \eqref{eq:T} may be interpreted as
the tangent space at $\be_1 \be_1^*$ to the manifold of symmetric
matrices of rank 1. Now standard duality arguments in semidefinite
programming show that a sufficient (and nearly necessary) condition
for $\bx\bx^*$ to be the unique solution to \eqref{eq:tracemin} is
this:
\begin{itemize}
\item the restriction of $\cA$ to $T$ is injective ($\bX \in T$ and
  $\cA(X) = 0 \Rightarrow \bX = 0$),
\item and there exists a {\em dual certificate} $\bY$ in the range of
  $\cA^*$ obeying\footnote{The notation $A \prec B$ means that $B - A$
    is positive definite.}
\begin{equation}
  \label{eq:dual}
  \bY_T = \be_1 \be_1^* \quad \text{and} \quad \bY_\Tp \prec I_\Tp. 
\end{equation}
\end{itemize}
The proof is straightforward and omitted.  Our strategy to prove
Theorem \ref{teo:main} hinges on the fact that a strengthening of the
injectivity property allows to relax the properties of the dual
certificate, as in the approach pioneered in \cite{Gross09} for matrix
completion. We establish the crucial lemma below.
\begin{lemma}
  \label{lem:crucial}
  Suppose that the mapping $\cA$ obeys the following two properties:
  for all positive semidefinite matrices $X$,
\begin{equation}
\label{eq:RIP1r1}
m^{-1} \|\cA(\bX)\|_1 < (1+1/9) \|\bX\|_1;
\end{equation}
and for all matrices $\bX \in T$
\begin{equation}
\label{eq:RIP1r2}
m^{-1} \|\cA(\bX)\|_1 > 0.94(1-1/9) \|\bX\|. 
\end{equation}
Suppose further that there exists $\bY$ in the range of $\cA^*$ obeying
\begin{equation}
  \label{eq:dualcertif}
  \|\bY_T - \be_1 \be_1^*\|_2 \le 1/3 \quad \text{and} \quad \|\bY_\Tp\| \le 1/2.
\end{equation}
Then $\be_1 \be_1^*$ is the unique minimizer to \eqref{eq:tracemin}.
\end{lemma}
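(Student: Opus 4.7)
The plan is to show that any nonzero symmetric $\bH$ satisfying the feasibility conditions \eqref{eq:feasible} produces $\tr(\bH) > 0$, so that $\be_1\be_1^* + \bH$ has strictly larger trace than $\be_1\be_1^*$. Decompose $\bH = \bH_T + \bH_\Tp$. A first remark is that $\be_1\be_1^* + \bH \succeq 0$ forces the lower $(n-1)\times(n-1)$ principal submatrix of $\bH$ to be positive semidefinite, and since $\bH_\Tp$ is precisely this block (embedded with zero first row and column), we obtain $\bH_\Tp \succeq 0$ and in particular $\tr(\bH_\Tp) = \|\bH_\Tp\|_1$. I would next invoke the dual certificate. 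Writing $\bY = \cA^*(\by)$, the constraint $\cA(\bH) = 0$ gives $\iprod{\bY}{\bH} = 0$, so, using $\bI_T = \be_1\be_1^*$,
\[
\tr(\bH) \;=\; \iprod{\bI - \bY}{\bH} \;=\; \iprod{\be_1\be_1^* - \bY_T}{\bH_T} + \iprod{\bI_\Tp - \bY_\Tp}{\bH_\Tp}.
\]
Cauchy--Schwarz and the first bound in \eqref{eq:dualcertif} control the first inner product by $\tfrac13\|\bH_T\|_2$ in absolute value. For the second, $\iprod{\bI_\Tp}{\bH_\Tp} = \tr(\bH_\Tp) = \|\bH_\Tp\|_1$ by positive semidefiniteness of $\bH_\Tp$, while operator--nuclear duality combined with the second bound in \eqref{eq:dualcertif} yields $|\iprod{\bY_\Tp}{\bH_\Tp}| \le \tfrac12\|\bH_\Tp\|_1$. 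Collecting, $\tr(\bH) \ge \tfrac12\|\bH_\Tp\|_1 - \tfrac13\|\bH_T\|_2$.

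The remaining task is to dominate $\|\bH_T\|_2$ by a small multiple of $\|\bH_\Tp\|_1$, and this is exactly where the two approximate $\ell_1$ isometries come in. Since $\cA(\bH) = 0$, we have $\cA(\bH_T) = -\cA(\bH_\Tp)$. Applying the upper bound \eqref{eq:RIP1r1} to the PSD matrix $\bH_\Tp$ and the lower bound \eqref{eq:RIP1r2} to $\bH_T \in T$ yields the sandwich
\[
0.94 \cdot \tfrac{8}{9}\,\|\bH_T\|_2 \;<\; m^{-1}\|\cA(\bH_T)\|_1 \;=\; m^{-1}\|\cA(\bH_\Tp)\|_1 \;<\; \tfrac{10}{9}\|\bH_\Tp\|_1,
\]
so $\|\bH_T\|_2 < \frac{10}{0.94\cdot 8}\,\|\bH_\Tp\|_1 \approx 1.33\,\|\bH_\Tp\|_1$. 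Substituting back gives $\tr(\bH) > \bigl(\tfrac12 - \tfrac13 \cdot 1.33\bigr)\|\bH_\Tp\|_1 > 0$ whenever $\bH_\Tp \neq 0$. The degenerate case $\bH_\Tp = 0$ is handled by \eqref{eq:RIP1r2} itself: then $\cA(\bH_T) = 0$, forcing $\|\bH_T\|_2 = 0$ and $\bH = 0$, contradicting $\bH \neq 0$.

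The conceptual scheme is the familiar approximate-certificate method of \cite{Gross09}: substitute $\bI - \bY$ into the trace, split across $T$ and $\Tp$, estimate each piece, then close with a restricted isometry. No individual step is deep, so I expect the main obstacle to be arithmetic rather than conceptual: the slacks $1/3$ and $1/2$ in \eqref{eq:dualcertif} together with the constants $10/9$ and $0.94\cdot 8/9$ in \eqref{eq:RIP1r1}--\eqref{eq:RIP1r2} have been tuned to leave a positive residue $\tfrac12 - \tfrac13 \cdot \frac{10}{7.52} \approx 0.06$, and any appreciable loosening would collapse the argument. A small but crucial point worth flagging is that \eqref{eq:RIP1r2} must be used with the Frobenius norm of $\bH_T$ on the right-hand side; since elements of $T$ have rank at most two, the $\sqrt{2}$ overhead of converting from operator to Frobenius norm would erase this narrow margin.
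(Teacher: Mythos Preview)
Your overall scheme is exactly the paper's: decompose $\bH = \bH_T + \bH_\Tp$, note $\bH_\Tp \succeq 0$, use $\iprod{\bH}{\bY} = 0$ to rewrite $\tr(\bH)$, bound the two pieces via \eqref{eq:dualcertif}, and close with the $\ell_1$ isometries \eqref{eq:RIP1r1}--\eqref{eq:RIP1r2}. You have also located the one genuine obstruction, and it is not a side remark but the heart of the matter: the hypothesis \eqref{eq:RIP1r2} is stated with the \emph{operator} norm $\|\bH_T\|$, not the Frobenius norm. Your sandwich as written,
\[
0.94 \cdot \tfrac{8}{9}\,\|\bH_T\|_2 \;<\; m^{-1}\|\cA(\bH_T)\|_1,
\]
is therefore not licensed by \eqref{eq:RIP1r2}; and as you correctly note, inserting the generic rank-two bound $\|\bH_T\|_2 \le \sqrt{2}\,\|\bH_T\|$ yields $\|\bH_T\|_2 \lesssim 1.88\,\|\bH_\Tp\|_1$ and the final coefficient $\tfrac12 - \tfrac13\cdot 1.88$ is negative. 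So the argument does not close as it stands.

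The paper fills this gap with an intermediate lemma (Lemma~\ref{lem:RIP}) that sharpens the $\sqrt{2}$ to $\sqrt{17/16}$. The trick is to argue by contradiction and \emph{use} the hypothesis $\tr(\bH) \le 0$ a second time, before the dual-certificate step. Write $\bH_T = -\lambda(\bu_1\bu_1^* - t\,\bu_2\bu_2^*)$ with $\lambda \ge 0$, $t \in [0,1]$. If $t \ge 1/4$ then $\|\bH_T\| = \lambda = |\tr(\bH_T)|/(1-t) \ge \tfrac{4}{3}|\tr(\bH_T)|$; chaining \eqref{eq:RIP1r2}, $\|\cA(\bH_T)\|_1 = \|\cA(\bH_\Tp)\|_1$, \eqref{eq:RIP1r1}, and the trace inequality $|\tr(\bH_T)| \ge \tr(\bH_\Tp)$ (from $\tr(\bH)\le 0$) then forces $\tr(\bH_\Tp) = 0$, hence $\bH_\Tp = 0$, hence $\cA(\bH_T)=0$, hence $\bH = 0$. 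So any nonzero feasible $\bH$ with $\tr(\bH) \le 0$ must have $t \le 1/4$, which gives $\|\bH_T\|_2^2 = \lambda^2(1+t^2) \le \tfrac{17}{16}\|\bH_T\|^2$. With this in hand, your final chain becomes
\[
0 \;\ge\; \tfrac12\tr(\bH_\Tp) - \tfrac13\|\bH_T\|_2 \;\ge\; \tfrac12\Bigl(0.94\cdot\tfrac{1-\delta}{1+\delta}\sqrt{\tfrac{16}{17}} - \tfrac23\Bigr)\|\bH_T\|_2,
\]
whose coefficient is positive for $\delta < 1/9$, forcing $\bH_T = 0$ and then $\bH_\Tp = 0$. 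In short: the isometries have to be spent twice---once to constrain the eigenvalue ratio of $\bH_T$, and once more in the role you assigned them.
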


The first property \eqref{eq:RIP1r1} is reminiscent of the (one-sided)
RIP property in the area of compressed sensing \cite{CT05}. The
difference is that it is expressed in the 1-norm rather than the
2-norm. Having said this, we note that RIP-1 properties have also been
used in the compressed sensing literature, see \cite{IndykGilbert} for
example. We use this property instead of a property about
$\|\cA(\bX)\|_2$, because we actually believe that a RIP property in
the 2-norm does not hold here because $\|\cA(\bX)\|_2^2$ involves fourth
moments of Gaussian variables. The second property \eqref{eq:RIP1r2}
is a form of local RIP-1 since it holds only for matrices in $T$.

We would like to emphasize that the bound for the dual certificate in
\eqref{eq:dualcertif} is loose in the sense that $\bY_T$ and $\be_1 \be_1^*$
may not be that close, a fact which will play a crucial role in our
proof.  This is in stark contrast with the work of David
Gross~\cite{Gross09}, which requires a very tight approximation.

\subsection{Proof of Lemma \ref{lem:crucial}}

We need to show that there is no feasible $\bx\bx^* + \bH \neq \bx\bx^*$ with
$\tr(\bx\bx^* + \bH) \leq \tr(\bx\bx^*)$.  Consider then a feasible $\bH \neq 0$
obeying $\tr(\bH) \le 0$, write
\[
\bH = \bH_T + \bH_\Tp,
\]
and observe that 
\begin{equation}
\label{eq:AH}
0 = \|\cA(\bH)\|_1 = \|\cA(\bH_T)\|_1 - \|\cA(\bH_\Tp)\|_1.  
\end{equation}
Now it is clear that $\bx\bx^* + \bH \succeq 0 \Rightarrow \bH_\Tp \succeq 0$
and, therefore, \eqref{eq:RIP1r1} gives
\[
m^{-1} \|\cA(\bH_\Tp)\|_1 \le (1+\delta) \tr(\bH_\Tp)
\]
for some $\delta < 1/9$.  Also, $\tr(\bH_T) \le -\tr(\bH_\Tp) \le 0$, which
implies that $\left|\tr(\bH_T)\right| \geq \tr(\bH_\Tp)$.  We then show
that the operator and Frobenius norms of $\bH_T$ must nearly be the
same.
\begin{lemma}
  \label{lem:RIP}
  Any feasible matrix $\bH$ such that $\tr(\bH) \le 0$ must obey
\[
\|\bH_T\|_2 \le \sqrt{\frac{17}{16}} \, \|\bH_T\|.
\]
\end{lemma}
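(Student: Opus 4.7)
The plan is to exploit the rank-two structure of matrices in $T$. Writing $\bH_T = \be_1\by^* + \by\be_1^*$ and decomposing $\by = y_1 \be_1 + \bv$ with $\bv \perp \be_1$, the matrix $\bH_T$ is supported on the first row and column and has exactly two (possibly) nonzero eigenvalues $\lambda_1 \ge 0 \ge \lambda_2$ given by $\lambda_{1,2} = y_1 \pm \sqrt{y_1^2 + \|\bv\|^2}$. The text preceding the lemma has already established $\tr(\bH_T) \le -\tr(\bH_{\Tp}) \le 0$, so $\lambda_1 + \lambda_2 \le 0$, i.e., $|\lambda_2| \ge \lambda_1$; in particular $\|\bH_T\| = |\lambda_2|$ and $|\tr(\bH_T)| = |\lambda_2| - \lambda_1 = \|\bH_T\| - \lambda_1$. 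In these coordinates the target inequality $\|\bH_T\|_2 \le \sqrt{17/16}\,\|\bH_T\|$ reads $\lambda_1^2 + \lambda_2^2 \le (17/16)\lambda_2^2$, which is equivalent to $\lambda_1 \le |\lambda_2|/4$.

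So the problem reduces to showing that the smaller eigenvalue is at most a quarter of the larger. The tools are the two RIP-1 hypotheses of Lemma \ref{lem:crucial} together with the identity $\|\cA(\bH_T)\|_1 = \|\cA(\bH_{\Tp})\|_1$ coming from $\cA(\bH)=0$ in \eqref{eq:AH}. Since $\bH_{\Tp} \succeq 0$, the hypothesis \eqref{eq:RIP1r1} gives $m^{-1}\|\cA(\bH_{\Tp})\|_1 \le (1+\delta)\tr(\bH_{\Tp})$ for some $\delta < 1/9$; since $\bH_T \in T$, hypothesis \eqref{eq:RIP1r2} gives $m^{-1}\|\cA(\bH_T)\|_1 \ge 0.94(1-\delta)\|\bH_T\|$. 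Using $\tr(\bH_{\Tp}) \le |\tr(\bH_T)| = \|\bH_T\| - \lambda_1$, these bounds combine into
\[
0.94(1-\delta)\|\bH_T\| \le (1+\delta)\bigl(\|\bH_T\| - \lambda_1\bigr),
\]
whence $\lambda_1 \le \bigl(1 - \tfrac{0.94(1-\delta)}{1+\delta}\bigr)\|\bH_T\|$. Since $\delta < 1/9$ forces $\tfrac{0.94(1-\delta)}{1+\delta} > \tfrac{0.94 \cdot 8}{10} = 0.752$, the coefficient in front of $\|\bH_T\|$ is strictly less than $0.248 < 1/4$, which is exactly what is needed.

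There is no real obstacle once one spots the rank-two reduction at the outset; after that, the two RIP-1 estimates and the PSD/trace feasibility information fit together into a single linear inequality between $\lambda_1$ and $\|\bH_T\|$. The only thing to watch is that the constants $0.94$ and $1/9$ in the hypotheses and $\sqrt{17/16}$ in the conclusion are tuned precisely to close the gap $0.248 < 0.25$, with only a small margin to spare.
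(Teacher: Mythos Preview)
Your argument is correct and uses the same ingredients as the paper: the rank-two structure of $\bH_T$, the two RIP-1 hypotheses \eqref{eq:RIP1r1}--\eqref{eq:RIP1r2}, the identity $\|\cA(\bH_T)\|_1 = \|\cA(\bH_\Tp)\|_1$, and the trace inequality $\tr(\bH_\Tp) \le |\tr(\bH_T)|$. The difference is in packaging. The paper argues by contradiction: it parametrizes $\bH_T = -\lambda(\bu_1\bu_1^* - t\,\bu_2\bu_2^*)$, assumes $t \ge 1/4$, uses $\|\bH_T\| \ge \tfrac{4}{3}|\tr(\bH_T)|$ to force $\tr(\bH_\Tp) = 0$ and hence $\bH = 0$, and then concludes $t < 1/4$. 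You instead write $|\tr(\bH_T)| = \|\bH_T\| - \lambda_1$ and derive the single inequality $\lambda_1 \le \bigl(1 - \tfrac{0.94(1-\delta)}{1+\delta}\bigr)\|\bH_T\| < \tfrac14\|\bH_T\|$ directly. Your route is shorter and avoids the case split; the paper's contradiction argument has the minor advantage of making explicit what happens at the boundary (namely that $\bH$ must vanish), but this is not needed for the lemma as stated.
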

\begin{proof}
  Since the matrix $\bH_T$ has rank at most 2 and cannot be negative
  definite, it is of the form
\[
-\lambda (\bu_1 \bu_1^* - t \bu_2 \bu_2^*),  
\]
where $\bu_1$ and $\bu_2$ are orthonormal eigenvectors, $\lambda \ge 0$
and $t \in [0,1]$. We claim that we cannot have $t \ge
1/4$.\footnote{The choice of $1/4$ is somewhat arbitrary here.}
Suppose the contrary and fix $t \ge 1/4$. By \eqref{eq:RIP1r2}, we
know that
\[
m^{-1} \, \|\cA(\bH_T)\|_1 \ge 0.94 (1-\delta) \|\bH_T\|.
\]
Further, since
\[
\|\bH_T\| = \frac{|\tr(\bH_T)|}{1-t} \ge \frac{4}{3} |\tr(\bH_T)|
\]
for $t \ge 1/4$, it holds that 
\[
0 \geq \frac{5}{4}(1-\delta)\left|\tr(\bH_T) \right| -
(1+\delta)\tr(\bH_\Tp). 
\]
The right-hand side above is positive if $\tr(\bH_\Tp) <
\frac{5}{4}\frac{(1-\delta)}{(1+\delta)}\left|\tr(\bH_T)\right|$, so
that we may assume that
\[
\tr(\bH_\Tp) \geq \frac{5}{4}\frac{(1-\delta)}{(1+\delta)}\left|\tr(\bH_T)\right|. 
\]
Since, $\left|\tr(\bH_T)\right| \geq \tr(\bH_\Tp)$, this gives
\[
0 \ge \Bigl[\frac{5}{4}(1-\delta) - (1+\delta)\Bigr] \tr(\bH_\Tp).
\]
If $\delta < 1/9$, the only way this can happen is if $\tr(\bH_\Tp) = 0
\Rightarrow \bH_\Tp = 0$. So we would have $\bH = \bH_T$ of rank 2 and
$\cA(\bH_T) = 0$. Clearly, \eqref{eq:RIP1r2} implies that $\bH = 0$.

Now that it is established that $t \le 1/4$, the chain of inequalities
follow from the relation between the eigenvalues of $\bH_T$. 
\end{proof}

To conclude the proof of Lemma \ref{lem:crucial}, we show that the
existence of an inexact dual certificate rules out the existence of
matrices obeying the conditions of Lemma \ref{lem:RIP}. From
\[
0.94(1-\delta) \|\bH_T\| \le \|\cA(\bH_T)\|_1 = \|\cA(\bH_\Tp)\|_1 \le
(1+\delta) \tr(\bH_\Tp),
\]
we conclude that 
\begin{equation}
\label{eq:new}
\tr(\bH_\Tp) \ge 0.94 \frac{1-\delta}{1+\delta} \, \|\bH_T\| \ge 0.94
\frac{1-\delta}{1+\delta} \sqrt{\frac{16}{17}} \|\bH_T\|_2,
\end{equation}
where we used Lemma \ref{lem:RIP}. Next, 
\begin{align*}
0 \ge \tr(\bH_T) + \tr(\bH_\Tp) & = \<\bH, \be_1 \be_1^*\> + \tr(\bH_\Tp) \\
& = \<\bH, \be_1 \be_1^* - \bY\> + \<\bH,\bY\> +   \tr(\bH_\Tp)\\
& =  \<\bH_T, \be_1 \be_1^* - \bY_T\> - \<\bH_\Tp, \bY_\Tp\> +  \tr(\bH_\Tp)\\
& \ge \frac12 \tr(\bH_\Tp) - \frac{1}{3} \|\bH_T\|_2.  
\end{align*}
The third line above follows from $\<\bH, \bY\> = 0$ and the fourth from
Cauchy-Schwarz together with $|\<\bH_\Tp, \bY_\Tp\>| \le \frac12
\tr(\bH_\Tp)$. Hence, it follows from \eqref{eq:new} that 
\[
0 \ge \frac12 \Bigl(0.94 \frac{1-\delta}{1+\delta}
\sqrt{\frac{16}{17}} - \frac{2}{3}\Bigr) \|\bH_T\|_2.
\]
Since the numerical factor is positive for $\delta < 0.155$, the only
way this can happen is if $\bH_T = 0$. In turn,
$\|\mathcal{A}(\bH_\Tp)\|_1 = 0 \ge (1-\delta) \tr(\bH_\Tp)$ which gives
$\bH_\Tp = 0$. This concludes the proof.

\section{Approximate $\ell_1$ Isometries}
\label{sec:isometries}

We have seen that in order to prove our main result, it suffices to
show 1) that the measurement operator $\cA$ enjoys approximate isometry
properties (in an $\ell_1$ sense) when acting on low-rank matrices and
2) that an inexact dual certificate exists. This section focuses on
the former and establishes that both \eqref{eq:RIP1r1} and
\eqref{eq:RIP1r2} hold with high probability. In fact, we shall prove
stronger results than what is strictly required.

\begin{lemma}
  \label{lem:A1}
  Fix any $\delta > 0$ and assume $m \ge 16 \delta^{-2} \, n$. Then
  for all unit vectors $\bu$,
\begin{equation}
\label{eq:A1}
(1-\delta) \le \frac{1}{m} \|\cA(\bu\bu^*)\|_1 \le (1+\delta) 
\end{equation}
on an event $E_\delta$ of probability at least $1-2e^{-m
  \epsilon^2/2}$, where $\delta/4 = \epsilon^2 + \epsilon$. On the
same event,
\[
(1-\delta) \|\bX\|_1 \le \frac{1}{m} \|\cA(\bX)\|_1 \le (1+\delta) \|\bX\|_1 
\]
for all positive semidefinite matrices. The right inequality holds for
all matrices.
\end{lemma}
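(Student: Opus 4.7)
The plan is to recognize \eqref{eq:A1} as a statement about the sample covariance matrix. Since $\bz_i^* \bu \bu^* \bz_i = |\iprod{\bu}{\bz_i}|^2 \ge 0$, the $\ell_1$ norm drops its absolute values and
\begin{equation*}
\frac{1}{m}\|\cA(\bu\bu^*)\|_1 \;=\; \frac{1}{m}\sum_{i=1}^m |\iprod{\bu}{\bz_i}|^2 \;=\; \bu^* \bM \bu, \qquad \bM := \frac{1}{m}\sum_{i=1}^m \bz_i \bz_i^*.
\end{equation*}
Therefore \eqref{eq:A1} holding uniformly for all unit $\bu$ is equivalent to the operator-norm bound $\|\bM - \bI\| \le \delta$, i.e.\ the standard covariance estimation problem for isotropic Gaussian vectors.

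First, for fixed unit $\bu$, the variables $Y_i := \iprod{\bu}{\bz_i}$ are i.i.d.\ $\mathcal{N}(0,1)$, so $m\,\bu^*\bM\bu \sim \chi^2_m$. Applying the Laurent--Massart $\chi^2$ tail inequality with deviation parameter chosen as a function of $\epsilon$ yields $\P(|\bu^*\bM\bu - 1| \ge 2\epsilon + \epsilon^2) \le 2 e^{-m\epsilon^2/2}$, which already produces the exponent that appears in the lemma. Second, I would cover $S^{n-1}$ with an $\epsilon'$-net $\mathcal{N}$ of cardinality at most $(3/\epsilon')^n$ and use the standard self-bounding estimate $\|\bM-\bI\| \le (1-2\epsilon')^{-1}\max_{\bu_0 \in \mathcal{N}} |\bu_0^*(\bM-\bI)\bu_0|$. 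A union bound over $\mathcal{N}$ combined with the pointwise estimate, and the choice $m \ge 16 \delta^{-2} n$ (used to absorb the entropy term $n\log(3/\epsilon')$ into the $\chi^2$ exponent), then yields \eqref{eq:A1} uniformly on the announced event $E_\delta$, with the precise dependence $\delta/4 = \epsilon^2+\epsilon$ coming out of matching the net resolution to the per-point deviation.

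Third, to extend from rank-one to all positive semidefinite matrices, I use the fact that each entry $\bz_i^*\bX\bz_i$ is nonnegative when $\bX \succeq 0$, so $\|\cA(\bX)\|_1 = \sum_i \bz_i^*\bX\bz_i = m\,\tr(\bX \bM)$. On the event $E_\delta$,
\begin{equation*}
\Bigl|\tfrac{1}{m}\|\cA(\bX)\|_1 - \tr(\bX)\Bigr| \;=\; |\tr(\bX(\bM-\bI))| \;\le\; \|\bM-\bI\| \cdot \|\bX\|_1 \;\le\; \delta \|\bX\|_1,
\end{equation*}
and the identity $\tr(\bX) = \|\bX\|_1$ for PSD $\bX$ delivers both the upper and lower bounds. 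Finally, for arbitrary Hermitian $\bX$, I split $\bX = \bX_+ - \bX_-$ with $\bX_\pm \succeq 0$ and $\|\bX\|_1 = \tr(\bX_+) + \tr(\bX_-)$; by the triangle inequality and the PSD upper bound just established, $\|\cA(\bX)\|_1 \le \|\cA(\bX_+)\|_1 + \|\cA(\bX_-)\|_1 \le m(1+\delta)\|\bX\|_1$. No matching lower bound is claimed, and none can be expected because cancellation between $\cA(\bX_+)$ and $\cA(\bX_-)$ can be large.

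The main obstacle is essentially bookkeeping: one must carefully align the three small parameters (the net resolution, the per-point $\chi^2$ deviation, and the target accuracy $\delta$) so that they collapse into the single clean relation $\delta/4 = \epsilon^2+\epsilon$ together with the simple sample complexity $m \ge 16\delta^{-2}n$. Everything else reduces to the now-standard covariance-concentration technology for Gaussian vectors, together with the elementary PSD and $\bX_\pm$ decompositions used to pass from rank one to arbitrary matrices.
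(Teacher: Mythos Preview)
Your approach is correct but takes a longer road than the paper's. You argue by pointwise $\chi^2$ concentration followed by an $\epsilon'$-net over $S^{n-1}$ and a self-bounding step to obtain the uniform rank-one inequality. The paper instead writes $\|\cA(\bu\bu^*)\|_1 = \|\bZ\bu\|_2^2$, where $\bZ\in\R^{m\times n}$ has the $\bz_i$ as rows, so that \eqref{eq:A1} holding for all unit $\bu$ is literally $\sigma_{\min}^2(\bZ)/m \ge 1-\delta$ and $\sigma_{\max}^2(\bZ)/m \le 1+\delta$. It then invokes the sharp Gaussian singular-value deviation bounds $\P(\sigma_{\max}(\bZ)>\sqrt{m}+\sqrt{n}+t)\le e^{-t^2/2}$ and its lower-tail analogue with $t=\sqrt{m}\,\epsilon$; the relation $\delta/4=\epsilon+\epsilon^2$ then drops out of $(1\pm 2\epsilon)^2 = 1\pm 4\epsilon+4\epsilon^2$. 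This sidesteps any covering and makes the announced constants and the exact failure probability $2e^{-m\epsilon^2/2}$ automatic, whereas in your route the union bound picks up a factor $(3/\epsilon')^n$ that must be absorbed into the exponent, so you would obtain a bound of the same order but not the specific numbers stated in the lemma without additional work. Your extension to PSD $\bX$ via $\tfrac1m\|\cA(\bX)\|_1=\tr(\bX\bM)$ together with $|\tr(\bX(\bM-\bI))|\le \|\bM-\bI\|\,\|\bX\|_1$ is equivalent to, and arguably cleaner than, the paper's eigendecomposition argument $\|\cA(\bX)\|_1=\sum_j \lambda_j\|\cA(\bu_j\bu_j^*)\|_1$; the treatment of general symmetric $\bX$ via $\bX_+-\bX_-$ is the same as the paper's use of $\sum_j|\lambda_j|$.
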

\begin{proof} This lemma has an easy proof. Let $\bZ$ be the $m \times
  n$ matrix with $\bz_i$'s as rows. Then 
\[
\|\cA(\bu\bu^*)\|_1 = \sum_i |\<\bz_i, \bu\>|^2 = \|\bZ \bu\|^2 
\]
so that 
\[
\sigma^2_{\text{min}}(\bZ) \le \|\cA(\bu\bu^*)\|_1 \le
\sigma^2_{\text{max}}(\bZ). 
\]
The claim is a consequence of well-known deviations bounds concerning
the singular values of Gaussian random matrices \cite{VershyninRMT},
namely,
\begin{align*}
\P\left(\sigma_{\text{max}} (\bZ) > \sqrt{m} + \sqrt{n} + t\right)  & \le e^{- t^2/2}\\
\P\left(\sigma_{\text{min}}(\bZ) < \sqrt{m} - \sqrt{n} - t\right)  & \le e^{- t^2/2}.
\end{align*}
The conclusion follows from taking $m \ge \epsilon^{-2} \, n$ and $t =
\sqrt{m} \epsilon$. For the second part of the lemma, observe that $\bX
= \sum_j \lambda_j \bu_j \bu_j^*$ with nonnegative eigenvalues $\lambda_j$
so that
\[
\|\cA(\bX)\|_1 = \sum_j \sum_i \lambda_j |\<\bu_j,\bz_i\>|^2 = \sum_j
\lambda_j \|\cA(\bu_j \bu_j^*)\|_1. 
\]
The claim follows from \eqref{eq:A1}. The last claim is a consequence
of $\|\cA(\bX)\|_1 \le \sum_j \sum_i |\lambda_j| |\<\bu_j,\bz_i\>|^2$
together with $\sum_j |\lambda_j| = \|\bX\|_1$.
\end{proof}

% As a remark, we emphasize that a version of this lemma would still
% hold if the $z_i$'s had sub-Gaussian entries.

Our next result is concerned with the mapping of rank-2 matrices.
\begin{lemma}
  \label{lem:A2}
  Fix $\delta > 0$. Then there are positive numerical constants $c_0$
  and $\gamma_0$ such that if $m \ge c_0 \, [\delta^{-2} \log
  \delta^{-1}] \, n$, $\cA$ obeys the following property with
  probability at least $1 - 3 e^{-\gamma_0 m\delta^2}$:
% with probability at least
%  \[
%  1- 2e^{-\frac{\epsilon_{1}^2 m}{2}} - e^{-nd} 
%  \]
%  where $\epsilon_1$ and $d$ are positive constants which depend on
%  $\delta$ and $\delta_1$, $\cA$ has the following property:
  for any symmetric rank-2 matrix $\bX$,
\begin{equation}
\label{eq:A2b}
\frac{1}{m} \|\cA(\bX)\|_1 \ge 0.94 (1-\delta) \|\bX\|.  
\end{equation}
\end{lemma}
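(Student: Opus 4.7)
My plan is to reduce the uniform statement to a pointwise concentration inequality for $|G_1^2+\beta G_2^2|$ with $G_1,G_2$ iid standard normal, and then upgrade it to a uniform bound by a covering argument on the rank-2 unit sphere in operator norm.

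\emph{Reduction and expectation.} Any symmetric rank-$\le 2$ matrix with $\|\bX\|=1$ can, after a global sign flip, be written as $\bu_1\bu_1^*+\beta\,\bu_2\bu_2^*$ with orthonormal $\bu_1,\bu_2$ and $\beta\in[-1,1]$. Because $\bz_i\sim\mathcal{N}(0,I)$ is rotationally invariant, the summands $|\bz_i^*\bX\bz_i|$ are, for a fixed $\bX$, iid copies of $|G_1^2+\beta G_2^2|$. The first job is to show $\mu(\beta):=\E|G_1^2+\beta G_2^2|\ge 0.94$ for all $\beta\in[-1,1]$. For $\beta\ge 0$ this is immediate ($\mu(\beta)=1+\beta$); for $\beta=-t\in[-1,0)$, passing to polar coordinates $(G_1,G_2)=r(\cos\theta,\sin\theta)$ and using $\cos^2\theta-t\sin^2\theta=\tfrac{1-t}{2}+\tfrac{1+t}{2}\cos 2\theta$ reduces the integral to $\frac{1}{\pi}\int_0^{2\pi}|a+b\cos\phi|\,d\phi$ with $a=(1-t)/2,\ b=(1+t)/2$, which is evaluated in closed form and is checked to have minimum $\approx 0.94$ in the interior ($t\approx 0.2$).

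\emph{Pointwise concentration and covering.} For a fixed $\bX$, the variables $|\bz_i^*\bX\bz_i|$ are sub-exponential with Orlicz norm $O(1)$, so Bernstein's inequality gives $\P(m^{-1}\|\cA(\bX)\|_1<(1-\delta)\mu(\beta))\le 2e^{-c m\delta^2}$. To make this uniform I cover the set of rank-$\le 2$ symmetric matrices with $\|\bX\|\le 1$ by an operator-norm $\eta$-net $\mathcal{N}_\eta$ of cardinality at most $(C/\eta)^{Cn}$ (the set is parametrized by a Stiefel pair together with a bounded scalar). The functional $\bX\mapsto m^{-1}\|\cA(\bX)\|_1$ is Lipschitz in operator norm with constant $m^{-1}\sum_i\|\bz_i\|_2^2$, which concentrates around $n$; on the high-probability event where this average is at most $2n$, choosing $\eta\asymp\delta/n$ makes the discretization error negligible, and a union bound over $\mathcal{N}_\eta$ yields the claim provided $m\gtrsim\delta^{-2}\,n\log(n/\delta)$.

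\emph{Main obstacle.} The delicate point is the tightness of the constant $0.94$: since $\inf_{\beta\in[-1,1]}\mu(\beta)$ is itself $\approx 0.94$, the only slack to absorb concentration and covering errors comes from the $(1-\delta)$ factor, forcing a multiplicative rather than additive Bernstein bound and forbidding a coarse net. A secondary issue is sharpening the sample size from the $n\log(n/\delta)/\delta^2$ that a naive net delivers to the stated $n\log(\delta^{-1})/\delta^2$; I would attempt this by exploiting rotational invariance to cover only the scalar $\beta$, handling the orthonormal pair $(\bu_1,\bu_2)$ through the rotational symmetry of the Gaussian law rather than via an explicit net on the Stiefel manifold.
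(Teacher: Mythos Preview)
Your overall architecture --- compute $\mu(\beta)=\E|G_1^2+\beta G_2^2|\ge 0.94$, apply sub-exponential Bernstein pointwise, then cover --- is exactly the paper's. Where you diverge is in the covering step, and your proposed fix for the sample-size gap does not work.

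The idea of ``covering only the scalar $\beta$ and handling the orthonormal pair $(\bu_1,\bu_2)$ through rotational symmetry'' confuses distributional invariance with a uniform bound. Rotational invariance tells you that for each \emph{fixed} pair $(\bu_1,\bu_2)$, the law of $m^{-1}\|\cA(\bu_1\bu_1^*+\beta\bu_2\bu_2^*)\|_1$ is the same as for the canonical pair $(\be_1,\be_2)$; it does \emph{not} let you conclude that a single realization of $\{\bz_i\}$ satisfies the lower bound simultaneously for all pairs. Different $(\bu_1,\bu_2)$ would require different rotations of the $\bz_i$'s, but the $\bz_i$'s are fixed once drawn. You genuinely have to cover the Stiefel part, and so the question is how to avoid the $n$-dependent Lipschitz constant.

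The paper's device is simple and you have it sitting right there: use the already-proved \emph{upper} RIP-1 bound (Lemma~\ref{lem:A1}) as your Lipschitz control instead of the crude $m^{-1}\sum_i\|\bz_i\|_2^2\asymp n$. On the event $E_1=\{m^{-1}\|\cA(\bX)\|_1\le(1+\delta_1)\|\bX\|_1\text{ for all }\bX\}$ one has
\[
m^{-1}\bigl|\,\|\cA(\bX)\|_1-\|\cA(\bX_0)\|_1\,\bigr|\le m^{-1}\|\cA(\bX-\bX_0)\|_1\le(1+\delta_1)\|\bX-\bX_0\|_1,
\]
and since $\bX-\bX_0$ has rank at most $4$, $\|\bX-\bX_0\|_1\le 4\|\bX-\bX_0\|$. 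This gives an $O(1)$ Lipschitz constant, so an $\epsilon$-net with $\epsilon\asymp\delta$ suffices, $|\mathcal{N}_\epsilon|\le(C/\delta)^{2n+1}$, and the union bound only costs $n\log(1/\delta)$ --- exactly the stated sample size $m\gtrsim\delta^{-2}\log(\delta^{-1})\,n$. The paper builds the net directly on triples $(\bu_1,\bu_2,t)\in\mathcal S_\epsilon\times\mathcal S_\epsilon\times\mathcal T_\epsilon$ and checks $\|\bX-\bX_0\|_1\le 9\epsilon$ elementarily.
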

\begin{proof} 
  By homogeneity, it suffices to consider the case where $\|\bX\| =
  1$. Consider then a rank-2 matrix $\bX$ with eigenvalue decomposition
  $X = \bu_{1}\bu_{1}^*-t\bu_{2}\bu_{2}^*$ with $t\in \left[ -1, 1 \right]$
  and orthonormal $\bu_i$'s. Note that for $t \le 0$, Lemma \ref{lem:A1}
  already claims a tighter lower bound so it only suffices to consider
  $t \in \left[0, 1 \right]$. We have
\[
\frac1m \|\mathcal{A}(\bX)\|_{1} = \frac1m \sum_{i=1}^m \Bigl\vert
|\<\bu_1,\bz_i\>|^2 - t |\<\bu_2,\bz_i\>|^2 \Bigr\vert = \frac1m \sum_i \xi_i, 
\]
where the $\xi_i$'s are independent copies of the random variable 
\[
\xi = |Z_1^2 - t Z_2^2|
\]
in which $Z_1$ and $Z_2$ are independent standard normal
variables. This comes from the fact that $\<\bu_1,\bz_i\>$ and $\<\bu_2,
\bz_i\>$ are independent standard normal. We calculate below that
\begin{equation}
\label{eq:f}
\E \xi = f(t) = \frac{2}{\pi} \Bigl(2\sqrt{t} + (1-t)(\pi/2 - 2 \arctan(\sqrt{t}))\Bigr). 
\end{equation}
The graph of this function is shown in Figure \ref{fig:f}; we check
that $f(t) \ge 0.94$ for all $t \in [0,1]$.
\begin{figure}
\centering
    \includegraphics[width=0.49\textwidth]{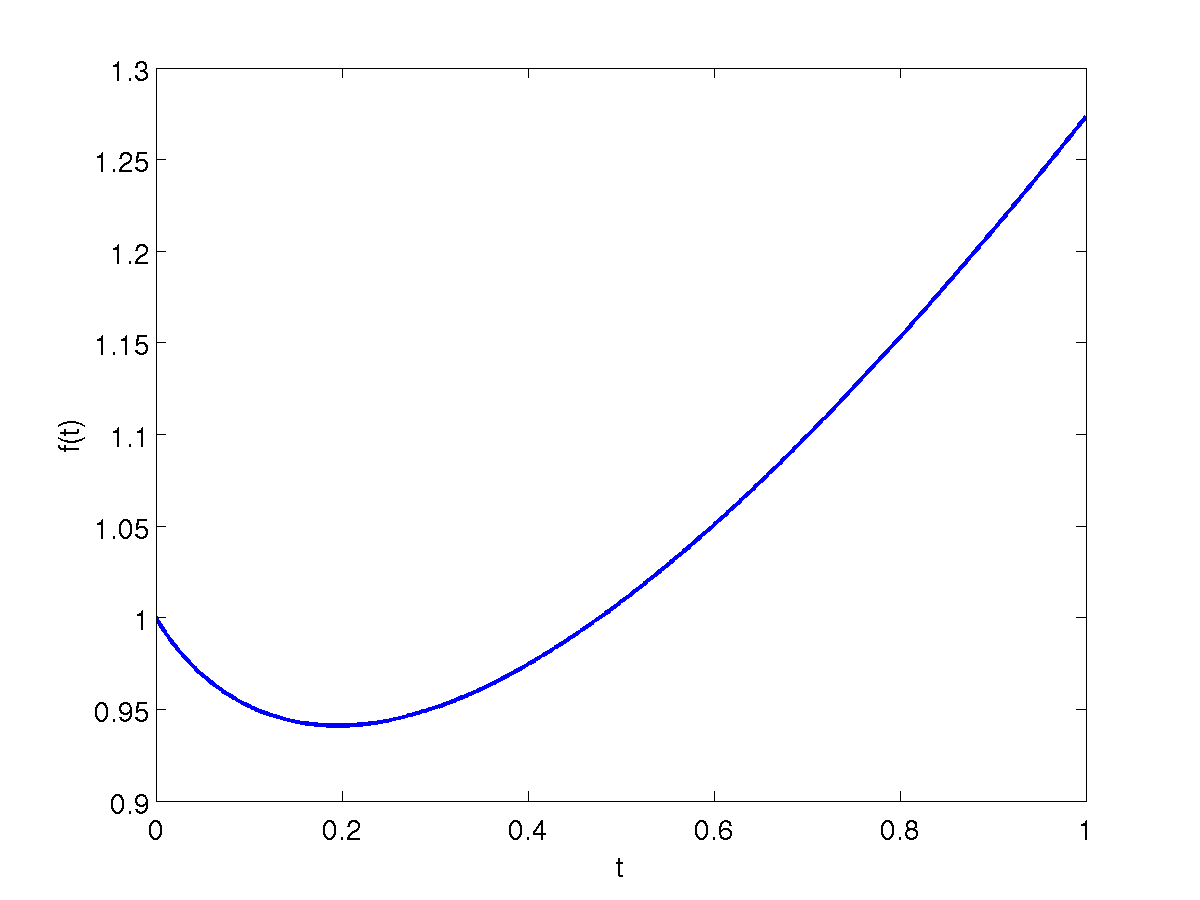}
    \caption{$f(t) = \E|Z_1^2 - t Z_2^2|$ as a function of $t$.}
\label{fig:f}
\end{figure}

We now need a deviation bound concerning the fluctuation of $m^{-1}
\sum_i \xi_i$ around its mean and this is achieved by classical
Chernoff bounds.  Note that $\xi \le Z_1^2 + |t| Z_2^2$ is a
sub-exponential variable and thus, $\|\xi\|_{\psi_1} := \sup_{p\geq 1}
\, [\E |\xi|^p]^{1/p}$ is finite.\footnote{It would be possible to
  compute a bound on this quantity but we will not pursue this at the
  moment.}
\begin{lemma}[Bernstein-type inequality \cite{VershyninRMT}]
\label{lem:bernsteintype}
  Let $X_1,\ldots, X_m$ be i.i.d.~sub-exponential random
  variables. Then
\[
\P\Bigl(\Bigl\vert \frac{1}{m} \sum_{i = 1}^m X_i - \E X_1\Bigr\vert
\ge \epsilon\Bigr) \leq 2 \exp\Bigl[ - c_0 \, m
\min\Bigl(\frac{\epsilon^2}{\|X\|^2_{\psi_1}},
\frac{\epsilon}{\|X\|_{\psi_1}}\Bigr)\Bigr]
\]
in which $c_0$ is a positive numerical constant. 
\end{lemma}

We have thus established that for a fixed $X$, 
\[m^{-1} \|\cA(\bX)\|_1 \ge (0.94 - \epsilon_0) \|\bX\|
\]
with probability at least $1 - 2e^{-\gamma_0 m \epsilon_0^2}$
(provided $\epsilon_0 \le \|\xi\|_{\psi_1}$, which we assume). 

To complete the argument, let $\mathcal{S}_{\epsilon}$ be an
$\epsilon$ net of the unit sphere, $\mathcal{T}_{\epsilon}$ be an
$\epsilon$ net of $\left[0,1\right]$, and set $$ \mathcal{N}_\epsilon =
\{\bX = \bu_{1}\bu_{1}^* -t \bu_{2} \bu_{2}^* : (\bu_{1},\bu_{2},t) \in
\mathcal{S}_{\epsilon} \times \mathcal{S}_{\epsilon} \times
\mathcal{T}_{\epsilon} \}.$$ Since $|\mathcal{S}_{\epsilon}| \le
(3/\epsilon)^n$, we have
\[
|\mathcal{N}_\epsilon| \le (3/\epsilon)^{2n+1}. 
\]
Now for any $\bX = \bu \bu ^* - t \bv\bv^*$, consider the approximation $\bX_0 =
\bu_0 \bu_0^* - t_0 \bv_0 \bv_0^* \in \mathcal{N}_\epsilon$, where $\|\bu_0 -
\bu\|_2$, $\|\bv - \bv_0\|_2$ and $|t - t_0|$ are each at most
$\epsilon$. We claim that 
\begin{equation}
  \label{eq:9eps}
  \|\bX-\bX_0\|_1 \le 9 \epsilon,  
\end{equation}
and postpone the short proof. On the intersection of $E_1 = \{m^{-1}
\|\cA(\bX)\|_1 \le (1+\delta_1)\|X\|_1, \text{ for all $\bX$}\}$ with $E_2
:= \{m^{-1}\|\cA(\bX_0)\|_1 \geq (0.94-\epsilon)\|\bX_0\|, \text{ for all
} \bX_0 \in \mathcal{N}_\epsilon \}$,
\begin{align*}
  m^{-1} \|\cA(\bX)\|_1 & \ge \|\cA(\bX_0)\|_1 - \|\cA(\bX-\bX_0)\|_1\\
  &\ge (0.94-\epsilon)\|\bX_0\| - 9(1+\delta_1)\epsilon\\
  & \ge (0.94-\epsilon)(\|\bX\|-\|\bX_0-\bX\|) -9(1+\delta_1)\epsilon\\
  & \ge (0.94-\epsilon)(1-5\epsilon) -9(1+\delta_1)\epsilon\\
  & \ge 0.94 - (15+9\delta_1)\epsilon,
\end{align*}
which is the desired bound by setting $0.94 \delta =
(15+9\delta_1)\epsilon$. In conclusion, set $\delta_1 = 1/2$ and take
$\epsilon = 0.94 \delta/20$. Then $E_1$ holds with probability at
least $1 - O(e^{-\gamma_1 m \epsilon^2})$ provided $m$ obeys the
condition of the theorem. Further, Lemma \ref{lem:A2} states that
$E_2$ holds with probability at least $1 - 2e^{-\gamma_2 m}$. This
concludes the proof provided we check \eqref{eq:9eps}.

We begin with 
\[
\|\bX-\bX_0\|_1 \le \|\bu\bu^* - \bu_0 \bu_0^*\|_1 + |t-t_0| \|\bv\bv^*\|_1 + |t_0|\|
\bv\bv^* - \bv_0\bv_0^*\|_1.
\]
Now 
\[
\|\bu\bu^* - \bu_0\bu_0^*\|_1 \le 2 \|\bu\bu^* - \bu_0\bu_0^*\| \le 4\|\bu-\bu_0\|_2, 
\]
where the first inequality follows from the fact that $\bu\bu^* -
\bu_0\bu_0^*$ is of rank at most 2, and the second follows from 
\begin{align*}
  \|\bu\bu^* - \bu_0\bu_0^*\| & = \sup_{\|\bx\|_2 = 1} \, \Bigl\vert \<\bu_0,\bx\>^2 -
  \<\bu,\bx\>^2\Bigr\vert \\
  & = \sup_{\|\bx\|_2 = 1} \, \Bigl\vert \<\bu - \bu_0,\bx\>
  \<\bu+\bu_0,\bx\>\Bigr\vert \le \|\bu - \bu_0\|_2 \|\bu + \bu_0\|_2 \le 2 \|\bu -
  \bu_0\|_2.
\end{align*}
Similarly, $\|\bv\bv^* - \bv_0\bv_0^*\|_1 \le 4\epsilon$ and this concludes
the proof.\footnote{The careful reader will remark that we have also
  used $\|\bX-\bX_0\| \le 5\epsilon$, which also follows from our
  calculations.}
\end{proof}

\begin{lemma}
  Let $Z_1$ and $Z_2$ be independent $\mathcal{N}(0,1)$ variables and
  $t \in [0,1]$. We have
\[
E |Z_1^2 - t Z_2^2| = f(t), 
\] 
where $f(t)$ is given by \eqref{eq:f}.
\end{lemma}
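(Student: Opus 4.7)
The plan is to exploit the rotational invariance of $(Z_1,Z_2)$ by passing to polar coordinates, which decouples the magnitude of $Z_1^2-tZ_2^2$ into an easy radial second moment and a one-dimensional angular integral. The angular integral will split into two pieces according to the sign of $\cos^2\theta-t\sin^2\theta$, and once these are evaluated the trigonometric simplifications will produce the closed form $f(t)$.

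Concretely, I would write $Z_1=R\cos\theta$, $Z_2=R\sin\theta$, where $R\geq 0$ has density $re^{-r^2/2}$ and $\theta$ is uniform on $[0,2\pi)$ and independent of $R$. Substituting into the integral representation of the expectation gives
\[
\E|Z_1^2 - tZ_2^2| \;=\; \frac{1}{2\pi}\int_0^{2\pi}\!\!\int_0^\infty r^3 e^{-r^2/2}\,\bigl|\cos^2\theta - t\sin^2\theta\bigr|\,dr\,d\theta.
\]
The radial integral is $\int_0^\infty r^3 e^{-r^2/2}\,dr = 2$ (substitute $u=r^2/2$), and by the fourfold symmetry of $|\cos^2\theta-t\sin^2\theta|$ in $\theta$ the angular integral over $[0,2\pi)$ is four times the one over $[0,\pi/2]$. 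Hence
\[
\E|Z_1^2 - tZ_2^2| \;=\; \frac{4}{\pi}\int_0^{\pi/2}\bigl|\cos^2\theta - t\sin^2\theta\bigr|\,d\theta.
\]

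For the angular integral I would use the double-angle identity
\[
\cos^2\theta - t\sin^2\theta \;=\; \frac{1-t}{2} + \frac{1+t}{2}\cos 2\theta,
\]
whose antiderivative is $G(\theta)=\tfrac{1-t}{2}\theta+\tfrac{1+t}{4}\sin 2\theta$. On $t\in[0,1]$ the integrand vanishes exactly at $\theta_0=\arctan(1/\sqrt{t})\in[\pi/4,\pi/2]$, is positive for $\theta<\theta_0$ and negative for $\theta>\theta_0$. Thus
\[
\int_0^{\pi/2}\!\!|\cos^2\theta-t\sin^2\theta|\,d\theta \;=\; 2G(\theta_0) - G(\pi/2) - G(0) \;=\; (1-t)\theta_0 + \tfrac{1+t}{2}\sin 2\theta_0 - \tfrac{(1-t)\pi}{4}.
\]
From $\tan\theta_0 = 1/\sqrt{t}$ one reads off $\sin 2\theta_0 = 2\sqrt{t}/(1+t)$, which collapses the middle term to $\sqrt{t}$. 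Using $\arctan(1/\sqrt{t}) = \pi/2 - \arctan(\sqrt{t})$ converts the remaining terms to $(1-t)\bigl(\pi/4 - \arctan(\sqrt{t})\bigr)+\sqrt{t}$. Multiplying by $4/\pi$ yields exactly
\[
\frac{2}{\pi}\Bigl(2\sqrt{t} + (1-t)\bigl(\pi/2 - 2\arctan(\sqrt{t})\bigr)\Bigr) = f(t).
\]

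The only real subtleties I anticipate are bookkeeping: identifying the correct sign-change location $\theta_0$, keeping track of orientation when writing $\int|g| = 2G(\theta_0)-G(\pi/2)-G(0)$, and converting $\arctan(1/\sqrt{t})$ to the $\arctan(\sqrt{t})$ form in which $f(t)$ is expressed; these are routine but where an arithmetic slip would cost an annoying constant. Everything else is a standard change of variables.
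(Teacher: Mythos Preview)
Your proof is correct and follows essentially the same route as the paper: pass to polar coordinates, factor out the radial second moment, rewrite the angular integrand via the double-angle identity, and split at the unique sign change. The only cosmetic difference is that the paper factors out $(1+t)/2$, sets $\rho=(1-t)/(1+t)$, and parametrizes the splitting angle by $\cos\theta=\rho$ (equivalently $\theta=2\arctan\sqrt{t}$), whereas you work directly on $[0,\pi/2]$ with $\theta_0=\arctan(1/\sqrt{t})$; these are the same computation in different coordinates.
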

\begin{proof}
  Set
\[
\rho = \frac{1-t}{1+t} \text{ and } \cos
\theta = \rho
\]
in which $\theta \in [0, \pi/2]$.  By using polar coordinates, we have
\begin{align*}
  \E |Z_1^2 - t Z_2^2| & = \frac{1}{2\pi} \int_0^\infty r^3 e^{-r^2/2} \, dr \, \int_0^{2\pi} | \cos^2 \phi - t \sin^2\phi| \, d\phi\\
& =   \frac{1}{\pi} \, \int_0^{2\pi} | \cos^2 \phi - t \sin^2\phi| \, d\phi\\
&  =   \frac{2}{\pi} \, \int_0^{\pi} | \cos^2 \phi - t \sin^2\phi| \, d\phi
\end{align*}
Now using the identities $\cos^2 \phi = (1+\cos 2\phi)/2$ and
$\sin^2\phi = (1-\cos 2\phi)/2$, we have
\begin{align*}
  \E |Z_1^2 - t Z_2^2| & = \frac{1+t}{\pi} \, \int_0^\pi |\cos 2\phi +
  \rho| \, d\phi\\
  & = \frac{1+t}{2\pi} \, \int_0^{2\pi} |\cos \phi +
  \rho| \, d\phi\\
  & = \frac{1+t}{\pi} \, \int_0^{\pi} |\cos \phi +
  \rho| \, d\phi\\
  & = \frac{1+t}{\pi} \, \int_0^{\pi} |\rho - \cos \phi|
  \, d\phi\\
  & = \frac{1+t}{\pi} \,\Bigl[ \int_0^\theta \cos \phi - \rho \, d\phi
  + \int_\theta^\pi \rho - \cos \phi \, d\phi \Bigr]\\
  & = \frac{2}{\pi} (1+t) [\sin \theta + \rho (\pi/2 - \theta)].
\end{align*}
We recognize $\eqref{eq:f}$. 
\end{proof}

\section{Dual Certificates} 
\label{sec:dual}

To prove our main theorem, it remains to show that one can construct
an inexact dual certificate $\bY$ obeying the conditions of Lemma
\ref{lem:crucial}.

\subsection{Preliminaries}

\newcommand{\cS}{\mathcal{S}}

The linear mapping $\cA^*\cA$ is of the form\footnote{For symmetric
  matrices, $\mtx{A} \otimes \mtx{B}$ is the linear mapping $\bH
  \mapsto \<\mtx{A}, \bH\> \mtx{B}$.}
\[
\cA^* \cA = \sum_{i =1}^m \bz_i \bz_i^* \otimes \bz_i \bz_i^*, 
\]
which is another way to express that $\cA^*\cA(\bX) = \sum_i \<\bz_i
\bz_i^*, \bX\> \bz_i \bz_i^*$.  Now observe the simple identity:
\begin{equation}
  \label{eq:EAtA}
  \E  [\bz_i \bz_i^* \otimes \bz_i \bz_i^*] = 
  2 \mathcal{I} + \bI_n \otimes \bI_n := \mathcal{S},
\end{equation}
where $\mathcal{I}$ is the identity operator and $\bI_n$ the
$n$-dimensional identity matrix.  Put differently, this means that for
all $\bX$,
\[
\cS(\bX) = 2\bX + \tr(\bX) \bI. 
\]
The proof is a simple calculation and omitted. It is also not hard to
see that the mapping $\cS$ is invertible and its inverse is given by
\[
\cS^{-1} = \frac12\Bigl(\mathcal{I} - \frac{1}{n+2} \bI_n \otimes \bI_n\Bigr)
\quad \Leftrightarrow \quad \cS^{-1}(\bX) = \frac12 \Bigl(\bX -
\frac{1}{n+2} \tr(\bX) \bI_n\Bigr). 
\]
We will use this object in the definition of our dual certificate. 

\subsection{Construction}

For pedagogical reasons, we first introduce a possible candidate
certificate defined by
\begin{equation}
  \label{eq:certificate}
  \bar{\bY} := \frac{1}{m} \cA^* \cA \cS^{-1}(\be_1 \be_1^*). 
\end{equation}
Clearly, $\bar{\bY}$ is in the range of $\cA^*$ as required. To justify this
choice, the law of large numbers gives that in the limit of infinitely
many samples, 
\[
\lim_{m \goto \infty} \frac{1}{m} \sum_i (\bz_i \bz_i^* \otimes \bz_i \bz_i^*)
\cS^{-1}(\be_1 \be_1^*) = \E (\bz_i \bz_i^* \otimes \bz_i \bz_i^*) \cS^{-1}(\be_1
\be_1^*) = \be_1 \be_1^*.
\]
In other words, in the limit of large samples, we have a perfect
certificate since $\bar{\bY}_T = \be_1 \be_1^*$ and $\bar{\bY}_\Tp =
0$. Our hope is that the sample average is sufficiently close to the
population average so that one can check \eqref{eq:dualcertif}. In
order to show that this is the case, it will be useful to think of
$\bar{\bY}$ \eqref{eq:certificate} as the random sum
\[
\bar{\bY} = \frac{1}{m} \sum_i \bY_i, 
\]
where each matrix $\bY_i$ is an independent copy of the random matrix 
\[
\frac12 \Bigl[z_1^2 - \frac{1}{n+2} \|\bz\|_2^2\Bigr] \bz \bz^*
\]
in which $\bz = (z_1, \ldots, z_n) \sim \mathcal{N}(0,I)$.

We would like to make an important point before continuing. We have
seen that all we need from $\bar{\bY}$ is
\[
\|\bar{\bY}_T - \be_1 \be_1^*\|_2 \le 1/3
\] 
(and $\|\bar{\bY}_\Tp\|\le 1/2$).  This is in stark contrast with David Gross'
approach \cite{Gross09} which requires a very small misfit, i.e.~an
error of at most $1/n^2$. In turn, this loose bound has an enormous
implication: it eliminates the need for the golfing scheme and allows
for the simple certificate candidate \eqref{eq:certificate}.  In fact,
our certificate can be seen as the first iteration of Gross' golfing
scheme.

\subsection{Truncation}

For technical reasons, it is easier to work with a truncated version
of $\bar{\bY}$ and our dual certificate is taken to be
\begin{equation}
\label{eq:truncated}
{\bY} = \frac{1}{m} \sum_i \bY_i \, 1_{E_i}, 
\end{equation}
where the $\bY_i$'s are as before and $1_{E_i}$ are independent copies
of $1_E$ with 
\[
E = \{|z_1| \le \sqrt{2\beta \log n}\} \, \cap \{\|\bz\|_2 \le
\sqrt{3n}\}.
\]
We shall work with $\beta = 3$ so that $|z_1| \le \sqrt{6 \log n}$.

\begin{lemma}
\label{lem:YT}
Let ${\bY}$ be as in~\eqref{eq:truncated}. Then
\begin{equation}
\label{YTbound}
\P \Bigl( \| {\bY}_T - \be_1 \be_1^*\|_2 
\ge \frac{1}{3} \Bigr) \le 2\exp \Bigl(-\gamma \frac{m}{n}\Bigr),  
\end{equation}
where $\gamma >0$ is an absolute constant. This holds with the proviso
that $m \ge c_1 \, n$ for some numerical constant $c_1 > 0$, and that
$n$ is sufficiently large.
\end{lemma}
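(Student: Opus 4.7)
The plan is to reduce the Frobenius bound on $\bY_T - \be_1\be_1^*$ to an $\ell_2$ bound on the vector $\bY\be_1 - \be_1 \in \R^n$, and then apply a vector Bernstein inequality to the i.i.d.\ sum defining $\bY\be_1$. For any symmetric $\bX$, the projection onto $T = T_{\be_1}$ admits the closed form
\[
\PT(\bX) = \be_1(\bX\be_1)^* + (\bX\be_1)\be_1^* - X_{11}\, \be_1\be_1^*,
\]
and a direct Frobenius-norm expansion yields the identity
\[
\|\bY_T - \be_1\be_1^*\|_2^2 \;=\; 2\|\bY\be_1 - \be_1\|_2^2 - (Y_{11}-1)^2 \;\le\; 2\|\bY\be_1 - \be_1\|_2^2.
\]
Hence it suffices to prove $\|\bY\be_1 - \be_1\|_2 \le 1/(3\sqrt{2})$ with the target probability.

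Write $\bY\be_1 = \tfrac{1}{m}\sum_{i=1}^m \bv_i$, where $\bv_i := \tfrac12\bigl(z_{i,1}^2 - \|\bz_i\|_2^2/(n+2)\bigr)\, z_{i,1}\, \bz_i \cdot 1_{E_i}$ are i.i.d.\ random vectors. I split
\[
\bY\be_1 - \be_1 \;=\; \tfrac1m \textstyle\sum_i (\bv_i - \E\bv_i) \;+\; (\E\bv_i - \be_1)
\]
into a fluctuation and a truncation bias. The population identity $\E[\tfrac12(z_1^2 - \|\bz\|_2^2/(n+2))\, z_1\, \bz] = \be_1$ underlying the choice of $\bar{\bY}$ in Section~\ref{sec:dual} shows the bias equals $-\E[\tfrac12(z_1^2 - \|\bz\|_2^2/(n+2))\, z_1\, \bz\, 1_{E^c}]$. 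Cauchy--Schwarz, together with $\P(E^c)\le 2n^{-3} + e^{-cn}$ (from $\beta = 3$ and standard $\chi^2$ concentration) and the moment bound discussed below, gives $\|\E\bv_i - \be_1\|_2 = O(1/n)$, negligible for $n$ large.

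For the fluctuation I would invoke a Pinelis-type vector Bernstein inequality, which asks for an almost-sure bound $\|\bv_i\|_2 \le K$ and a second-moment bound $\E\|\bv_i\|_2^2 \le \sigma^2$. On $E_i$ we have $|z_{i,1}|\le\sqrt{6\log n}$ and $\|\bz_i\|_2\le\sqrt{3n}$, so $K = O\bigl(\sqrt{n}\,(\log n)^{3/2}\bigr)$. For the variance, expanding $\|\bv_i\|_2^2$ produces three Gaussian-moment terms with leading behaviours $\E[z_1^6\|\bz\|_2^2]\sim 15n$, $\E[z_1^4\|\bz\|_2^4]/(n+2) = O(n)$, and $\E[z_1^2\|\bz\|_2^6]/(n+2)^2 = O(n)$, so $\sigma^2 = O(n)$. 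Vector Bernstein then yields
\[
\P\!\Bigl(\bigl\|\tfrac1m \textstyle\sum_i (\bv_i - \E\bv_i)\bigr\|_2 \ge t\Bigr) \;\le\; 2\exp\!\Bigl(-\tfrac{c\, m\, t^2}{\sigma^2 + K\, t}\Bigr),
\]
and taking $t = 1/(6\sqrt{2})$, a fixed constant, makes the denominator $O(n)$ for $n$ large (the variance dominates the bounded-contribution term), producing the target probability $2\exp(-\gamma m/n)$.

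The main obstacle is establishing $\E\|\bv_i\|_2^2 \lesssim n$ rather than the trivial bound $K^2 = n\,(\log n)^3$, since the latter would yield only $\exp(-\gamma m/(n\log^3 n))$. The required cancellation is the second-moment counterpart of the first-moment identity $\E\bar{\bY} = \be_1\be_1^*$: the very factor $z_1^2 - \|\bz\|_2^2/(n+2)$ built into $\cS^{-1}(\be_1\be_1^*)$ annihilates the would-be leading $n^2$ and $n^3$ contributions in the Gaussian moment expansion, leaving a variance of order exactly $n$. This cancellation is what allows a single ``one-shot'' certificate \eqref{eq:certificate} to deliver the $\exp(-\gamma m/n)$ bound on the $T$-component, without any golfing.
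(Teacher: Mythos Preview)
Your proposal is correct and follows essentially the same route as the paper: reduce to bounding $\|\bY\be_1-\be_1\|_2$ via the identity $\|\bX\|_2\le\sqrt{2}\|\bX\be_1\|_2$ for $\bX\in T$, split into a deterministic truncation bias (controlled by Cauchy--Schwarz using $\P(E^c)\lesssim n^{-3}$) plus a centered fluctuation, and apply vector Bernstein with $K=O(\sqrt{n}(\log n)^{3/2})$ and $\sigma^2=O(n)$; the paper does exactly this, differing only cosmetically in that it peels off the scalar term $\tfrac1m\sum_i\be_1 1_{E_i^c}$ before centering. One minor remark: your closing narrative about a ``cancellation'' is not quite the right story---each of the three moment terms you list is already $O(n)$ on its own (the $1/(n{+}2)$ factors that accompany $\E[z_1^4\|\bz\|_2^4]$ and $\E[z_1^2\|\bz\|_2^6]$ come built into $\xi$, so nothing needs to cancel against the lead term $\E[z_1^6\|\bz\|_2^2]\sim15n$), but this is commentary and does not affect the argument.
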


\begin{lemma}
\label{lem:YTp}
Let ${\bY}$ be as in~\eqref{eq:truncated}. Then
\begin{equation}
\label{YTpbound}
\P \Bigl( \| {\bY}_\Tp \| \ge \frac{1}{2} \Bigr) \le 
4 \exp \Big(-\gamma \frac{m}{\log n}\Big). 
\end{equation}
where $\gamma > 0$ is an absolute constant. This holds with the
proviso that $m \ge c_1 \, n \log n$ for some numerical constant $c_1
> 0$, and that $n$ is sufficiently large.
\end{lemma}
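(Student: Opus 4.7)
My plan is to decompose $\bY_{T^\perp}$ as an i.i.d.\ sum of essentially mean-zero rank-$1$ matrices, reduce the operator norm to a supremum of scalar quadratic-form sums over an $\epsilon$-net of the unit sphere, and then invoke the sub-exponential Bernstein inequality of Lemma~\ref{lem:bernsteintype} together with a union bound. Since $T = T_{\be_1}$ consists of symmetric matrices supported on the first row and column, the projector $\mathcal{P}_{T^\perp}$ zeros out the first row and column. Writing $\bz_i = (z_{i,1},\tilde\bz_i^\top)^\top$ with $\tilde\bz_i \in \R^{n-1}$,
\[
\bY_{T^\perp} \;=\; \frac{1}{m}\sum_{i=1}^m w_i\,\tilde\bz_i\tilde\bz_i^*\,1_{E_i}, \qquad w_i = \tfrac{1}{2}\Bigl(z_{i,1}^2 - \tfrac{\|\bz_i\|_2^2}{n+2}\Bigr),
\]
viewed as an $(n-1)\times(n-1)$ matrix embedded in $\R^{n\times n}$. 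Symmetry gives $\|\bY_{T^\perp}\| = \sup_{\bv\in S^{n-2}}|\bv^*\bY_{T^\perp}\bv|$.

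The centering is essentially free: by the identity $\E\bY_i = \be_1\be_1^*$ used in the construction of $\bar\bY$, the untruncated $\E[w\,\tilde\bz\tilde\bz^*]$ vanishes. Setting $\xi_i := w_i(\tilde\bz_i^*\bv)^2\,1_{E_i}$, Cauchy--Schwarz gives $|\E\xi_i| \le (\E[w^2(\tilde\bz^*\bv)^4])^{1/2}\,\P(E^c)^{1/2}$, where the first factor is $O(1)$ by a direct Gaussian moment computation and $\P(E^c) = O(n^{-3})$ from standard tails for $|z_1|$ and $\|\bz\|_2$ (the choice $\beta = 3$ in the truncation is tuned exactly for this). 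So $|\E\xi_i| = O(n^{-3/2})$ is negligible compared with $1$.

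For a fixed unit $\bv\in\R^{n-1}$, $\bv^*\bY_{T^\perp}\bv = m^{-1}\sum_i \xi_i$. Under the truncation $|w_i|\le 3\log n + O(1)$, while $(\tilde\bz_i^*\bv)^2 \sim \chi^2_1$ is sub-exponential with $\|(\tilde\bz_i^*\bv)^2\|_{\psi_1} = O(1)$; therefore
\[
\|\xi_i\|_{\psi_1} \;\le\; (3\log n + O(1))\,\|(\tilde\bz_i^*\bv)^2\|_{\psi_1} \;=\; O(\log n).
\]
Lemma~\ref{lem:bernsteintype} with $\epsilon = 1/8$ then yields $\P(|m^{-1}\sum \xi_i - \E\xi_1| > 1/8) \le 2\exp(-cm/\log n)$. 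Fix a $(1/4)$-net $\mathcal{N}\subset S^{n-2}$ with $|\mathcal{N}|\le 12^{n-1}$; the standard bound $\|\bY_{T^\perp}\| \le 2\max_{\bv\in\mathcal{N}}|\bv^*\bY_{T^\perp}\bv|$ for symmetric matrices combined with a union bound gives
\[
\P(\|\bY_{T^\perp}\| > 1/2) \;\le\; 12^{n-1}\cdot 2\exp(-cm/\log n) \;\le\; 4\exp(-\gamma m/\log n),
\]
provided $c_1$ is large enough that $cm/\log n \ge 2(n-1)\log 12$, i.e., $m\ge c_1 n\log n$.

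The delicate point is the Orlicz-norm estimate $\|\xi_i\|_{\psi_1}=O(\log n)$. A crude uniform bound $|\xi_i| \lesssim n\log n$, obtained from $(\tilde\bz_i^*\bv)^2\le\|\tilde\bz_i\|_2^2\le 3n$, would only produce a per-$\bv$ probability of $\exp(-m/(n\log n))$, which the $12^{n-1}$ metric entropy of the sphere would completely wipe out. Retaining the genuine sub-exponential (rather than bounded) character of the Gaussian quadratic form $(\tilde\bz_i^*\bv)^2$ is precisely what allows the rate $m/\log n$ to survive the union bound. The two-piece truncation event $E$ is designed for exactly this division of labor: the $|z_{i,1}|$-cap produces the $O(\log n)$ bound on $w_i$, while the $\|\bz\|_2$-cap is used only to make the bias $|\E\xi_i|$ of the previous step vanish.
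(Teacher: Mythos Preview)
Your overall architecture matches the paper's: write $\bY_{T^\perp}$ as an average of i.i.d.\ pieces, reduce the operator norm to a supremum over a $1/4$-net of unit vectors $\bv\perp\be_1$, and control each scalar sum by a Bernstein-type inequality. The handling of the truncation bias via Cauchy--Schwarz and $\P(E^c)=O(n^{-3})$ is also correct.

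There is, however, a genuine quantitative gap in the concentration step. You invoke Lemma~\ref{lem:bernsteintype} with $\|\xi_i\|_{\psi_1}=O(\log n)$ and $\epsilon=1/8$, and then claim an exponent of order $m/\log n$. But for fixed $\epsilon$ and $K:=\|\xi_i\|_{\psi_1}\asymp\log n$, one has $\epsilon<K$ and therefore
\[
\min\Bigl(\frac{\epsilon^2}{K^2},\frac{\epsilon}{K}\Bigr)=\frac{\epsilon^2}{K^2}\asymp\frac{1}{\log^2 n},
\]
so Lemma~\ref{lem:bernsteintype} only delivers $2\exp(-c\,m/\log^2 n)$. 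After the union bound over $|\mathcal N|\le 12^{\,n-1}$ points, you would need $m\gtrsim n\log^2 n$ rather than the stated $m\gtrsim n\log n$, and the final probability bound would be $\exp(-\gamma m/\log^2 n)$ rather than $\exp(-\gamma m/\log n)$. The $\psi_1$-only inequality cannot distinguish the $O(1)$ variance of $\xi_i$ from its $O(\log n)$ sub-exponential tail, and it is exactly this distinction that buys the missing logarithm.

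The paper recovers the sharper rate by splitting the weight as
\[
w=\tfrac12(z_1^2-1)+\tfrac12\Bigl(1-\tfrac{\|\bz\|_2^2}{n+2}\Bigr),
\]
producing $\bX^{(0)}+\bX^{(1)}$. For $\bX^{(0)}$ the key observation is that $z_1$ and $\langle\bz,\bu\rangle$ are \emph{independent} when $\bu\perp\be_1$, so moments factor; one then applies the moment-form Bernstein inequality (Theorem~\ref{teo:Bernstein}) with variance proxy $V=O(m)$ and scale $c_0=O(\log n)$, obtaining $\exp(-c\,m/\log n)$. For $\bX^{(1)}$ the truncated coefficient is uniformly $O(1)$, so both $V$ and $c_0$ are $O(1)$ and the bound is $\exp(-c\,m)$. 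If you prefer to avoid the split, you can instead apply Theorem~\ref{teo:Bernstein} directly to your $\xi_i$: a short calculation gives $\E\xi_i^2=O(1)$ while $|\xi_i|\le C(\log n)\,\langle\tilde\bz_i,\bv\rangle^2$ controls the higher moments, again yielding $V=O(m)$ and $c_0=O(\log n)$. Either route fixes the gap; the $\psi_1$-only Lemma~\ref{lem:bernsteintype} by itself does not.
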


\subsection{${\bY}$ on $T$ and proof of Lemma \ref{lem:YT}}

It is obvious that for any symmetric matrix $\bX \in T$,
\[
\|\bX\|_2 \le \sqrt{2} \|\bX \be_1\|_2 
\]
since only the first row and column are nonzero. We have 
\begin{equation}
\label{eq:split}
{\bY}_T \be_1 - \be_1 = \frac{1}{m} \sum_{i = 1}^m \by_i 1_{E_i} -
\frac{1}{m} \sum_{i=1}^m \be_1 \, 1_{E_i^c},
\end{equation}
where the $\by_i$'s are independent copies of the random vector
\begin{equation}
\label{eq:Xvector1}
\by = \frac12  \Bigl[z_1^2 - \frac{1}{n+2} \|\bz\|_2^2\Bigr] z_1 \,  \bz  -
\be_1 := (\xi z_1)\, \bz   - \be_1.  
\end{equation}

We claim that 
\[
\Bigr\Vert\frac{1}{m} \sum_{i=1}^m \be_1 \, 1_{E_i^c}\Bigr\Vert_2 \le 1/9,
\]
with probability at least $1 - 2 e^{-\gamma m}$ for some $\gamma >
0$. This is a simple application of Bernstein's inequality. Set
$\pi(\beta) = \P(E_i^c)$ and observe that 
\begin{equation}
\label{eq:truncation1}
\pi(\beta) = \P(|z_1| \ge \sqrt{2\beta \log n}) +
\P(\|\bz\|^2_2 \ge 3n) \le n^{-\beta} +
e^{-\frac{n}{3}}.  
\end{equation}
The right-hand side follows from $\P(|z_1| \ge t) \le e^{-t^2/2}$
which holds for $t \ge 1$ and from $\P(\|\bz\|_2^2 \ge 3n) \le
e^{-n/3}$. In turn, this last bound follows from
\[
\P(\|\bz\|^2_2 - n \ge \sqrt{2n} t + t^2) \le e^{-t^2/2}.
\]
Returning to Bernstein, this gives 
\[
\P\Bigl(\Bigl\vert\frac{1}{m} \sum_{i = 1}^m  1_{E_i^c} - \pi(\beta)
\Bigr\vert \ge t) \le 2\exp\Bigl(-\frac{mt^2}{2\pi(\beta) + 2 t/3}\Bigr). 
\]
Setting $t = 1/18$, $\beta = 3$ and taking $n$ large enough so that
$\pi(3) \le 1/18$ proves the claim.

The main task is to bound the $2$-norm of the sum $\sum_{i = 1}^m
\by_i 1_{E_i}$ and a convenient way to do this is via the vector
Bernstein inequality.
\begin{theorem}[Vector Bernstein inequality] 
  Let $\bx_i$ be a sequence of independent random vectors and set $V
  \ge \sum_i \E \|\bx_i\|_2^2$. Then for all $t \le V/\text{\em max}
  \|\bx_i\|_2$, we have
\[
\P(\|\sum_i (\bx_i - \E \bx_i)\|_2 \ge \sqrt{V} + t) \le e^{-t^2/4V}.
\]
\end{theorem}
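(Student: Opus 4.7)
The plan is to split the bound into two conceptually distinct pieces. Writing $S = \sum_i (\bx_i - \E\bx_i)$, I would first establish that $\E \|S\|_2 \le \sqrt{V}$, and then prove concentration of $\|S\|_2$ around its mean in the form $\P(\|S\|_2 \ge \E\|S\|_2 + t) \le e^{-t^2/4V}$ valid in the claimed range of $t$. These combine immediately, since $\{\|S\|_2 \ge \sqrt{V} + t\} \subseteq \{\|S\|_2 \ge \E\|S\|_2 + t\}$.

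The expectation bound is elementary. Jensen's inequality gives $(\E\|S\|_2)^2 \le \E\|S\|_2^2$; expanding the square and using mutual independence together with the zero-mean property of each $\bx_i - \E\bx_i$ kills the cross terms, leaving $\E\|S\|_2^2 = \sum_i \E\|\bx_i - \E\bx_i\|_2^2 \le \sum_i \E\|\bx_i\|_2^2 \le V$.

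For concentration, the route I would take is to form the Doob martingale $M_k = \E[\|S\|_2 \mid \bx_1, \ldots, \bx_k]$ with differences $D_k = M_k - M_{k-1}$. Via a coupling argument (introduce an independent replica $\bx_k'$ and use the variational formula $\|S\|_2 = \sup_{\|\bu\|_2 = 1}\langle \bu, S\rangle$), one establishes both a uniform bound $|D_k| \le 2\max_i \|\bx_i\|_2$ almost surely and a predictable quadratic variation bound $\sum_k \E[D_k^2 \mid \mathcal{F}_{k-1}] \le 2V$. Freedman's Bernstein inequality for bounded martingales then delivers the sub-Gaussian tail $e^{-t^2/4V}$ precisely in the small-deviation regime $t \le V/\max_i\|\bx_i\|_2$, beyond which a sub-exponential term would take over.

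The main obstacle is obtaining the variance-sensitive exponent $1/4V$ rather than the weaker Hoeffding-type constant of the form $1/C\max_i\|\bx_i\|_2^2$ that would fall out of bounded-differences methods alone. Securing this requires actually exploiting the second-moment budget $\sum_i \E\|\bx_i\|_2^2 \le V$ through the predictable quadratic variation of $(M_k)$, and the coupling/variational argument sketched above is the standard vehicle. An alternative route that bypasses the martingale machinery is to invoke Talagrand's concentration inequality for suprema of empirical processes, which is tailor-made for functions of the form $\sup_f \sum_i f(\bx_i)$ and yields the same Bernstein-type bound under identical hypotheses.
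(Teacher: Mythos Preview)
The paper does not prove this statement: it quotes the vector Bernstein inequality as a known tool and immediately applies it, so there is no ``paper's own proof'' to compare against. Your task was therefore to supply an argument where the paper simply cites.

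Your two-step plan is the standard one and is sound. The expectation bound $\E\|S\|_2 \le \sqrt{V}$ is exactly as you say. For the concentration step, the Doob-martingale route with the coupling bound $|D_k| \le \|\bx_k - \bx_k'\|_2 \le 2\max_i\|\bx_i\|_2$ and the conditional-variance bound $\E[D_k^2\mid\mathcal{F}_{k-1}] \le \E\|\bx_k - \bx_k'\|_2^2 \le 2\,\E\|\bx_k\|_2^2$ is correct; summing gives predictable quadratic variation at most $2V$, and Freedman's inequality then yields a sub-Gaussian tail in the regime $t \le V/\max_i\|\bx_i\|_2$. One caveat: with the textbook form of Freedman you get an exponent like $-t^2/(4V + \tfrac{4}{3}t\max_i\|\bx_i\|_2)$, which in the stated range of $t$ only gives roughly $e^{-3t^2/16V}$ rather than the exact $e^{-t^2/4V}$. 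Recovering the constant $1/4$ requires either a slightly sharper martingale inequality or the Talagrand route you mention; for the purposes of this paper the precise constant is immaterial, so your sketch is adequate.
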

It is because this inequality requires bounded random vectors that we
work with the truncation $\sum_{i = 1}^m \by_i 1_{E_i}$.

Put $\bar \by = \by \, 1_E$.  Since $\|\bar \by\|_2^2 \le
\|\by\|_2^2$, we first compute $\E \|\by\|_2^2$. We have
\[
\|\by\|_{2}^2 = \|\bz\|_{2}^2 z_1^2 \xi^2 - 2 z_1^2 \xi +1, \qquad \xi
= \frac12 \Bigl[z_1^2 - \frac{1}{n+2} \|\bz\|_2^2\Bigr], 
\]
and a little bit of algebra yields
\[
\|\by\|_{2}^2 =  \frac14 z_{1}^6 \|\bz\|_{2}^2 - \frac{1}{2(n+2)} z_{1}^4
\|\bz\|_{2}^4
+ \frac{1}{4(n+2)^2}z_{1}^{2} \|\bz\|_{2}^6 - z_{1}^4 +
\frac{1}{n+2}z_{1}^2 \|\bz\|_{2}^2 +1.
\]
% &= \frac14 z_{1}^2 \|\bz\|_{2}^2 \left[ z_{1}^4 - \frac{2}{n+2}z_{1}^2 \|z\|_{2}^2 + \frac{1}{(n+2)^2}\|z\|_{2}^4 \right] - z_{1}^2\left[z_{1}^2 -\frac{1}{n+2}\|z\|_{2}^2 \right] + 1\\
% &= \frac14 z_{1}^6 \|z\|_{2}^2 - \frac{1}{2(n+2)} z_{1}^4 \|z\|_{2}^4
% + \frac{1}{4(n+2)^2}z_{1}^{2} \|z\|_{2}^6 - z_{1}^4 +
% \frac{1}{n+2}z_{1}^2 \|z\|_{2}^2 +1
% \end{align*}
Thus, 
\begin{align}
  \E\left[\|\by\|_{2}^2 \right] & = \frac14 (15n+90) -
  \frac{1}{2(n+2)}(3n^2 + 30n +72) + \frac{1}{4(n+2)}(n+4)(n+6) -1
  \nonumber \\ & \leq 4(n + 4), \label{eq:var1}
\end{align}
where we have used the following identities
\begin{align*}
\E\left[z_{1}^2 \|\bz\|_{2}^2 \right] & = n+2,\\
 \E\left[z_{1}^2 \|\bz\|_{2}^6 \right] & = (n+2)(n+4)(n+6),\\
  \E\left[z_{1}^4 \|\bz\|_{2}^4 \right] & = 3n^2 +30n +72,\\
   \E\left[z_{1}^6 \|\bz\|_{2}^2 \right] & = 15n+90.
\end{align*}

Second, on the event of interest we have $|\xi| \le \beta \log n$
(assuming $2\beta \log n \ge 3$), $|z_1| \le \sqrt{2\beta \log n}$ and
$\|\bz\|_2 \le \sqrt{3n}$ and, therefore,
\[
\|\bar \by\|_2 \le \sqrt{6n} \, (\beta \log n)^{3/2} + 1 \le \sqrt{7n}
(\beta \log n)^{3/2}
\] 
provided $n$ is large enough. 

Third, observe that by symmetry, all the entries of $\bar \by$ but the
first have mean zero. Hence,
\[
\|\E \bar{\by} \|_2 = |\E y_1-\bar{y}_1 | =
|\E\, 1_{E^c} y_1| \le \sqrt{\mathbb{P}\left( E^c \right)}
\, \sqrt{\E y_{1}^2}.
\]
We have
\[
y_{1}^2 = (\xi z_{1}^2 -1)^2 = \frac14 z_{1}^8 - z_{1}^4
+\frac{1}{n+2}\|z\|_{2}^2 z_{1}^2 - \frac{1}{2(n+2)}\|z\|_{2}^2
z_{1}^6 + \frac{1}{4(n+2)^2}\|z\|_{2}^4 z_{1}^4 + 1
\]
and using the identities above
\[
\E y_{1}^2  = \frac{101}{4} - \frac{27n^2 + 210n
  +288}{4(n+2)^2} \leq 22, 
\]
which gives
\[
\|\E \bar{\by} \|_2 \leq \sqrt{22(n^{-\beta} +
  e^{-\frac{n}{3}})}.
\]

Finally, with $V = 4m(n+4)$, Bernstein's inequality gives that for
each $t \le 4(n+4)/[\sqrt{7n} (\beta \log n)^{3/2}]$,
\[
\|m^{-1} \sum_i (\bar \by_i - \E \bar \by_i)\|_2 \ge 2\sqrt{\frac{n+4}{m}} + t 
\]
with probability at most $\exp\bigl(- \frac{mt^2}{16(n+4)}\bigr)$. It
follows that
\[
\|m^{-1} \sum_i \bar \by_i \|_2 \ge \sqrt{22(n^{-\beta} +
  e^{-\frac{n}{3}})} + 2\sqrt{\frac{n+4}{m}} + t
\]
with at most the same probability.  Our result follows by taking $t =
1/6$, $\beta = 3$, $m \ge c_1 n$ where $n$ and $c_1$ are sufficiently
large such that
\[
\sqrt{22(n^{-\beta} +
  e^{-\frac{n}{3}})} + 2\sqrt{\frac{n+4}{m}} + \frac{1}{6} \le \frac{2}{9}. 
\]

% with the following
% conditions. \vv{Firstly, we need
% \[
%  \frac{1}{6} \le 4(n+4)/[\sqrt{7n} (\beta \log n)^{3/2}]
% \]
% for which a sufficient condition is 
% \[
% \beta \le \left(\frac{24}{\sqrt{7}}\right)^{\frac23}\frac{n^{\frac13}}{\log n}.
% \]
% Now, we need
% \[
% \sqrt{22(n^{-\beta}+e^{-\frac{n}{3}})} + 2 \sqrt{\frac{n+4}{c_1 n\log n }} \le \frac13 - \frac16 = \frac16
% \]
% when $n^{-\beta} \ge e^{-\frac{n}{3}} \iff \beta \le \frac{n}{3\log n}$. 
% A sufficient condition for
% \[
% \sqrt{22(n^{-\beta}+e^{-\frac{n}{3}})} \leq \frac{1}{12}
% \]
% is
% \[
% \sqrt{22(2n^{-\beta})} \leq \frac{1}{12} \iff \beta \geq
% \frac{2\log(24\sqrt{11})}{\log n}.
% \]
% Thus, when 
% \[
% \frac{2\log(24\sqrt{11})}{\log n} \leq \beta \leq \min\left(\left(\frac{24}{\sqrt{7}}\right)^{\frac23}\frac{n^{\frac13}}{\log n}, \frac{n}{3\log n}\right)
% \]
% and
% \[
% c_1 \geq \frac{2(24^2)}{\log n} \qquad \text{which is sufficient for} \qquad 2\sqrt{\frac{n+4}{c_1 n \log n}} \leq \frac{1}{12}
% \]
% we have the desired property with probability at least
% \[
% 1 - 2c_1 (\log n) n^{1-\beta} -  \exp\bigl(- \frac{c_1 n\log n
% }{(36)16(n+4)}\bigr) \geq 1 - 2c_1 (\log n) n^{1-\beta}
% -n^{-c_1/[(36)(32)]}.
% \]
% In particular, say we fix $c_1 \geq (10)(36)(32)$ and take $\beta =
% 12$, then for all n large enough we get the desired property with
% probability at least $1- \text{O}(n^{-10})$.  }

%$c_1$ is a numerical constant and $n$ large
%enough so that both $\sqrt{22(\vv{n^{-\beta}} + e^{-\frac{n}{3}})} < 1/20$
%and the condition on $t$ holds.

\subsection{${\bY}$ on $\Tp$ and proof of Lemma \ref{lem:YTp}}

We have 
\[
{\bY}_\Tp = \frac{1}{m} \sum_i \bX_i \, 1_{E_i},
\]
where the $\bX_i$'s are independent copies of the random matrix
\begin{equation}
\label{eq:Xvector2}
\bX = \frac12 \Bigl[z_1^2 - \frac{1}{n+2} \|\bz\|_2^2\Bigr] 
\, \PTp(\bz \bz^T). 
\end{equation}
One natural way to bound the norm of this random sum is via the
operator Bernstein's inequality. We develop a more customized
approach, which gives sharper results.

Decompose $\bX$ as
\[
\bX = \frac12 \Bigl[z_1^2 - 1\Bigr] 
\, \PTp(\bz \bz^T) + \frac12 \Bigl[1 - \frac{1}{n+2} \|\bz\|_2^2\Bigr] 
\, \PTp(\bz \bz^T) := \bX^{(0)} + \bX^{(1)}.
\]
Note that since $z_1$ and $\PTp(\bz\bz^T)$ are independent, we have
$\E \bX^{(0)} = 0$ and thus, $\E \bX^{(1)} = 0$ since $\E \bX = 0$.
With $\bar \bX_i^{(0)} = \bX_i^{(0)} 1_{E_i}$ and similarly for $\bar
\bX_i^{(1)}$, it then suffices to show that
\begin{equation}
\label{eq:X0X1}
\Bigl\Vert\sum_i \bar{\bX}_i^{(0)}\Bigr\Vert \le m/4 \quad \text{and} \quad \Big\Vert\sum_i
\bar{\bX}_i^{(1)}\Big\Vert \le m/4
\end{equation}
with large probability.  Write the norm as
\[
\Bigl\Vert\sum_i \bar{\bX}_i^{(0)}\Bigr\Vert = \sup_{\bu} \,
\Bigl\vert \sum_i \<\bu, \bar{\bX}_i^{(0)} \bu\>\Bigr\vert,
\]
where the supremum is over all unit vectors $\bu$ that are orthogonal to
$\be_1$.  The strategy is now to find a bound on the right-hand side for
each fixed $\bu$ and apply a covering argument to control the supremum
over the whole unit sphere.  In order to do this, we shall make use of
a classical large deviation result.
\begin{theorem}[Bernstein inequality] \label{teo:Bernstein} Let
  $\{X_i\}$ be a finite sequence of independent random variables.
  Suppose that there exists $V$ and $c$ such that for all $k \ge 3$,
  \[
  \sum_i \E |X_i|^k \le \frac12 k! V c_0^{k-2}.
  \] 
  Then for all $t \geq 0$,
\begin{equation}
\label{eq:bernstein}
\P\Bigl(\Bigl|\sum_i X_i -\E X_i\Bigr| \geq t\Bigr) \leq 2  
\exp\Bigl(-\frac{t^2}{2V + 2c_0t}\Bigr).
\end{equation}
\end{theorem}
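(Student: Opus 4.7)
The plan is the classical Chernoff/exponential-moment argument. Applying Markov's inequality to $e^{\lambda(\sum_i X_i - \sum_i \E X_i)}$ at any $\lambda \in (0, 1/c_0)$ and using independence gives
\[
\P\Bigl(\sum_i X_i - \sum_i \E X_i \ge t\Bigr) \le e^{-\lambda t} \prod_i e^{-\lambda \E X_i}\, \E e^{\lambda X_i}.
\]
Running the same argument with $-X_i$ in place of $X_i$ yields the lower tail; the factor of $2$ in \eqref{eq:bernstein} comes from combining the two.

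Next I would convert the moment hypothesis into an MGF bound. Expanding $e^{\lambda X_i} = \sum_{k \ge 0} (\lambda X_i)^k/k!$, using $\E X_i^k \le \E |X_i|^k$ for odd $k$, and $1+x \le e^x$:
\[
\prod_i \E e^{\lambda X_i} \le \exp\Bigl(\lambda \sum_i \E X_i + \sum_{k \ge 2} \frac{\lambda^k}{k!} \sum_i \E |X_i|^k\Bigr).
\]
The linear piece cancels the $-\lambda \sum_i \E X_i$ coming from centering, while the hypothesis $\sum_i \E |X_i|^k \le \frac{1}{2} k! V c_0^{k-2}$ together with summing the geometric series in $\lambda c_0 < 1$ gives
\[
\sum_{k \ge 2} \frac{\lambda^k}{k!} \sum_i \E |X_i|^k \le \frac{V \lambda^2}{2} \sum_{j \ge 0} (\lambda c_0)^j = \frac{V \lambda^2 / 2}{1 - \lambda c_0}.
\]
Combining the two estimates, $\P(\sum_i (X_i - \E X_i) \ge t) \le \exp\bigl(-\lambda t + V \lambda^2/(2(1 - \lambda c_0))\bigr)$.

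Finally, I would optimize over $\lambda$: the natural choice $\lambda^\star = t/(V + c_0 t)$ lies in $(0, 1/c_0)$ and a short calculation yields the exponent $-t^2/(2V + 2 c_0 t)$ announced in \eqref{eq:bernstein}. This is a textbook argument and I do not anticipate any real obstacle; the only bookkeeping is handling odd-$k$ moments via $\E X_i^k \le \E |X_i|^k$ and verifying that the centering correction inside the MGF cancels cleanly, so that the moment hypothesis on $|X_i|$ (rather than on the centered variables) is already enough. The virtue of the resulting bound, which the authors will leverage in the sequel, is the two-regime behavior of the exponent: Gaussian decay $\exp(-t^2/4V)$ for $t \lesssim V/c_0$ and exponential decay $\exp(-t/4 c_0)$ beyond.
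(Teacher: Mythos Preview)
The paper does not actually prove this theorem: it is stated as a classical large-deviation result and then used as a black box in the dual-certificate estimates. Your proposal is precisely the standard Chernoff/Cram\'er argument one finds in textbooks (Markov on the exponential moment, Taylor expansion of the MGF, geometric summation under the moment hypothesis, then optimize $\lambda$), and it is correct.

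One small bookkeeping remark: as written, the paper's hypothesis constrains $\sum_i \E|X_i|^k$ only for $k \ge 3$, whereas your geometric-series step tacitly uses the same bound at $k=2$, i.e.\ $\sum_i \E X_i^2 \le V$. This is the intended reading --- in the paper's applications $V$ is always chosen to dominate the second-moment sum as well --- so no real change to your argument is needed, but you may want to state that assumption explicitly when you write it up.
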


For the first sum in \eqref{eq:X0X1}, we write
\[
\sum_i \<\bu, \bar{\bX}_i^{(0)} \bu\> = \sum_i \eta_i \, 1_{E_i}, 
\]
where the $\eta_i$'s are independent copies of 
\[
\eta = \frac12 \Bigl[z_1^2 - 1\Bigr] \<\bz, \bu\>^2.
\]
The point of the decomposition $\bX^{(0)} + \bX^{(1)}$ is that $z_1$ and
$\<\bz,\bu\>$ are independent since $\bu$ is orthogonal to $\be_1$.   We have
$\E \eta = 0$ and for $k \ge 2$,
\[
\E |\eta \, 1_{E}|^k \le 2^{-k} \E |(z_1^2 - 1) \, 1_{\{z_1^2 \le
  2\beta \log n\}}|^k \, \E |\<\bz,\bu\>|^{2k}.
\]
First, 
\[
\E |(z_1^2 - 1) \, 1_{\{z_1^2 \le 2\beta \log n\}}|^k \le (2\beta \log
n)^{k-2} \E (z_1^2 -1)^2 = 2 (2\beta \log n)^{k-2}.
\]
Second, the moments of a chi-square variable with one degree of
freedom are well known: 
\[
\E |\<\bz,\bu\>|^{2k} = 1 \times 3 \times \ldots \times (2k-1) \le 2^k k!
\]
Hence we can apply Bernstein inequality with $V = 4m$ and $c_0 =
2\beta \log n$ and, obtain
\[
\P\Bigl(\Bigl\vert\sum_i \eta_i \, 1_{E_i} - \E [\eta_i \,
1_{E_i}]\Bigr\vert \ge mt\Bigr) \le 2 \exp\Bigl(-\frac{m}{4}
\frac{t^2}{2 + \beta t \log n}\Bigr).
\]
We now note that
\[
\vert \E \eta_i 1_{E_i}\vert = \vert \E \eta_i 1_{E_i^c}\vert \le
\sqrt{\P(E_i^c)} \sqrt{\E \eta_i^2} = \sqrt{\frac{3\pi(\beta)}{2}}
\]
which gives
\[
\P\Bigl(m^{-1}\Bigl\vert\sum_i \eta_i \, 1_{E_i} \Bigr\vert \ge t +
\sqrt{\frac{3\pi(\beta)}{2}}\Bigr) \le 2 \exp\Bigl(-\frac{m}{4}
\frac{t^2}{2 + \beta t \log n}\Bigr).
\]
For instance, take $t = 1/12$, $\beta = 3$, $m \geq c_1 n$ and $n$
large enough to get
\[
\P\Bigl(m^{-1} \Bigl\vert\sum_i \eta_i \, 1_{E_i} \Bigr\vert \ge 1/8 \Bigr) \le 2 \exp\Bigl(-\gamma \frac{m}{\log n}\Bigr).
\]

To derive a bound about $\|\bar{\bX}^{(0)}\|$, we use (see Lemma 4 in
\cite{VershyninRMT})
\[
\sup_u \Bigl\vert \<\bu, \bar{\bX}^{(0)} u\>\Bigr\vert \le 2 \sup_{\bu \in
  \mathcal{N}_{1/4}} \Bigl\vert \<\bu, \bar{\bX}^{(0)} \bu\>\Bigr\vert, 
\]
where $ \mathcal{N}_{1/4}$ is a $1/4$-net of the unit sphere $\{\bu :
\|\bu\|_2 = 1, \bu \perp \be_1\}$.  Since $|\mathcal{N}_{1/4}| \le
9^n$,
\[
\P(m^{-1}\|\bar{\bX}^{(0)}\| > 1/4) \le \P\Bigl(m^{-1} \sup_{\bu \in
  \mathcal{N}_{1/4}} \Bigl\vert \<\bu, \bar{\bX}^{(0)} \bu\>
\Bigr\vert > 1/8 \Bigr) \le 9^n \times 2\exp\Bigl(-\gamma
\frac{m}{\log n}\Bigr).
\]

We deal with the second term in a similar way, and write 
\[
\sum_i \<\bu, \bar{\bX}_i^{(1)} \bu\> = \sum_i \eta_i \, 1_{E_i}, 
\]
where the $\eta_i$'s are now independent copies of
\[
\eta = \frac12 \Bigl[1 - \frac{\|\bz\|^2_2}{n+2}\Bigr] \<\bz, \bu\>^2.
\]
On $E$, $\|\bz\|_2^2 \le 3n$ and, therefore, $\E |\eta \, 1_E|^k \le
2^{k} k!$. We can apply Bernstein's inequality with $c_0 = 2$ and $V =
8m$, which gives
\[
\P\Bigl(\Bigl\vert\sum_i \eta_i \, 1_{E_i} - \E [\eta_i \,
1_{E_i}]\Bigr\vert \ge mt\Bigr) \le 2 \exp\Bigl(-\frac{m}{4} \frac{t^2}{4 +
   t}\Bigr).
\]
The remainder of the proof is identical to that above and is therefore omitted.

\subsection{Proof of Theorem~\ref{teo:main}} \label{subsec:proofoftheorem}

We now assemble the various intermediate results to establish 
Theorem~\ref{teo:main}. As pointed out, Theorem~\ref{teo:main} follows
immediately from Lemma~\ref{lem:crucial}, which in turn hinges on
the validity of the conditions stated in~\eqref{eq:RIP1r1},
\eqref{eq:RIP1r2}, and~\eqref{eq:dualcertif}.

Lemma~\ref{lem:A1} asserts that condition~\eqref{eq:RIP1r1} holds with
probability of failure at most $p_1$, where $p_1 = 2e^{-\gamma_1 m}$
and here and below, $\gamma_{1}, \ldots, \gamma_4$ are positive
numerical constants.  Similarly, Lemma~\ref{lem:A2} shows that
condition~\eqref{eq:RIP1r2} holds with probability of failure at most
$p_2$, where $p_2 = 3e^{-\gamma_2 m}$.  In both cases we need that $m
> cn$ for an absolute constant $c>0$.

Proceeding to the dual certificate in~\eqref{eq:dualcertif}, we note
that Lemma~\ref{lem:YT} establishes the first part of the dual
certificate with a probability of failure at most $p_3$, where $p_3 =
3e^{-\gamma_3 m/n}$. The second part of the dual certificate
in~\eqref{eq:dualcertif} is shown in Lemma~\ref{lem:YTp} to hold with
probability of failure at most $p_4$, where $p_4 = 4e^{-\gamma_4
  \frac{m}{\log n}}$. In the former case we need $m > cn$ for an
absolute constant $c > 0$ and in the latter $m > c' n \log n$.

Finally, the union bound gives that under the hypotheses of
Theorem~\ref{teo:main}, exact recovery holds with probability at least
$1-3e^{-\gamma m/n}$ for some $\gamma > 0$, as claimed.

% \textcolor{red}{
% Proceeding to the dual certificate in~\eqref{eq:dualcertif}, we note
% that Lemma~\ref{lem:YT} establishes the first part of the dual
% certificate with a probability of failure at most $p_3$, where
% $p_3 = \exp \big(-c m/n\big) + m(n^{-\beta_1} + e^{-\frac{n}{3}})$,
% provided that $m \ge c_1 n$ for some sufficiently large absolute constant 
% $c_1$, and $c, \beta_1=\beta$ as described in Lemma~\ref{lem:YT}.
% The second part of the dual certificate in~\eqref{eq:dualcertif} is shown in 
% Lemma~\ref{lem:YTp} to hold with probability of failure at most $p_4$, where
% $p_4 =  4 \exp \big(-c_2 \frac{m}{\log n}\big) 
% + 2mn^{-\beta_2} + m e^{-\frac{n}{3}}$,
% provided that $m \ge c_0 n \log n$, with $c_2,\beta_2=\beta$ as described in
% Lemma~\ref{lem:YTp}. Using the union bound, we get that
% $p_0$ in Theorem~\ref{teo:main} satisfies $p_0 \le p_1 + p_2 + p_3 + p_4$. 
% In particular, we can always choose $\beta:=\max\{\beta_1,\beta_2\}$ 
% and the constants in
% Lemma~\ref{lem:YT} and Lemma~\ref{lem:YTp} large enough under the condition 
% $m \ge c_0 n \log n$, so that the probabilities of failure $p_3$ and $p_4$ 
% are of the order $1-{\cal O}(n^{-c})$, where $c$ depends on $c_0$ only. 
% It follows that Theorem~\ref{teo:main} holds under the condition $m \ge c_0 n
% \log n$ with a total probability of failure that is of the
% order $1- {\cal O}(n^{-c})$ \vv{VV: that's a pretty large probability of failure for us} for some constant $c>1$ depending on $c_0$. 
% }

\section{The Complex Model} \label{sec:complex}

This section proves that Theorem~\ref{teo:main} holds for the complex
model as well.  Not surprisingly, the main steps of the proof are the
same as in the real case, but there are here and there some noteworthy
differences.  Instead of deriving the whole proof, we will carefully
indicate the nontrivial changes that need to be carried out.

First, we can work with $\bx = \be_1$ because of rotational invariance,
and with independent complex valued Gaussian sequences $\bz_i \sim
\mathcal{C}\mathcal{N}(0,I,0)$. This means that the real and imaginary
parts of $\bz_i$ are independent white noise sequences with variance
$1/2$.

The key Lemma \ref{lem:crucial} only requires a slight adjustment in
the numerical constants. The reason for this is that while Lemma
\ref{lem:A1} does not require any modification, Lemma \ref{lem:A2}
changes slightly; in particular, the numerical constants are somewhat
different. Here is the properly adjusted complex version.
\begin{lemma}
  \label{lem:A2C}
  Fix $\delta > 0$. Then there are positive numerical constants $c_0$
  and $\gamma_0$ such that if $m \ge c_0 \, [\delta^{-2} \log
  \delta^{-1}] \, n$, $\cA$ has the following property with
  probability at least $1 - 3 e^{-\gamma_0 m\delta^2}$: for any
  Hermitian rank-2 matrix $X$,
\begin{equation}
\label{eq:A2bC}
\frac{1}{m} \|\cA(\bX)\|_1 \ge 2(\sqrt{2}-1) (1-\delta) \|\bX\| \ge
0.828(1-\delta)\|\bX\|.
\end{equation}
\end{lemma}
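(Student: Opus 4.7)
The plan is to mirror the proof of Lemma~\ref{lem:A2} step by step, adjusting only where the switch from real to complex Gaussians changes the numerics. By homogeneity and the fact that Lemma~\ref{lem:A1} already handles the case of positive semidefinite matrices, it suffices to treat a Hermitian matrix of the form $\bX = \bu_1 \bu_1^* - t\, \bu_2 \bu_2^*$ with $\bu_1,\bu_2$ orthonormal in $\C^n$ and $t \in [0,1]$. For $\bz \sim \mathcal{CN}(0,I)$, the two inner products $\<\bz,\bu_1\>$ and $\<\bz,\bu_2\>$ are independent standard complex normals, so
\[
\tfrac{1}{m}\|\cA(\bX)\|_1 = \tfrac{1}{m}\sum_i \xi_i, \qquad \xi_i \stackrel{\text{i.i.d.}}{\sim}\, \bigl| |W_1|^2 - t |W_2|^2 \bigr|,
\]
with $W_1,W_2$ independent standard complex normals. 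The pointwise concentration step is then Bernstein's inequality for the sub-exponential variables $\xi_i$, exactly as before.

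The first genuinely new computation is the mean. Since $|W_j|^2$ is $\mathrm{Exp}(1)$, a short integration (splitting the region $|W_1|^2 \gtrless t|W_2|^2$ and evaluating two elementary integrals) gives
\[
\E \xi \;=\; g(t) \;:=\; \frac{1+t^2}{1+t}, \qquad t\in[0,1].
\]
Differentiating, $g'(t) = (t^2 + 2t - 1)/(1+t)^2$ vanishes at $t^\star = \sqrt{2}-1$, and $g(t^\star) = 2(\sqrt{2}-1)$, while $g(0) = g(1) = 1$; hence $\min_{t\in[0,1]} g(t) = 2(\sqrt{2}-1)$, which yields the constant in \eqref{eq:A2bC}. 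This single computation accounts for the difference from the real-valued Lemma~\ref{lem:A2}.

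Next I would run the $\epsilon$-net argument in the complex setting. Let $\mathcal{S}_\epsilon$ be an $\epsilon$-net of the complex unit sphere in $\C^n$ (of cardinality at most $(3/\epsilon)^{2n}$, since $S^{2n-1}$ has real dimension $2n-1$), let $\mathcal{T}_\epsilon$ be an $\epsilon$-net of $[0,1]$, and set $\mathcal{N}_\epsilon = \{\bu_1 \bu_1^* - t \bu_2\bu_2^* : (\bu_1,\bu_2,t)\in \mathcal{S}_\epsilon^2 \times \mathcal{T}_\epsilon\}$, so $|\mathcal{N}_\epsilon| \le (3/\epsilon)^{4n+1}$. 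A union bound of the per-point Bernstein estimate over $\mathcal{N}_\epsilon$ succeeds as long as $m \gtrsim \delta^{-2} \log(\delta^{-1})\, n$, with probability of failure at most $3 e^{-\gamma_0 m \delta^2}$; the extra factor of $2$ in the cardinality exponent (compared with the real case) is absorbed into the constant $c_0$.

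The final step is the continuity/approximation bound: for any feasible $\bX$ one picks the nearest $\bX_0 \in \mathcal{N}_\epsilon$ and controls $\|\bX - \bX_0\|_1$. The same triangle-inequality argument works verbatim in the complex setting, using $\|\bu\bu^* - \bu_0\bu_0^*\| \le 2\|\bu - \bu_0\|_2$ (which holds identically over $\C$) and the rank-$2$ bound to pass from operator to nuclear norm. Combining this with the $\ell_1$ upper bound of Lemma~\ref{lem:A1} (unchanged in the complex case) on the ``bad'' region $\bX - \bX_0$, and choosing $\epsilon$ proportional to $\delta$, gives \eqref{eq:A2bC}. The only conceptual obstacle is the mean computation producing $g(t) = (1+t^2)/(1+t)$ and its minimizer $\sqrt{2}-1$; everything else is a transparent repetition of the real-valued proof with cosmetic adjustments to the covering numbers and numerical constants.
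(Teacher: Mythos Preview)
Your proposal is correct and follows essentially the same route as the paper: reduce to $\bX=\bu_1\bu_1^*-t\bu_2\bu_2^*$, compute $\E\xi=(1+t^2)/(1+t)$ with minimum $2(\sqrt{2}-1)$, apply Bernstein pointwise, and finish with the same $\epsilon$-net/continuity argument adapted to the complex sphere. The only cosmetic difference is that the paper evaluates $\E\bigl||W_1|^2 - t|W_2|^2\bigr|$ via polar coordinates in $\R^4$ rather than by using that $|W_j|^2\sim\mathrm{Exp}(1)$ directly, but both computations yield the same $g(t)$.
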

The proof of this lemma follows essentially the proof of
Lemma~\ref{lem:A2}. 
The function $f(t)$ (cf.\ equation~\eqref{eq:f}) 
now takes the form
\begin{equation}
\E \xi = f(t) =  \frac{1+t^2}{1+t},
\label{eq:fcomplex}
\end{equation}
where $\xi = \big| |Z_1|^2 - t |Z_2|^2\big|$, with $Z_1$ and $Z_2$
independent $\mathcal{C}\mathcal{N}(0,1,0)$, as demonstrated in the
following lemma.

\begin{lemma}
  Let $Z_1$ and $Z_2$ be independent $\mathcal{C}\mathcal{N}(0,1,0)$ variables and
  $t \in [0,1]$. We have
\[
E ||Z_1|^2 - t |Z_2|^2| = f(t), 
\] 
where $f(t)$ is given by \eqref{eq:fcomplex}.
\end{lemma}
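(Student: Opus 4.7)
The plan is to reduce the computation to a calculation involving two independent standard exponential variables, and then exploit the well-known distribution of the minimum of independent exponentials to avoid any integration.

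First, I would observe that if $Z \sim \mathcal{CN}(0,1,0)$, then $Z = X + iY$ with $X,Y$ independent $\mathcal{N}(0,1/2)$, so $|Z|^2 = X^2 + Y^2$ follows a standard exponential distribution (rate $1$, mean $1$). Writing $U = |Z_1|^2$ and $V = |Z_2|^2$, these are independent $\text{Exp}(1)$ variables, and the task becomes computing $\mathbb{E}|U - tV|$ for $t \in [0,1]$.

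The key identity is
\[
|U - tV| \;=\; U + tV - 2\min(U, tV).
\]
Now $U \sim \text{Exp}(1)$ has mean $1$, while $tV \sim \text{Exp}(1/t)$ has mean $t$. Since $U$ and $tV$ are independent, the minimum $\min(U, tV)$ is exponential with rate equal to the sum of the individual rates, namely $1 + 1/t = (1+t)/t$, and therefore has expectation $t/(1+t)$.

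Combining these pieces yields
\[
\mathbb{E}|U - tV| \;=\; 1 + t - 2\cdot\frac{t}{1+t} \;=\; \frac{(1+t)^2 - 2t}{1+t} \;=\; \frac{1+t^2}{1+t},
\]
which is exactly $f(t)$ as given in \eqref{eq:fcomplex}. There is no real obstacle here; the only subtlety is remembering that in the complex Gaussian convention used in the paper ($\mathcal{CN}(0,1,0)$, with variance $1/2$ per component), the squared modulus is standard exponential rather than a chi-square with two degrees of freedom scaled differently. Once this normalization is verified, the identity $|a-b| = a + b - 2\min(a,b)$ together with the rate-additivity of the minimum of independent exponentials delivers the result in one line.
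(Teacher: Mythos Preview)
Your proof is correct and takes a genuinely different route from the paper's. The paper writes $|Z_j|^2$ in Cartesian coordinates, passes to polar coordinates in the $(r_1,r_2)$-plane, and then reduces the angular integral via the substitutions $\cos^2\phi=(1+\cos 2\phi)/2$, $\sin^2\phi=(1-\cos 2\phi)/2$, eventually arriving at $(1+t^2)/(1+t)$ after a page of trigonometric bookkeeping. You instead recognize that $|Z_j|^2$ is standard exponential under the $\mathcal{CN}(0,1,0)$ normalization, invoke $|a-b|=a+b-2\min(a,b)$, and use rate-additivity of the minimum of independent exponentials to get the answer without a single integral. Your argument is shorter and more transparent; the paper's approach has the minor advantage that it parallels the real-case computation (polar coordinates there too), so the two lemmas look structurally similar. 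One small remark: your scaling step $tV\sim\text{Exp}(1/t)$ tacitly assumes $t>0$; the boundary case $t=0$ is trivial ($\mathbb{E}|U|=1=f(0)$) but worth a word.
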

\begin{proof}
  Set
\[
\rho = \frac{1-t}{1+t} \text{ and } \cos
\theta = \rho
\]
in which $\theta \in [0, \pi/2]$.  By using polar coordinates for the variables $(x_1,y_1)$ associated with $Z_1$ and $(x_2,y_2)$, associated with $Z_2$ we have
\begin{align*}
  \E ||Z_1|^2 - t |Z_2|^2| & = \frac{1}{2} \int_{0}^\infty  \, \int_0^{\infty} |r_{1}^2 -tr_{2}^2|r_1 r_2 e^{-r_{1}^{2}/2}e^{-r_{2}^{2}/2} \, dr_1 dr_2\\
& =   \frac{1}{8} \, \int_0^{\infty} r^5 e^{-r^{2}/2}\, dr \int_{0}^{2\pi}
|\sin\phi\cos\phi|| \cos^2 \phi - t \sin^2\phi| \, d\phi,
\end{align*}
where we used polar coordinates again in variables $(r_1,r_2)$. Now using the identities $\cos^2 \phi = (1+\cos 2\phi)/2$,
$\sin^2\phi = (1-\cos 2\phi)/2$ and $2\sin\phi\cos\phi = \sin2\phi$ we have
\begin{align*}
  \E |Z_1^2 - t Z_2^2| & =  \frac12 \, \int_0^\pi |\sin2\phi||\cos 2\phi +
  \rho| \, d\phi\\
%  & = \frac{1}{4} \, \int_0^{2\pi} |\sin\phi||\cos \phi +
%  \rho| \, d\phi\\
%  & = \frac{1}{2} \, \int_0^{\pi} |\sin\phi||\cos \phi +
%  \rho| \, d\phi\\
%  & = \frac{1}{2} \, \int_0^{\pi} |\sin\phi||\rho - \cos \phi|
%  \, d\phi\\
  & = \frac{1}{2} \,\Bigl[ \int_0^\theta \sin\phi(\cos \phi - \rho) \, d\phi
  + \int_\theta^\pi \sin\phi(\rho - \cos \phi) \, d\phi \Bigr]\\
  & = \frac{1}{2} (1+t) [-\frac{1}{2}\cos2\theta +2\rho\cos\theta + \frac12]\\
% & = \frac{1}{2} (1+t) [2\rho^2 -\frac{1}{2}(2\rho^2-1)+\frac12]\\
  & = \frac12 (1+t)[\rho^2+1]\\
  & = \frac{1+t^2}{1+t}
\end{align*}
as claimed. 
\end{proof}

The graph of $f(t)$ is shown in Figure \ref{fig:fcomplex}.
\begin{figure}
\centering
    \includegraphics[width=0.49\textwidth]{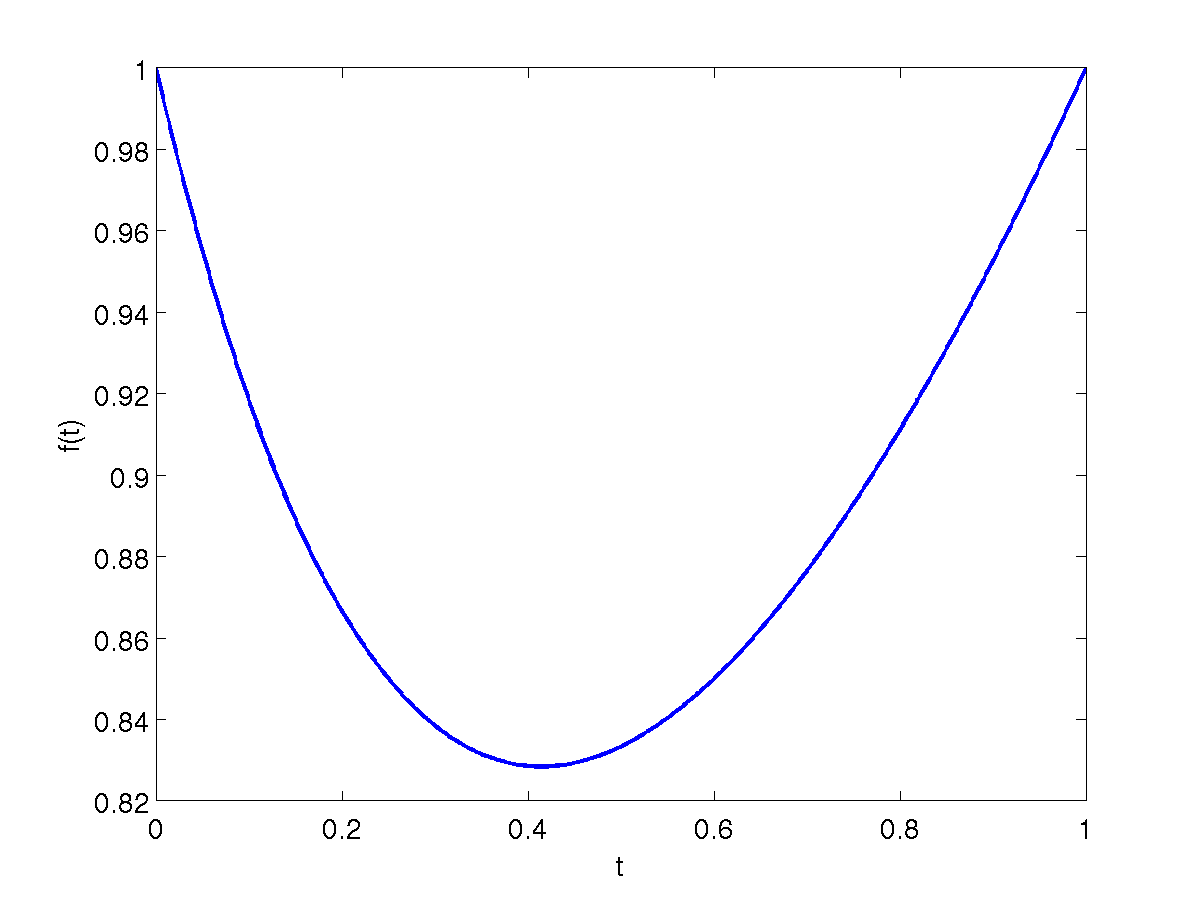}
    \caption{The function $f(t)$ in \eqref{eq:fcomplex} as a function
      of $t$.}
\label{fig:fcomplex}
\end{figure}
The minimum of this function on $\left[0,1\right]$ is $2(\sqrt{2}-1) >
0.828$. Furthermore, the covering argument in that proof has to be
adapted; for example, unit spheres need to be replaced by complex unit
spheres.

A consequence of this change in numerical values is that the numerical
factors in Lemma \ref{lem:RIP} need to be adjusted.
\begin{lemma}
  \label{lem:RIPcomplex}
  Any feasible matrix $\bH$ such that $\tr(\bH) \le 0$ must obey
\[
\|\bH_T\|_2 \le \sqrt{\frac{5}{4}} \, \|\bH_T\|.
 %\le \sqrt{\frac{5}{4}} \, 2 \, |\tr(H_T)| \le \sqrt{5} \, |\tr(H_T)|.
\]
\end{lemma}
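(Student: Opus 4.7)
The plan is to adapt Lemma~\ref{lem:RIP} almost verbatim; the only substantive change is that Lemma~\ref{lem:A2C} replaces the real-case lower bound constant $0.94$ with the smaller $2(\sqrt{2}-1) \ge 0.828$. Running the same bookkeeping with this weaker constant relaxes the critical eigenvalue-ratio threshold on $t$ from $1/4$ to $1/2$, and indeed $\sqrt{5/4} = \sqrt{1 + (1/2)^2}$ is precisely the resulting Frobenius-to-operator norm ratio.

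First I would parametrize as in Lemma~\ref{lem:RIP}: feasibility forces $\bH_T$ to be Hermitian of rank at most $2$ and not negative definite, so one may write $\bH_T = -\lambda(\bu_1 \bu_1^* - t\,\bu_2 \bu_2^*)$ with $\bu_1, \bu_2$ orthonormal, $\lambda \ge 0$, and $t \in [0,1]$. This gives $\|\bH_T\| = \lambda$, $|\tr(\bH_T)| = \lambda(1-t)$, and $\|\bH_T\|_2^2 = (1+t^2)\|\bH_T\|^2$. The goal is then to rule out $t > 1/2$; the stated inequality is immediate once that is done.

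Suppose for contradiction that $t \ge 1/2$, so $\|\bH_T\| \ge 2|\tr(\bH_T)|$. Feasibility yields $\cA(\bH) = 0$, hence $\|\cA(\bH_T)\|_1 = \|\cA(\bH_\Tp)\|_1$. Combining the complex rank-$2$ lower bound from Lemma~\ref{lem:A2C} with the PSD upper bound from Lemma~\ref{lem:A1} (applicable because $\bx\bx^* + \bH \succeq 0$ forces $\bH_\Tp \succeq 0$) gives
\[
2 \cdot 0.828\,(1-\delta)\,|\tr(\bH_T)| \;\le\; 0.828\,(1-\delta)\,\|\bH_T\| \;\le\; (1+\delta)\,\tr(\bH_\Tp).
\]
Since $\tr(\bH) \le 0$ together with $\bH_\Tp \succeq 0$ forces $\tr(\bH_\Tp) \le |\tr(\bH_T)|$, the display collapses to $1.656(1-\delta) \le 1+\delta$, i.e.\ $\delta \ge 0.656/2.656 \approx 0.247$. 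This contradicts $\delta < 1/9$ unless $\tr(\bH_T) = 0$. The residual edge case ($\lambda = 0$ or $t = 1$) is eliminated exactly as in Lemma~\ref{lem:RIP}: one then has $\bH = \bH_T$ of rank at most $2$ with $\cA(\bH_T) = 0$, so Lemma~\ref{lem:A2C} forces $\bH_T = 0$ outright.

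The only real obstacle is the bookkeeping check that the weaker complex constants still leave a usable margin after the threshold is relaxed from $1/4$ to $1/2$; since $0.247$ comfortably exceeds $1/9$, the argument closes with no new ideas beyond those already appearing in Lemma~\ref{lem:RIP}.
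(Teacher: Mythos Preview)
Your proposal is correct and follows essentially the same route the paper intends: the paper does not spell out a proof of Lemma~\ref{lem:RIPcomplex} but simply says the numerical factors in Lemma~\ref{lem:RIP} must be adjusted, and your adaptation---relaxing the threshold from $t\ge 1/4$ to $t\ge 1/2$ to accommodate the weaker constant $2(\sqrt{2}-1)$ from Lemma~\ref{lem:A2C}---is exactly that adjustment. One small slip: in the complex case the operative bound on $\delta$ is $\delta\le 3/13$ from Lemma~\ref{lem:crucialcomplex}, not $\delta<1/9$; since $3/13\approx 0.231$ is still below your critical value $\approx 0.247$, the contradiction goes through unchanged.
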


Finally, with all of this in place, Lemma~\ref{lem:crucial} becomes this: 
\begin{lemma}
  \label{lem:crucialcomplex}
  Suppose that the mapping $\cA$ obeys the following two properties:
  for some $\delta \le 3/13$: 1) for all positive semidefinite
  matrices $\bX$,
\begin{equation}
\label{eq:RIP1r1C}
m^{-1} \|\cA(\bX)\|_1 \leq (1+\delta) \|\bX\|_{1};
\end{equation}
2) for all matrices $\bX \in T$
\begin{equation}
\label{eq:RIP1r2C}
m^{-1} \|\cA(\bX)\|_1 \geq 2(\sqrt{2}-1)(1-\delta)\|\bX\| \geq
0.828(1-\delta) \|\bX\|.
\end{equation}
Suppose further that there exists $Y$ in the range of $\cA^*$ obeying
\begin{equation}
  \label{eq:dualcertifC}
  \|\bY_T - \be_1 \be_1^*\|_2 \le 1/5 \quad \text{and} \quad \|\bY_\Tp\| \le 1/2.
\end{equation}
Then $\be_1 \be_1^*$ is the unique minimizer to \eqref{eq:tracemin}.
\end{lemma}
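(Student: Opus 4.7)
The plan is to mimic the architecture of the proof of Lemma~\ref{lem:crucial}, tracking carefully how the change in numerical constants (the lower RIP constant drops from $0.94$ to $2(\sqrt{2}-1) \approx 0.828$, the Frobenius-to-operator norm ratio rises from $\sqrt{17/16}$ to $\sqrt{5/4}$, and the allowable error on $\bY_T$ tightens from $1/3$ to $1/5$) feeds into the final scalar inequality. We must show that a nonzero feasible perturbation $\bH$ with $\tr(\bx\bx^* + \bH) \leq \tr(\bx\bx^*)$, i.e.\ $\tr(\bH) \leq 0$, cannot exist under hypotheses \eqref{eq:RIP1r1C}--\eqref{eq:dualcertifC}.

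First I split $\bH = \bH_T + \bH_\Tp$. Since $\bx\bx^* + \bH \succeq 0$, the off-diagonal block structure forces $\bH_\Tp \succeq 0$. From $\cA(\bH) = 0$, we have $\|\cA(\bH_T)\|_1 = \|\cA(\bH_\Tp)\|_1$. Applying \eqref{eq:RIP1r1C} to $\bH_\Tp \succeq 0$ and \eqref{eq:RIP1r2C} to $\bH_T \in T$ yields
\begin{equation*}
2(\sqrt{2}-1)(1-\delta)\,\|\bH_T\| \;\leq\; m^{-1}\|\cA(\bH_T)\|_1 \;=\; m^{-1}\|\cA(\bH_\Tp)\|_1 \;\leq\; (1+\delta)\,\tr(\bH_\Tp).
\end{equation*}
At this point I invoke Lemma~\ref{lem:RIPcomplex} (the complex replacement of Lemma~\ref{lem:RIP}, whose proof is identical to the real case modulo the numerical constant and uses $\tr(\bH_T) \leq -\tr(\bH_\Tp) \leq 0$) to replace $\|\bH_T\|$ by $\sqrt{4/5}\,\|\bH_T\|_2$, giving
\begin{equation*}
\tr(\bH_\Tp) \;\geq\; \frac{2(\sqrt{2}-1)(1-\delta)}{1+\delta}\,\sqrt{\tfrac{4}{5}}\;\|\bH_T\|_2.
\end{equation*}

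Next I exploit the inexact dual certificate exactly as in the real case. Writing $0 \geq \tr(\bH_T) + \tr(\bH_\Tp) = \langle \bH, \be_1\be_1^*\rangle + \tr(\bH_\Tp)$ and using $\langle \bH, \bY\rangle = 0$ gives
\begin{equation*}
0 \;\geq\; \langle \bH_T, \be_1\be_1^* - \bY_T\rangle - \langle \bH_\Tp, \bY_\Tp\rangle + \tr(\bH_\Tp).
\end{equation*}
By Cauchy--Schwarz and $\|\bY_T - \be_1\be_1^*\|_2 \leq 1/5$, the first term is bounded below by $-\tfrac{1}{5}\|\bH_T\|_2$. Since $\bH_\Tp \succeq 0$ and $\|\bY_\Tp\| \leq 1/2$, the middle term is bounded in absolute value by $\tfrac{1}{2}\tr(\bH_\Tp)$. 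Combined, this gives
\begin{equation*}
0 \;\geq\; \tfrac{1}{2}\tr(\bH_\Tp) - \tfrac{1}{5}\|\bH_T\|_2 \;\geq\; \Bigl(\tfrac{(\sqrt{2}-1)(1-\delta)}{1+\delta}\sqrt{\tfrac{4}{5}} - \tfrac{1}{5}\Bigr)\|\bH_T\|_2.
\end{equation*}

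The main obstacle is checking that the bracketed scalar is strictly positive for $\delta \leq 3/13$. Numerically, $(\sqrt{2}-1)\sqrt{4/5} \approx 0.3704$, and for $\delta = 3/13$ the ratio $(1-\delta)/(1+\delta) = 5/8$, so the coefficient is roughly $0.3704 \cdot 5/8 \approx 0.2315 > 0.2 = 1/5$; this is where the particular threshold $3/13$ comes from and this calibration is the only genuinely new step. Consequently $\bH_T = 0$, and then the identity $\|\cA(\bH_\Tp)\|_1 = \|\cA(\bH_T)\|_1 = 0$ together with the lower bound in \eqref{eq:RIP1r1C} (which for positive semidefinite matrices gives $(1-\delta)\tr(\bH_\Tp) \leq 0$; this follows from Lemma~\ref{lem:A1} applied in the complex regime) forces $\tr(\bH_\Tp) = 0$ and hence $\bH_\Tp = 0$, contradicting $\bH \neq 0$. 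Hence $\be_1\be_1^*$ is the unique minimizer.
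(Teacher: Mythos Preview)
Your proof is correct and follows essentially the same route as the paper: it is the argument of Lemma~\ref{lem:crucial} with the constant $0.94$ replaced by $2(\sqrt{2}-1)$, Lemma~\ref{lem:RIP} replaced by Lemma~\ref{lem:RIPcomplex}, and the dual-certificate tolerance tightened to $1/5$, after which the scalar check $(\sqrt{2}-1)\sqrt{4/5}\cdot\tfrac{1-\delta}{1+\delta} > \tfrac{1}{5}$ at $\delta = 3/13$ is exactly the calibration the paper intends. One small cleanup: in the last step you refer to ``the lower bound in \eqref{eq:RIP1r1C}'', but \eqref{eq:RIP1r1C} is only an upper bound; you don't need Lemma~\ref{lem:A1} here at all, since once $\bH_T = 0$ the constraint $\tr(\bH) \le 0$ gives $\tr(\bH_\Tp) \le 0$, and $\bH_\Tp \succeq 0$ then forces $\bH_\Tp = 0$ directly.
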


We now turn our attention to the properties of the dual certificate we
studied in Section \ref{sec:dual}. The first difference is that the
expectation of $\cA^* \cA$ in \eqref{eq:EAtA} is different in the
complex case. A simple calculation yields
\[
\E \frac{1}{m} \cA^* \cA =  \mathcal{I} + I_n \otimes I_n := \mathcal{S}.
\]
This means that for all $\bX$,
\begin{equation}
\label{eq:SC}
\cS(\bX) = \bX + \tr(\bX) \bI.
\end{equation}
We note that in this case
\begin{equation}
\label{eq:SinvC}
\cS^{-1} = \mathcal{I} - \frac{1}{n+1} \bI_n \otimes \bI_n
\quad \Leftrightarrow \quad \cS^{-1}(\bX) =  \bX -
\frac{1}{n+1} \tr(\bX) \bI_n.
\end{equation}
We of course use this new $\cS^{-1}$ in the complex analog of the
candidate certificate \eqref{eq:truncated}.  A consequence is that
in the proof of Lemma~\ref{lem:YT}, for instance, \eqref{eq:Xvector1}
now takes the form
\begin{equation}
\label{eqLXvector1C}
\bX =   \Bigl[|z_1|^2 - \frac{1}{n+1} \|\bz\|_2^2\Bigr] \bar{z}_1 \,  \bz -
\be_1
:= (\xi \bar{z}_1)\, \bz - \be_1.
\end{equation}
To bound the 2-norm of a sum of i.i.d.\ such random variables (as in Lemma~\ref{lem:YT}), we employ the same Bernstein inequality for real vectors, 
using the fact that $\|\bz\|_2 = \|(\Re(\bz),\Im(\bz))\|_2$ for any complex
vector $\bz$.
Similarly \eqref{eq:Xvector2} becomes
\begin{equation}
\label{eq:Xvector2C}
\bX =  \Bigl[|z_1|^2 - \frac{1}{n+1} \|\bz\|_2^2\Bigr] \, \PTp(\bz \bz^*).
\end{equation}
To bound the operator norm of a sum of i.i.d.\ such random matrices (as in Lemma~\ref{lem:YTp}), 
we again use a covering argument, this time working with chi-square variables with two
degrees of freedom, since $|\langle \bz,\bu \rangle|^2$ is distributed as
$\frac{1}{2}\chi^2(2)$.  Since $|\langle \bz,\bu \rangle|^2$ are real random variables,
we use the same version of the Bernstein inequality as in the real-valued case.  The only difference is that 
the moments  are now
$$
\E |\<\bz,\bu\>|^{2k} = 2^{-k} \times (2+0) \times (2+2) \times (2+4) \times
\ldots \times (2 + 2k-2) = k!
$$

\section{Stability}
\label{sec:stability}

This section proves the stability of our approach, namely, Theorem
\ref{teo:stability}. Our proof parallels the argument of Cand\`es and
Plan for showing the stability of matrix completion \cite{MCNoise} as
well as that of Gross et al. in \cite{QSTC}.

Just as before, we prove the theorem in the real case since the
complex case is essentially the same. Further, we may still take $\bx
=\be_1$ without loss of generality.  We shall prove stability when the
$\bz_i$'s are i.i.d.\ $\mathcal{N}(0,\bI_n)$ and later explain how one can
easily transfer a result for Gaussian vectors to a result for vectors
sampled on the sphere.
% it suffices to let $\cA'$, the linear operator associated
% with $\bz_i'=\frac{\sqrt{n}}{\|\bz_i\|_2}\bz_i$ (that is, $\bz_i'$
% being uniformly distributed on the sphere of radius $\sqrt{n}$), to
% inherit the nice properties of $\cA$ and thus yield stability for
% $\cA'$ as well.
Under the assumptions of the theorem, the RIP-1-like
properties, namely, Lemmas \ref{lem:A1} and \ref{lem:A2} hold with a
numerical constant $\delta_1$ we shall specify later.  Under the same
hypotheses, the dual certificate $\bY$ \eqref{eq:certificate} obeys
\[
\| \PT(\bY-\be_1 \be_1^*)\|_2 \leq \gamma, \qquad \|\bY_{\Tp}\| \leq \frac12, 
\]
in which $\gamma$ is a numerical constant also specified later.

Set $\bX = \bx\bx^* = \be_1 \be_1^*$ and write $\hat \bX = \bX + \bH$.
We begin by recording two useful properties. First, since $\bX$ is
feasible for our optimization problem, we have
\begin{equation}
  \label{eq:cone}
  \tr(\bX+\bH) \le \tr(\bX) \quad \Longleftrightarrow \quad \tr(\bH) \le 0. 
\end{equation}
Second, the triangle inequality gives 
\begin{equation}
  \label{eq:tube}
  \|\cA(\bH)\|_2 = \|\cA(\hat \bX - \bX)\|_2 \le  \|\cA(\hat \bX) - \bb\|_2
+ \|\bb - 
\cA(\bX)\|_2 \le 2\epsilon.
\end{equation}
In the noiseless case, $\cA(\bH) = 0 \Longrightarrow \<\bH, \bY\> = 0$, by
construction. In the noisy case, a third property is that $|\<\bH, \bY\>|$
is at most on the order of $\epsilon$. Indeed,
\[
m |\<\bH,\bY\>| = |\<\cA(\bH), \cA \cS^{-1}(\bX)\>| \le \|\cA(\bH)\|_\infty \|\cA
\cS^{-1}(\bX)\|_1. 
\]
Since, $\|\cA(\bH)\|_\infty \le \|\cA(\bH)\|_2$ and 
\[
\|\cA \cS^{-1}(\bX)\|_1 \le m(1+\delta_1) \|\cS^{-1}(\bX)\|_1  \leq m(1+\delta_1), 
\]
we obtain
\begin{equation}
  \label{eq:HY}
  |\<\bH,\bY\>| \le 2\epsilon(1+\delta_1). 
\end{equation}

We now reproduce the steps of the proof of Lemma \ref{lem:crucial},
and obtain
\[
0 \ge \tr(\bH_T) + \tr(\bH_\Tp) \ge \frac12 \tr(\bH_\Tp) - \gamma \|\bH_T\|_2
- |\<\bH,\bY\>|, 
\]
which gives 
\begin{equation}
  \label{eq:intermediate1}
  \tr(\bH_\Tp) \le 4\epsilon(1+\delta_1) + 2\gamma \|\bH_T\|_2 \le
4\epsilon(1+\delta_1) + 2\sqrt{2}\gamma \|\bH_T\|, 
\end{equation}
where we recall that $\bH_T$ has rank at most 2. We also have
\begin{align}
\nonumber 0.94(1-\delta_1) \|\bH_T\|  \le m^{-1} \|\cA(\bH_T)\|_1 & \le m^{-1}
\|\cA(\bH)\|_1 + m^{-1} \|\cA(\bH_\Tp)\|_1\\
& \le m^{-1/2}
\|\cA(\bH)\|_2 + (1+\delta_1) \tr(\bH_\Tp)\\
& \le 2m^{-1/2}\epsilon + (1+\delta_1)\tr (\bH_{\Tp}),
\label{eq:intermediate2} 
\end{align}
where the second inequality follows from the RIP-1 property together
with the Cauchy-Schwarz inequality. Plugging this last bound into
\eqref{eq:intermediate1} gives
\[
\tr(\bH_\Tp) \le 4\epsilon (1+\delta_1 + \gamma \alpha m^{-1/2}) + \beta
\gamma \tr(\bH_\Tp),
\]
where 
\[
\alpha = \frac{\sqrt{2}}{0.94(1-\delta_1)}, \quad \beta = 2\alpha (1+\delta_1). 
\]
Hence, when $\beta \gamma < 1$, we have 
\[
\tr(\bH_\Tp) = \|\bH_\Tp\|_1 \le\frac{4(1+\delta_1 + \gamma \alpha
m^{-1/2})}{1-\beta \gamma}\epsilon = c_1 \, \epsilon.
\]
 In addition, \eqref{eq:intermediate2} then gives
\[
\|\bH_T\| \le  \frac{2m^{-1/2}+(1+\delta_1)c_1}{0.94(1-\delta_1)}\epsilon =
c_2 \, \epsilon. 
\]
In conclusion, 
\[ \|\bH\|_2 \le \|\bH_T\|_2 + \|\bH_\Tp\|_2 \le \sqrt{2} \|\bH_T\| +
  \|\bH_\Tp\|_1 \le (\sqrt{2}c_2 + c_1)\epsilon = c_0 \, \epsilon, 
\]
and we also have $\|\bH\| \le (c_2 + c_1) \epsilon$.

It remains to show why the fact that $\hat \bX$ is close to $\bX =
\bx\bx^*$ in the Frobenius or operator norm produces a good estimate
of $\bx$ (recall that $\bx = \be_1$). Set $\epsilon_0 := \|\hat \bX -
\bX\| \le c_0\, \epsilon$. Below, $\hat{\lambda}_1 \ge 0$ is the largest
eigenvalue of $\hat \bX \succeq 0$, and $\hat{\bu}_1$ the first
eigenvector. Likewise, $\lambda_1 = 1$ is the top eigenvalue of $\bX =
\be_1 \be_1^*$. Since $\tr(\hat \bX) \le \tr(\bX)$,
\[
\hat{\lambda}_1 \le \lambda_1.
\] 
In the other direction, we know from perturbation theory that 
\[
|\lambda_1 - \hat \lambda_1| \le \|\hat \bX - \bX\| = \epsilon_0. 
\]
Assuming that $\epsilon_0 < 1$, this gives $\hat \lambda_1 \in
[1-\epsilon_0, 1]$. The sin-$\theta$-Theorem \cite{DavisKahan} implies
that
\[
|\sin \theta| \le \frac{\|\hat \bX - \bX\|}{|\hat \lambda_1|} \le
\frac{\epsilon_0}{1-\epsilon_0},
\]
where $ 0 \le \theta \le \pi/2$ is the angle between the spaces
spanned by $\hat{\bu}_1$ and $\be_1$. Writing 
\[
\hat{\bu}_1 = \cos\theta \be_1 + \sin \theta \be_1^\perp
\]
in which $\be_1^\perp$ is a unit vector orthogonal to $\be_1$,
Pythagoras' relationship gives
\[
\|\be_1 - \sqrt{\hlambda_1} \hat{\bu}_1\|_2^2 = (1-\sqrt{\hlambda_1}
\cos\theta)^2 + \hlambda_1 \sin^2\theta. 
\]
Since $\cos \theta = \sqrt{1-\sin^2\theta}$, we have
\[
1 \ge \sqrt{\hlambda_1} \cos \theta \ge \sqrt{1-\epsilon_0
  -\frac{\epsilon_0^2}{1-\epsilon_0}} \ge 1-\epsilon_0
\]
for $\epsilon_0 < 1/3$. Hence,
\[
\|\be_1 - \sqrt{\hlambda_1} \hat{\bu}_1\|_2^2 \leq \epsilon_0^2 +
\frac{\epsilon_0^2}{(1-\epsilon_0)^2} \le \frac{13}{4} \epsilon_0^2
\]
provided $\epsilon_0 < 1/3$. Since we always have
\[
\|\be_1 - \sqrt{\hlambda_1} \hat{\bu}_1\|_2 \le \|\be_1\|_2 +
\hlambda_1 \|\hat{\bu}_1\|_2 \le 2,
\]
we have established
\[
\|\be_1 - \sqrt{\hlambda_1} \hat{\bu}_1\|_2 \le C_0 \, \text{min}(\epsilon, 1). 
\]
This holds for all values of $\epsilon_0$ and proves the claim in the
case where $\|\bx\|_2 = 1$. The general case is obtained via a simple
rescaling.

\newcommand{\tbz}{\tilde{\bz}}
\newcommand{\tcA}{\tilde{\cA}}

As mentioned above, we proved the theorem for Gaussian $\bz_i$'s but
it is clear that our results hold true for vectors sampled uniformly
at random on the sphere of radius $\sqrt{n}$. The reason is that of
course, $\|\bz_i\|_2$ deviates very little from $\sqrt{n}$. Formally,
set $\tbz_i = [\sqrt{n}/\|\bz_i\|_2] \bz_{i}$ so that these new
vectors are independently and uniformly distributed on the sphere of
radius $\sqrt{n}$. Then
\[
\<\bX,\tbz_i \tbz_i^*\> = \frac{n}{\|\bz_i\|_2^2} \, \<\bX,\bz_i
\bz_i^*\>,
\]
and thus $\<\bX,\bz_i \bz_i^*\>$ is between $(1-\delta_2) \,
\<\bX,\tbz_i \tbz_i^*\>$ and $(1+\delta_2) \, \<\bX,\tbz_i \tbz_i^*\>$
with very high probability. This holds uniformly over all Hermitian
matrices. Thus if $\tcA(\bX) = \{\tbz_i^* X\tbz_i\}_{1\leq i \leq m}$,
\[
(1-\delta_2)\|\tcA(\bX)\|_{q} \leq \| \cA(\bX)\|_{q} \leq (1+\delta_2)\|\tcA(\bX)\|_{q} 
\]
for any $1 \le q \le \infty$.  

Now take $b_i = |\<\bx,\tbz_i\>|^2 + \nu_i$ and solve
\eqref{eq:traceminnoisy} to get $\tilde{\bX} = \bX +
\tilde{\bH}$. Going through the same steps as above by using the
relationships between $\cA$ and $\tcA$ throughout, and by using the
dual certificate $\bY$ associated with $\cA$, we obtain
\[
\|\tcA(\tilde{\bH})\|_2 \leq 2\epsilon, \qquad |\<\tilde{\bH},\bY \>|
\leq 2\epsilon(1+\delta_1)(1+\delta_2),
\]
and
\[
\tr(\tilde{\bH}_{\Tp}) \leq (1+\delta_2)c_1 \epsilon, \qquad \|\tilde{\bH}_{T}\| \leq
(1+\delta_2)c_2 \epsilon.
\]
Therefore, 
\[
\|\tilde{\bH}\|_2 \leq (1+\delta_2)(\sqrt{2}c_2+c_1)\epsilon.
\]
The rest of the proof goes through just the same.

% \[
% \|\be_1 - \sqrt{\hat{\lambda}'_1} \hat{\bu}'_1\|_2 \le C_0(1+\delta_2) \,
%\text{min}(\epsilon, 1). 
% \]
% }
% where $\hat{\lambda}'_1$ and $\hat{\bu}'_1$ are from the diagonalization of $\hat{\bX'}$.

\section{Numerical Simulations}
\label{sec:numerics}

In this section we illustrate our theoretical results with numerical
simulations. In particular, we will demonstrate PhaseLift's robustness
vis a vis additive noise.

We consider the setup in Section~\ref{subsec:stability}, where the
measurements are contaminated with additive noise.  The solution
to~\eqref{eq:traceminnoisy} is computed using the following
regularized nuclear-norm minimization problem:
\begin{equation}
\label{eq:lasso}
\text{minimize} \quad  \frac{1}{2} \|\cA(\bX) - \bb\|_2^2 + \lambda \|\bX\|_1.
\end{equation}
It follows from standard optimization theory~\cite{Roc70}
that~\eqref{eq:lasso} is equivalent to~\eqref{eq:traceminnoisy} for
some value of $\lambda$.  Hence, we use~\eqref{eq:lasso} to compute the
solution of~\eqref{eq:traceminnoisy} by determining via a simple and
efficient bisection search the largest value $\lambda(\epsilon)$ such
that $\|\cA(\bX) - \bb\|_2 \le \epsilon$.  The numerical algorithm to
solve~\eqref{eq:lasso} was implemented in Matlab using
TFOCS~\cite{BCG10}.  We then extract the largest rank-1 component as
described in Section~\ref{subsec:stability} to obtain an approximation
$\hat{\bx}$.

We will use the relative mean squared error (MSE) and the relative
root mean squared error (RMS) to measure performance. However, since a
solution is only unique up to global phase, it does not make sense to
compute the distance between $\bx$ and its approximation $\hat{\bx}$.
Instead we compute the distance modulo a global phase term and define
the relative MSE between $\bx$ and $\hat{\bx}$ as
\[
\min_{c : |c| = 1} \,\, \frac{\|c \bx - \hat{\bx}\|_2^2}{\|\bx\|_2^2}.
\]
The (relative) RMS is just the square root of the (relative) MSE.

In the first set of experiments, we investigate how the reconstruction
algorithm performs as the noise level increases.  The test signal is a
complex-valued signal of length $n=128$ with independent Gaussian
complex entries (each entry is of the form $a + i b$ where $a$ and $b$
are independent $\mathcal{N}(0,1)$ variables) so that the real and
imaginary parts are independent white noise sequences. Obviously, the
signal is arbitrary. We use $m=6n$ measurement vectors sampled
independently on the unit sphere $\C^n$.

We generate noisy data from both a Gaussian model and a Poisson
model. In the Gaussian model, $b_i \sim \mathcal{N}(\mu_i,\sigma^2)$
where $\mu_i = |\<\bx,\bz_i\>|^2$ and $\sigma$ is adjusted so that the
total noise power is bounded by $\epsilon^2$. In the Poisson model,
$b_i \sim \text{Poi}(\mu_i)$ and the noise $b_i - \mu_i$ is rescaled
to achieve a desired total power as above (we might do without this
rescaling as well but have decided to work with a prescribed
signal-to-noise ratio SNR for simplicity of exposition). We do this
for five different SNR levels,\footnote{The SNR of two signals $\bx,
  \hat{\bx}$ with respect to $\bx$ is defined as $10\log_{10}
  \|\bx\|_2^2/\|\bx - \hat{\bx}\|_2^2$. So we say that the SNR is 10dB
  if $10\log_{10} \|\bx\|_2^2/\|\bm{\nu}\|_2^2 = 10$.}  ranging from
5dB to 100dB.  However, we point out that we do not make use of the
noise statistics in our reconstruction algorithm\footnote{We refer
  to~\cite{CESV} for efficient ways to incorporate statistical noise
  models into the reconstruction algorithm.}, since our purpose is
only to assume an upper bound on the total noise power, as in
Theorem~\ref{teo:stability}.

  For each SNR level, we repeat the experiment ten times with
  different noise terms, different signals, and different random
  measurement vectors; we then record the average relative RMS over
  these ten experiments. Figure~\ref{fig:num1}(a) shows the average
  relative MSE in dB (the values of $10 \log_{10}(\text{rel.~MSE})$
  are plotted) versus the SNR for Poisson noise.  In each case, the
  performance degrades very gracefully with decreasing SNR, as
  predicted by Theorem~\ref{teo:stability}.  Debiasing as described at
  the end of Section~\ref{subsec:stability} leads to a further
  improvement in the reconstruction for low SNR, as illustrated in
  Figure~\ref{fig:num1}(b).  The results for Gaussian noise are
  comparable, see Figure~\ref{fig:num1a}.

\begin{figure}
\begin{center}
\subfigure[]{\includegraphics[width=75mm]{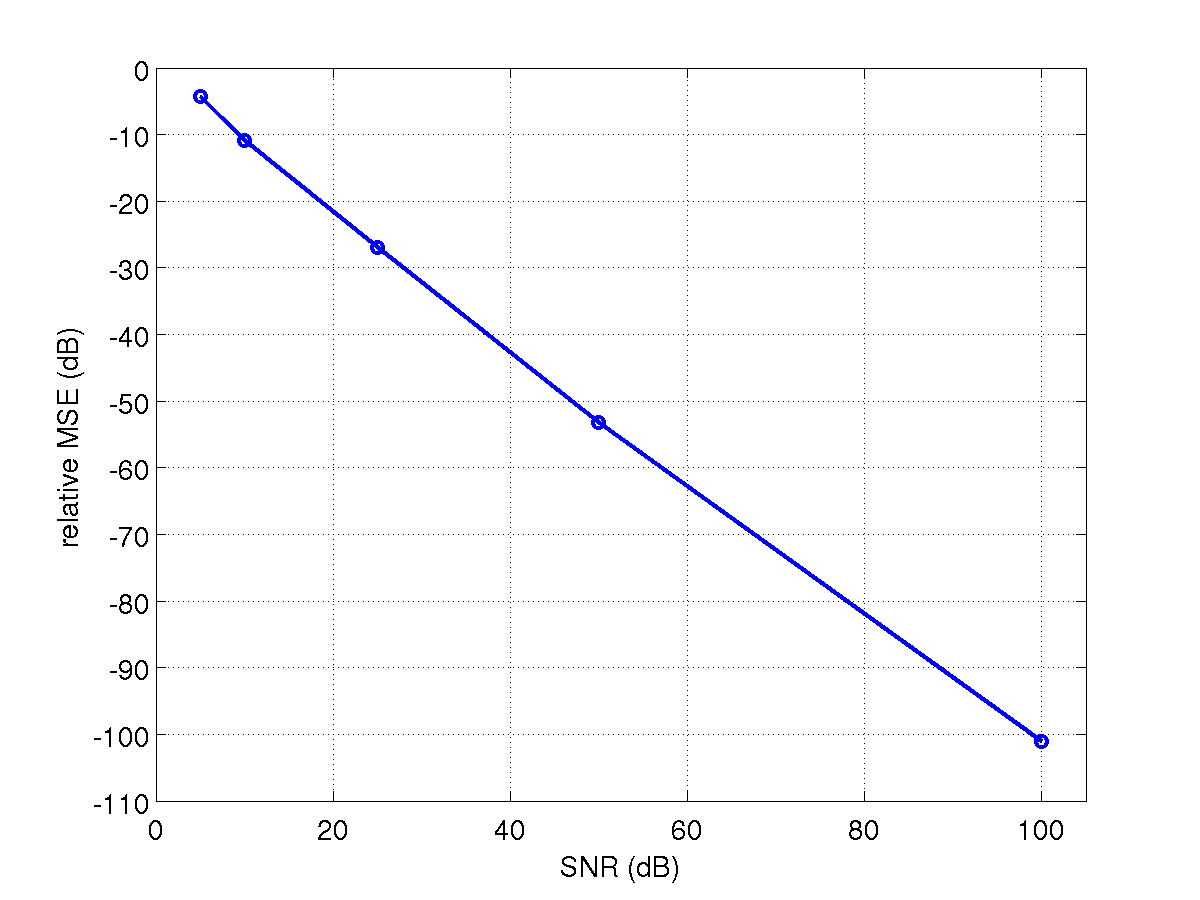}}
\quad
\subfigure[]{\includegraphics[width=75mm]{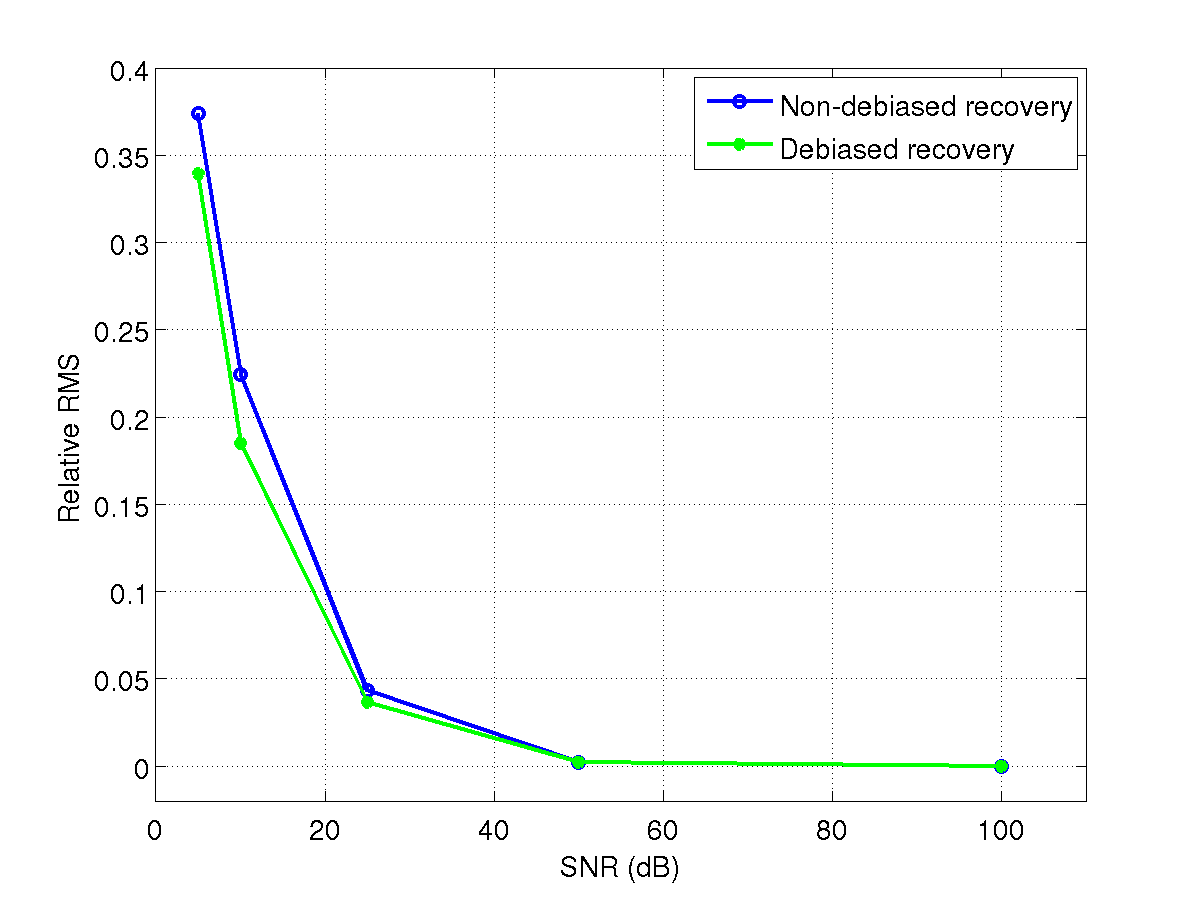}}
\caption{Performance of PhaseLift for Poisson noise.  The stability of
  the algorithm is apparent as its performance degrades gracefully
  with decreasing SNR. (a) Relative MSE on a log-scale for the
  non-debiased recovery. (b) Relative RMS for the original and
  debiased recovery.}
\label{fig:num1}
\end{center}
\end{figure}

\begin{figure}
\begin{center}
\subfigure[]{\includegraphics[width=75mm]{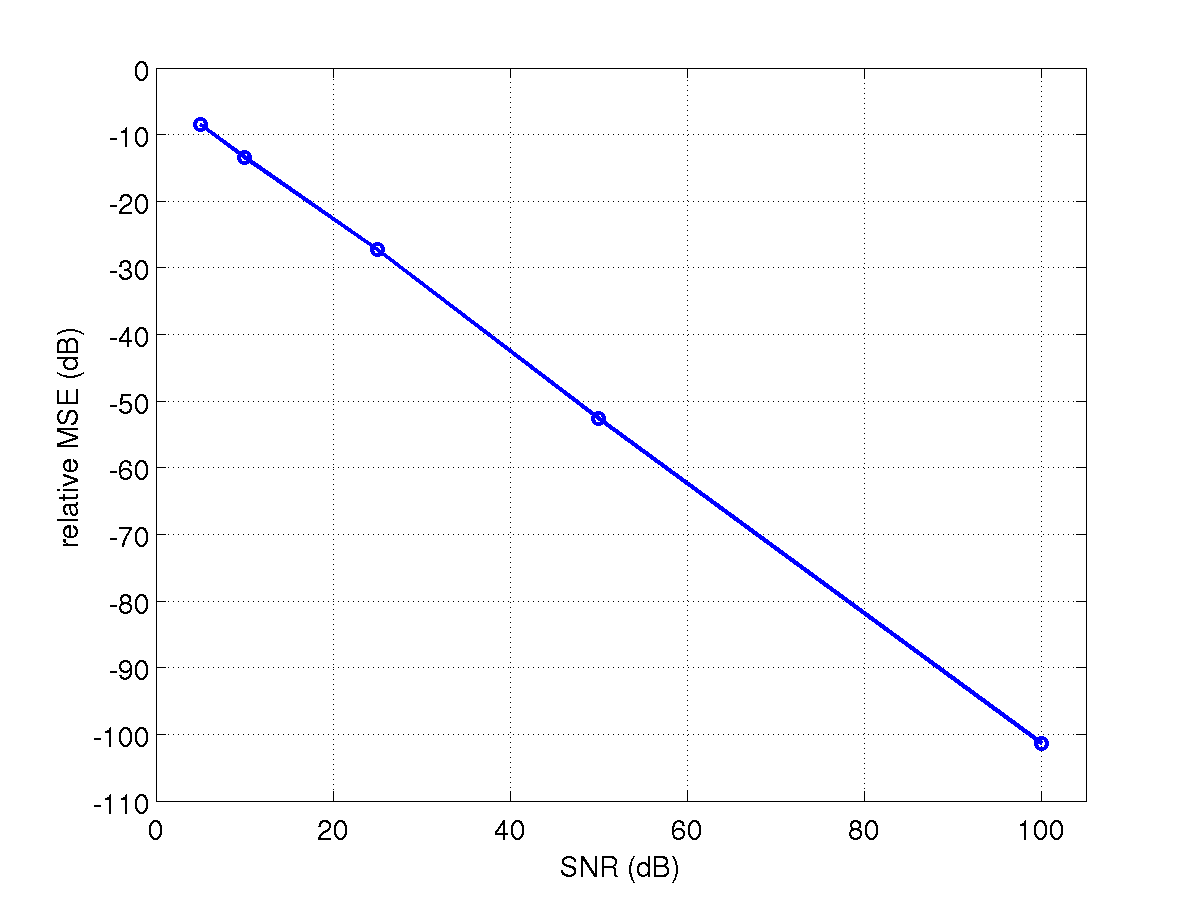}}
\quad
\subfigure[]{\includegraphics[width=75mm]{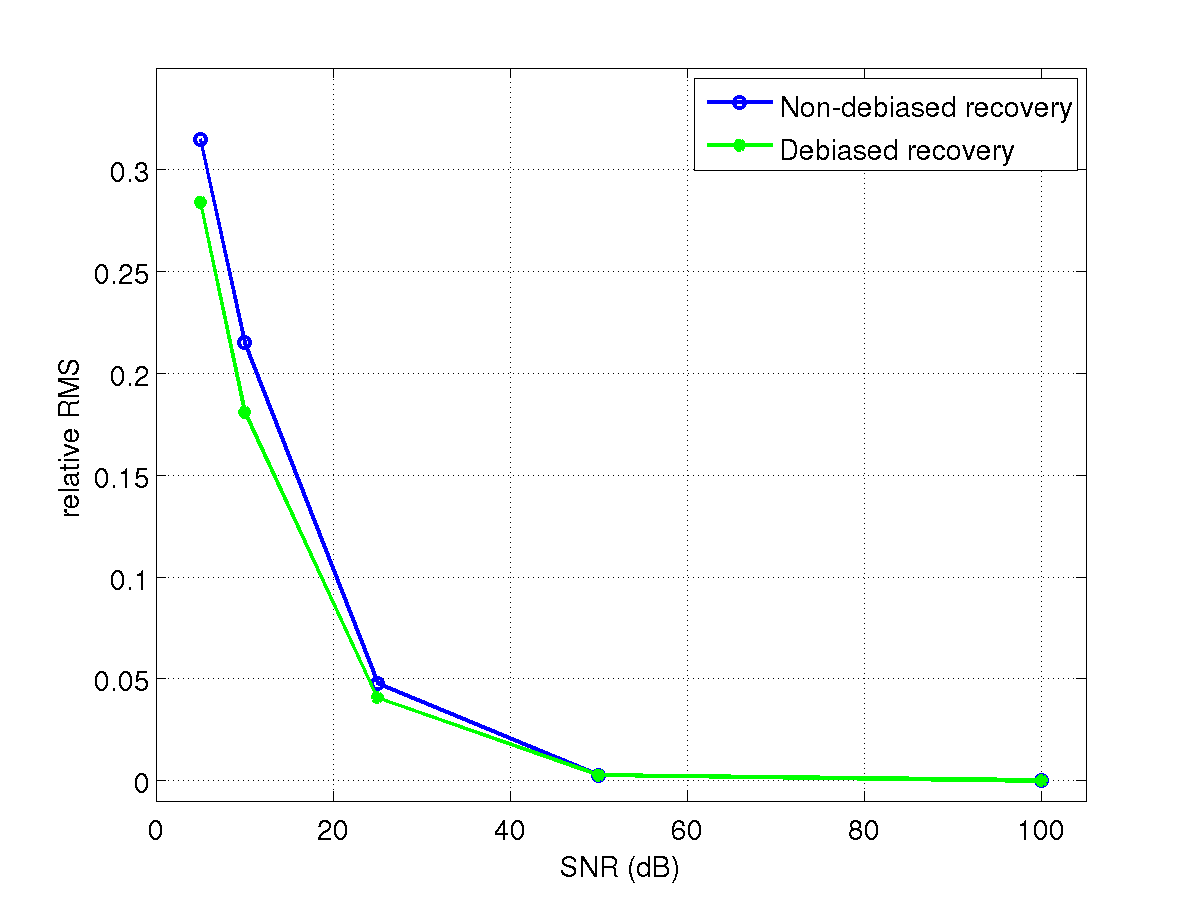}}
\caption{Performance of PhaseLift for Gaussian noise.  (a) Relative
  MSE on a log-scale for the non-debiased recovery. (b) Relative RMS
  for the original and the debiased recovery.}
\label{fig:num1a}
\end{center}
\end{figure}

In the next experiment, we collect Poisson data about a complex-valued
random signal just as above, and work with a fixed SNR set to
15dB. The number of measurements varies so that the oversampling rate
$m/n$ is between 5 and 22 ($m$ is thus between $n \log n$ and $4.5
n\log n$).  We repeat the experiment ten times with different noise
terms and different random measurement vectors for each oversampling
rate; we then record the average relative RMS. Figure~\ref{fig:num2}
shows the average relative RMS of the solution to \eqref{eq:tracemin}
versus the oversampling rate.  We observe that the decrease in the RMS
is inversely proportional to the number of measurements. For instance,
the error reduces by a factor of two when we double the number of
measurements. If instead we hold the standard deviation of the errors
at a constant level, the mean squared error (MSE) reduces by a factor
of about two when we double the number of measurements.
\begin{figure}
\begin{center}
\includegraphics[width=100mm]{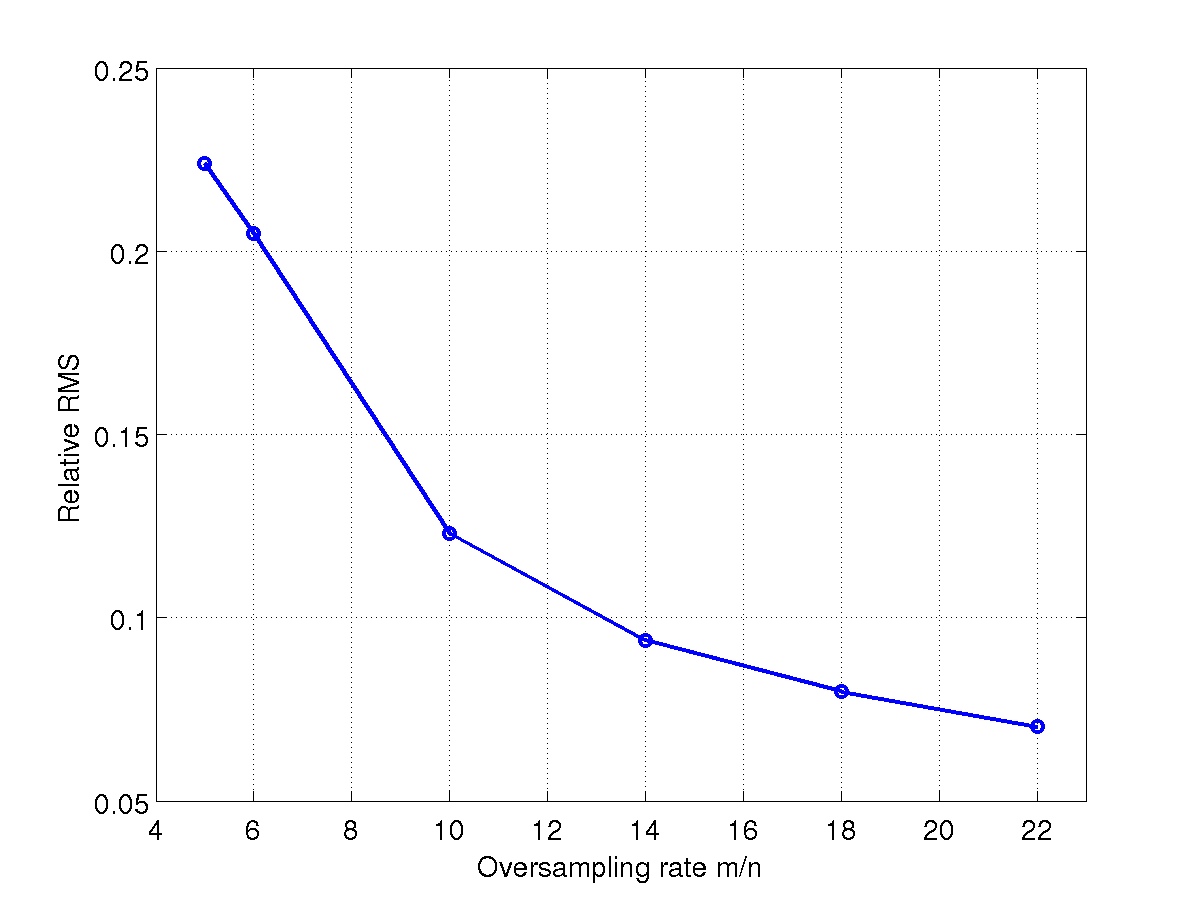}
\caption{Oversampling rate versus relative RMS.}
\label{fig:num2}
\end{center}
\end{figure}

\section{Discussion}
\label{sec:discussion}

In this paper, we have shown that it is possible to recover a signal
exactly (up to a global phase factor) from the knowledge of the
magnitude of its inner products with a family of sensing vectors
$\{\bz_i\}$. The fact that on the order of $n \log n$ magnitude
measurements $|\<\bx, \bz_i\>|^2$ uniquely determine $\bx$ is not
surprising. The part we find unexpected, however, is that what appears
to be a combinatorial problem is solved exactly by a convex
program. Further, we have established the existence of a noise-aware
recovery procedure---also based on a tractable convex program---which
is robust vis a vis additive noise.  To the best of our knowledge,
there are no other results---about the recovery of an arbitrary signal
from noisy quadratic data---of this kind.

An appealing research direction is to study the recovery of a signal
from other types of intensity measurements, and consider other
families of sensing vectors. In particular, {\em structured} random
families would be of great interest.  It also seems plausible that
assuming stochastic errors in Theorem \ref{teo:stability} would allow
to derive sharper error bounds; it would be of interest to know if
this is indeed the case. We leave this to future work.

\begin{small}
\subsection*{Acknowledgements}

E.~C.~is partially supported by NSF via grant CCF-0963835 and the 2006
Waterman Award, and by AFOSR under grant FA9550-09-1-0643.  T.~S.\
acknowledges partial support from the NSF via grants DMS-0811169 and
DTRA-DMS 1042939, and from DARPA via grant N66001-11-1-4090. 
V.~V.~is supported by the
Department of Defense (DoD) through the National Defense Science and
Engineering Graduate Fellowship (NDSEG) Program. E.~C.~thanks Mahdi
Soltanolkotabi for help with Figure \ref{fig:psd}.  V.~V.~acknowledges
fruitful conversations with Bernd Sturmfels.

\end{small}

\begin{small}
\bibliographystyle{abbrv}
\bibliography{phaserefs}
\end{small}

\end{document}